\newtheorem{theorem}{Theorem}[section]
\newtheorem{lemma}[theorem]{Lemma}
\theoremstyle{definition}
\newtheorem{definition}[theorem]{Definition}
\newtheorem{proposition}[theorem]{Proposition}
\theoremstyle{remark}
\newtheorem{remark}[theorem]{Remark}
\numberwithin{equation}{section}
\newcommand{\quat}{\mathbb H}
\newcommand{\N}{\mathfrak{N}}
\newcommand{\an}{\mathfrak{a}}
\newcommand{\q}{\mathbf{q}}
\newcommand{\HI}{\mathfrak{H}}
\newcommand{\D}{\mathfrak{D}}
\newcommand{\oz}{\overline{z}}
\newcommand{\qu}{\mathfrak{q}}
\newcommand{\pu}{\mathfrak{p}}
\newcommand{\as}{\mathsf{a}}
\newcommand{\asd}{\mathsf{a}^{\dagger}}
\newcommand{\mx}{\mathfrak{q}}
\newcommand{\oqu}{\overline{\mathfrak{q}}}
\begin{document}
\title[Weyl-Heisenberg Lie Algebra ]{
	A Representation of Weyl-Heisenberg Lie Algebra in the Quaternionic Setting}
\author{B. Muraleetharan$^{\dagger}$, K. Thirulogasanthar$^{\ddagger}$, I. Sabadini$^\star$}
\address{$^{\dagger}$ Department of mathematics and Statistics, University of Jaffna, Thirunelveli, Sri Lanka.}
\address{$^{\ddagger}$ Department of Computer Science and Software Engineering, Concordia University, 1455 De Maisonneuve Blvd. West, Montreal, Quebec, H3G 1M8, Canada.}
\address{$^*$ Dipartimento di Matematica, Politecnico di Milano, Via E. Bonardi, 9, 20133, Milano, Italy.}
\email{santhar@gmail.com, bbmuraleetharan@jfn.ac.lk and irene.sabadini@polimi.it}
\subjclass{Primary 81R30, 46E22}
\date{\today}
\thanks{K. Thirulogasanthar would like to thank the FQRNT, Fonds de la Recherche  Nature et  Technologies (Quebec, Canada) for partial financial support under the grant number 2017-CO-201915. Part of this work was done while he was visiting the Politecnico di Milano to which he expresses his thanks for the hospitality. He also thanks the program Professori Visitatori GNSAGA, INDAM for the support during the period in which this paper was partially written.}
\keywords{Quaternion, Quantization, Coherent states, Lie algebra, Displacement operator}
\pagestyle{myheadings}
\begin{abstract}
 Using a left multiplication defined on a right quaternionic Hilbert space, linear self-adjoint momentum operators on a right quaternionic Hilbert space are defined in complete analogy with their complex counterpart. With the aid of the so-obtained position and momentum operators, we study the Heisenberg uncertainty principle on the whole set of quaternions and on a quaternionic slice, namely on a copy of the complex plane inside the quaternions. For the quaternionic harmonic oscillator, the uncertainty relation is shown to saturate on a neighborhood of the origin in the case we consider the whole set of quaternions, while it is saturated on the whole slice in the case we take the slice-wise approach. In analogy with  the complex Weyl-Heisenberg Lie algebra, Lie algebraic structures are developed for the quaternionic case. Finally, we introduce a quaternionic displacement operator which is square integrable, irreducible and unitary,  and we study its properties.
\end{abstract}
\maketitle
\section{Introduction}\label{sec_intro}
In the past few years there has been a resurgence of interest for the quaternionic quantum mechanics. This topic was extensively studied starting from the celebrated paper of Birkhoff and von Neumann who asserted that quantum mechanics can be studied only over the complex and the quaternionic numbers, see \cite{bvn}. It culminated with the book of Adler in 1995 and, after that, the interest on this topic started fading, indeed some crucial ingredients of the theory were missing and prevented further developments of the subject.

As it is well known, quaternions can always be represented as a pair of complex numbers (the so-called symplectic components) and hence quaternions possess a symplectic structure. As in the complex quantum mechanics, states of quaternionic quantum mechanics are represented by vectors of a separable quaternionic Hilbert space and observables are represented by quaternionic linear and self-adjoint operators. However, quaternionic quantum mechanics is different from the complex quantum mechanics \cite{Ad}.\\

There are three main issues which prevented physicists and mathematicians to construct a well-formed quaternionic quantum mechanics. In this introduction we will discuss them and we will show that, while the answer to one of them was found in the past few years, the other two issues are addressed in this manuscript.\\

A first missing ingredient was an appropriate spectral theory for quaternionic linear operators. Several papers treated the spectral theorem, see e.g. \cite{fjss}, but, surprisingly, the notion of spectrum in use was not specified. Physicists (see \cite{Ad} and references therein) were considering the right eigenvalue problem $A\phi=\phi\qu$ which, however, is not associated with a quaternionic linear operator when $\qu$ is a quaternion, and thus does not allow an appropriate definition of spectrum.\\
The appropriate notion of spectrum, the so-called $S$-spectrum, was introduced more than twenty years after the book of Adler, see e.g. \cite{NFC} for the basic results on the $S$-spectrum and the functional calculus associated with it. Given an operator $A$, its $S$-spectrum is defined as the set of quaternions $q$ for which the second order operator $A^2 -2 {\rm Re}(\qu)A +|\qu|^2 I$ is not invertible.
This second order operator is the linear operator which, when one restricts to the point spectrum, gives the set of right eigenvalues and thus generalizes the notion of eigenvalue used by physicists.
Using this spectrum one can prove a spectral theorem for quaternionic normal operators, bounded or unbounded, see \cite{ack}.\\

The second important issue is related to the notion of a universal momentum operator. There is no quaternionic linear self-adjoint momentum operator analogous to that one of the complex case. In fact, if one tries to mimic the complex momentum coordinate, using one of the quaternionic imaginary units $i, j$ or $k$, then the corresponding momentum operator becomes non self-adjoint. This is due to the fact that, in general, for any quaternionic linear operator $A$ and any quaternion $\qu\in\quat$, $(\qu A)^{\dagger}\not=\overline{\qu}~A^{\dagger}$. For a detailed explanation we refer the reader to \cite{MuTh}. Further, according to \cite{Ad}, there is no quaternionic self-adjoint momentum operator having all the properties fulfilled by the momentum operator of complex quantum mechanics. For various attempts to solve this problem and their drawbacks see \cite{Ad} (pages 52-64).\\
In this paper we propose a solution of this issue.
To explain the strategy, let us recall that since quaternions do not commute, there are three types of quaternionic Hilbert spaces: left, right and two-sided, according to how vectors are multiplied by scalars. Most of the linear spaces are one-sided, but in order to have suitable properties on the linear operators acting on them, it is necessary to have a notion of multiplication on both sides. As it is well known, given a right quaternionic Hilbert space it is possible to equip it with a left multiplication. This is the main tool we use to solve the second issue. In fact, in such a space, we shall define linear self-adjoint momentum operator in complete analogy with the complex momentum operator. \\

It ought to be pointed out that the left multiplication on a right quaternionic Hilbert space is basis dependent, that is, the multiplication is defined in terms of a basis of the Hilbert space. However, the basis dependence is not a problem when we study a particular physical system because in this case we always work in the state Hilbert space, which is usually taken as the Hilbert space spanned by the wavefunctions of the Hamiltonian (the Fock space).\\

Using the momentum operator obtained with our strategy, we shall study the Heisenberg uncertainty principle. We will show that, if we work on the whole set of quaternions $\quat$, the uncertainty relation gets saturated in a limit sense in a neighborhood of the origin. This fact happens due to the non-commutativity of quaternions  which prevented us in getting a closed form for a series. However, if we consider a quaternionic slice-wise approach, namely if we work on copies of the complex plane contained in $\mathbb H$, the uncertainty relation holds exactly as for the complex harmonic oscillator.\\

We shall also consider another thorny issue of quaternionic quantum mechanics, which is the harmonic oscillator displacement operator. In the recent years, this operator was discussed twice in the literature on quaternionic quantum mechanics: first in \cite{Ad2} and then in \cite{Thi2}. It has been proved that there is an operator analogous to the complex displacement operator that can generate quaternionic canonical coherent states by the action on the ground state in a quaternionic Hilbert space. However, it fails to be a group representation of Weyl-Heisenberg algebra (or any other algebra) \cite{Thi2}. In this paper, we shall show that with a left multiplication on a right quaternionic Hilbert space such a displacement operator leads to a square integrable, irreducible and unitary representation. Further, it satisfies almost all the properties of its complex counterpart.\\

Finally, we mention another fact which is not considered in this paper but belongs to the same mathematical framework. In complex quantum mechanics, the tensor product composition is well-defined, but there has been historical difficulty in sensibly defining system composition using quaternions, see e.g. \cite{NJ}. In \cite{acs}, section 4 and \cite{Al}, it is shown how tensor products can be constructed over the quaternions. This is, in our opinion, another point in favor of the proposed mathematical realm where to consider quaternionic quantum mechanics.\\

The article is organized as follows. In section 2 we revisit the quaternions and gather some material pertinent to the development of the manuscript. In particular, we review the left multiplication on a right quaternionic Hilbert space. In section 3 we recall the quantization of quaternions based on \cite{MuTh} but with the left multiplication defined in section 2. In section 3.1 we discuss quaternionic coherent states on a right quaternionic Hilbert space. In section 3.2 the coherent state quantization of quaternions is presented. Section 3.3 deals with number, position, momentum operator and the quaternionic harmonic oscillator. In particular, we define a class of quaternionic linear self-adjoint momentum operators. In section 3.4 we study the quaternionic Heisenberg uncertainty on the whole set of quaternions and on a quaternionic slice. Section 4 is devoted to the Lie algebraic structures of the operators defined in section 3, in particular, section 4.1 defines some Lie algebraic structures. In section 4.2 we obtain the Weyl-Heisenberg Lie algebra in a quaternionic setting. In section 4.3 we demonstrate the existence of a square integrable, irreducible and unitary harmonic oscillator displacement operator. In section 5.1 we discuss some symmetry properties of the operators and 5.2 deals with some properties of the displacement operator defined in section 4.3. Section 6 ends the manuscript with a conclusion.\\
\section{Mathematical preliminaries}
In order to make the paper self-contained, we recall a few facts about quaternions which may not be well-known. In particular, we revisit the $2\times 2$ complex matrix representations of quaternions, quaternionic Hilbert spaces as needed here. For details we refer the reader to \cite{Ad, Albook, Gen1, Vis, Za}.
\subsection{Quaternions}
Let $\quat$ denote the field of quaternions. Its elements are of the form $\qu=q_0+q_1i+q_2j+q_3k$ where $q_0,q_1,q_2$ and $q_3$ are real numbers, and $i,j,k$ are imaginary units such that $i^2=j^2=k^2=-1$, $ij=-ji=k$, $jk=-kj=i$ and $ki=-ik=j$. The quaternionic conjugate of $\qu$ is defined to be $\overline{\qu} = q_0 - q_1i - q_2j - q_3k$. We shall find it convenient to use the representation of quaternions by $2\times 2$ complex matrices:
\begin{equation}
\qu = q_0 \sigma_{0} + i \q \cdot \underline{\sigma},
\end{equation}
with $q_0 \in \mathbb R , \quad \q = (q_1, q_2, q_3)
\in \mathbb R^3$, $\sigma_0 = \mathbb{I}_2$, the $2\times 2$ identity matrix, and
$\underline{\sigma} = (\sigma_1, -\sigma_2, \sigma_3)$, where the
$\sigma_\ell, \; \ell =1,2,3$ are the usual Pauli matrices. The quaternionic imaginary units are identified as, $i = \sqrt{-1}\sigma_1, \;\; j = -\sqrt{-1}\sigma_2, \;\; k = \sqrt{-1}\sigma_3$. Thus,
\begin{equation}
\qu = \left(\begin{array}{cc}
q_0 + i q_3 & -q_2 + i q_1 \\
q_2 + i q_1 & q_0 - i q_3
\end{array}\right) \qquad 
\label{q3}
\end{equation}
and $\overline{\qu} = \qu^\dag\quad \text{(matrix adjoint)}\; .$
Introducing the polar coordinates:
\begin{eqnarray*}
	q_0 &=& r \cos{\theta}, \\
	q_1 &=& r \sin{\theta} \sin{\phi} \cos{\psi}, \\
	q_2 &=& r \sin{\theta} \sin{\phi} \sin{\psi}, \\
	q_3 &=& r \sin{\theta} \cos{\phi},
\end{eqnarray*}
where $(r,\phi,\theta,\psi)  \in [0,\infty)\times[0,\pi]\times[0,2\pi)^{2}$, we may write
\begin{equation}
\qu = A(r) e^{i \theta \sigma(\widehat{n})}
\label{q4},
\end{equation}
where
\begin{equation}
A(r) = r\mathbb \sigma_0
\end{equation}
and\begin{equation}
\sigma(\widehat{n}) = \left(\begin{array}{cc}
\cos{\phi} & \sin{\phi} e^{i\psi} \\
\sin{\phi} e^{-i\psi} & -\cos{\phi}
\end{array}\right).
\label{q5}\end{equation}
The matrices
$A(r)$ and $\sigma(\widehat{n})$ satisfy the conditions,
\begin{equation}
A(r) = A(r)^\dagger,~\sigma(\widehat{n})^2 = \sigma_0
,~\sigma(\widehat{n})^\dagger = \sigma(\widehat{n})
\label{san1}
\end{equation}
and
$\lbrack A(r), \sigma(\widehat{n}) \rbrack = 0.$
Note that a real norm on $\quat$ is defined by  $$\vert\qu\vert^2  := \overline{\qu} \qu = r^2 \sigma_0 = (q_0^2 +  q_1^2 +  q_2^2 +  q_3^2)\mathbb{I}_2.$$ A typical measure on $\quat$ may take the form
\begin{equation}\label{measure}
d\varsigma(r, \theta,\phi, \psi)= d\tau(r)\, d\theta\, d\Omega(\phi ,\psi )
\end{equation}
with $d\Omega(\phi ,\psi) = \displaystyle{\frac{1}{4\pi}} \,\sin{\phi}\, d\phi \,d\psi .$
Note also that for ${\pu},\qu\in \quat$, we have $\overline{{\pu}\qu}=\overline{\qu}~\overline{{\pu}}$, $\pu\qu\not=\qu\pu$, $\qu\overline{{ \qu}}=\overline{{\qu}}\qu$, and real numbers commute with quaternions.
Quaternions can also be interpreted as a sum of scalar and a vector by writing $$\qu=q_0+q_1i+q_2j+q_3k=(q_0,\q);$$ where $\q=q_1i+q_2j+q_3k$. This particular method of writing quaternions will help us for forming a representation of Weyl-Heisenberg Lie algebra under the quaternionic settings. We borrow the materials as needed here from \cite{Gen1}.  Let
\begin{eqnarray*}
	\mathbb{S}&=&\{I=x_1i+x_2j+x_3k~\vert
	~x_1,x_2,x_3\in\mathbb{R},~x_1^2+x_2^2+x_3^2=1\},
\end{eqnarray*}
we call it a quaternionic sphere.
 \begin{proposition}\cite{Gen1}\label{Pr1}
	For any non-real quaternion $\qu\in \quat\smallsetminus\mathbb{R}$, there exist, and are unique, $x,y\in\mathbb{R}$ with $y>0$, and $I_\qu\in\mathbb{S}$ such that $\qu=x+I_\qu y$.
\end{proposition}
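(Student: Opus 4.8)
The plan is to read off the pair $(x,y)$ and the imaginary unit $I_\qu$ directly from the scalar–vector decomposition $\qu = q_0 + \q$ recalled above, and then to establish uniqueness by comparing scalar parts and quaternionic norms.

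For existence, I would write $\qu = q_0 + \q$ with $\q = q_1 i + q_2 j + q_3 k$ its vector part. Since $\qu \notin \mathbb{R}$, the coefficients $q_1,q_2,q_3$ do not all vanish, so $|\q| := \sqrt{q_1^2+q_2^2+q_3^2} > 0$. I would then set $x = q_0$, $y = |\q| > 0$ and $I_\qu = y^{-1}\q$; a one-line computation shows $(q_1/y)^2+(q_2/y)^2+(q_3/y)^2 = 1$, hence $I_\qu \in \mathbb{S}$, and $x + I_\qu y = q_0 + \q = \qu$ by construction. If desired one can also note, via the multiplication rules $i^2=j^2=k^2=-1$, $ij=-ji=k$, etc., that every $I \in \mathbb{S}$ satisfies $I^2 = -1$, so that $x + I_\qu y$ genuinely behaves like a ``complex number'' with imaginary unit $I_\qu$.

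For uniqueness, suppose $\qu = x + I_\qu y = x' + I' y'$ with $x,x'\in\mathbb{R}$, $y,y' > 0$ and $I_\qu, I' \in \mathbb{S}$. Since elements of $\mathbb{S}$ are purely imaginary quaternions, both $I_\qu y$ and $I' y'$ have zero scalar part; comparing the scalar part of $\qu$ forces $x = q_0 = x'$. Subtracting, $I_\qu y = I' y' = \q$; applying the norm and using $|I_\qu| = |I'| = 1$ together with $y,y' > 0$ gives $y = |\q| = y'$; and then dividing the common vector $\q$ by $y = y' > 0$ gives $I_\qu = I'$.

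I do not anticipate a genuine obstacle here: the statement is essentially a restatement of the scalar–vector splitting of a quaternion in a ``polar-like'' form. The only subtlety worth flagging is the role of the hypothesis $\qu \notin \mathbb{R}$: it is exactly what guarantees $|\q| > 0$, so that $y$ can be taken strictly positive and the normalization $I_\qu = \q/|\q|$ is well defined; for a real quaternion the vector part is zero and $I_\qu$ cannot be pinned down, which is why such quaternions are excluded.
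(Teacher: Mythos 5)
Your proof is correct: the paper itself does not prove this proposition but quotes it from \cite{Gen1}, and your argument --- taking $x=q_0$, $y=|\q|>0$, $I_\qu=\q/|\q|$ from the scalar--vector splitting, then getting uniqueness by comparing real parts and norms --- is exactly the standard argument behind the cited result. The hypothesis $\qu\in\quat\smallsetminus\mathbb{R}$ is used precisely where you flag it, to ensure $|\q|>0$ so the normalization is well defined.
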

For every quaternion $I\in\mathbb{S}$, the complex plane $\mathbb{C}_I=\mathbb{R}+I\mathbb{R}$ passing through the origin, and containing $1$ and $I$, is called a quaternionic slice. Thereby, we can see that
\begin{equation}\label{Eq1}
\quat=\bigcup_{I\in\mathbb{S}}\mathbb{C}_I\quad\text{and}\quad\bigcap_{I\in\mathbb{S}} \mathbb{C}_I=\mathbb{R}
\end{equation}
One can also easily see that $\mathbb{C}_I\subset \quat$ is commutative, while, elements from two different quaternion slices, $\mathbb{C}_I$ and $\mathbb{C}_J$ (for $I, J\in\mathbb{S}$ with $I\not=J$), do not commute unless they are real.
\subsection{Quaternionic Hilbert spaces}
In this subsection we  define left and right quaternionic Hilbert spaces. For details we refer the reader to \cite{Ad, Albook}. We also define the Hilbert space of square integrable functions on quaternions based on \cite{Gu, Vis}.
\subsubsection{Right Quaternionic Hilbert Space}
Let $V_{\quat}^{R}$ be a linear vector space under right multiplication by quaternionic scalars (again $\quat$ standing for the field of quaternions).  For $f,g,h\in V_{\quat}^{R}$ and $\qu\in \quat$, the inner product
$$\langle\cdot\mid\cdot\rangle:V_{\quat}^{R}\times V_{\quat}^{R}\longrightarrow \quat$$
satisfies the following properties
\begin{enumerate}
	\item[(i)]
	$\overline{\langle f\mid g\rangle}=\langle g\mid f\rangle$
	\item[(ii)]
	$\|f\|^{2}=\langle f\mid f\rangle>0$ unless $f=0$, a real norm
	\item[(iii)]
	$\langle f\mid g+h\rangle=\langle f\mid g\rangle+\langle f\mid h\rangle$
	\item[(iv)]
	$\langle f\mid g\qu\rangle=\langle f\mid g\rangle\qu$
	\item[(v)]
	$\langle f\qu\mid g\rangle=\overline{\qu}\langle f\mid g\rangle$
\end{enumerate}
where $\overline{\qu}$ stands for the quaternionic conjugate. We assume that the
space $V_{\quat}^{R}$ is complete under the norm given above. Then,  together with $\langle\cdot\mid\cdot\rangle$, this defines a right quaternionic Hilbert space, which we shall assume to be separable. Quaternionic Hilbert spaces share most of the standard properties of complex Hilbert spaces. In particular, the Cauchy-Schwartz inequality holds on quaternionic Hilbert spaces as well as the Riesz representation theorem for their duals.  Thus, the Dirac bra-ket notation
can be adapted to quaternionic Hilbert spaces:
$$\mid f\qu\rangle=\mid f\rangle\qu,\hspace{1cm}\langle f\qu\mid=\overline{\qu}\langle f\mid\;, $$
for a right quaternionic Hilbert space, with $\vert f\rangle$ denoting the vector $f$ and $\langle f\vert$ its dual vector. Similarly the left quaternionic Hilbert space $V_{\quat}^{L}$ can also be described, see for more detail \cite{Ad,MuTh,Thi1}.
The field of quaternions $\quat$ itself can be turned into a left quaternionic Hilbert space by defining the inner product $\langle \qu \mid \qu^\prime \rangle = \qu \qu^{\prime\dag} = \qu\overline{\qu^\prime}$ or into a right quaternionic Hilbert space with  $\langle \qu \mid \qu^\prime \rangle = \qu^\dag \qu^\prime = \overline{\qu}\qu^\prime$. Further note that, due to the non-commutativity of quaternions the sum
$\sum_{m=0}^{\infty}\pu^m\qu^m/m!$
cannot be written as $\text{exp}(\pu\qu).$ However, in any Hilbert space the norm convergence implies the convergence of the series and
$\sum_{m=0}^{\infty}\left\vert \pu^m\qu^m/m!\right\vert
\leq e^{|\pu||\qu|},$ therefore $\sum_{m=0}^{\infty}\pu^m\qu^m/m!$ converges.
\subsubsection{Quaternionic Hilbert Spaces of Square Integrable Functions}
Let $(X, \mu)$ be a measure space and $\quat$  the field of quaternions, then
$$\left\{f:X\rightarrow \quat \left|  \int_X|f(x)|^2d\mu(x)<\infty \right.\right\}$$
\noindent
is a right quaternionic Hilbert space which is denoted by $L^2_\quat(X,\mu)$, with the (right) scalar product
\begin{equation}
\langle f \mid g\rangle =\int_X\overline{ f(x)}{g(x)} d\mu(x),
\label{left-sc-prod}
\end{equation}
where $\overline{f(x)}$ is the quaternionic conjugate of $f(x)$, and (right)  scalar multiplication $f\an, \; \an\in \quat,$ with $(f\an)(\qu) = f(\qu)\an$ (see \cite{Gu,Vis} for details). Similarly, one could define a left quaternionic Hilbert space of square integrable functions.
\subsection{Left Scalar Multiplication on $V_{\quat}^{R}$.}
The following idea of introducing a left multiplication in a right quaternionic Hilbert space goes back to Teichm\"uller, see \cite{Teich} p. 96, and it is also used in \cite{Vis}. In this section we shall extract from \cite{ghimorper} the definition and some properties of left scalar multiples of vectors on $V_{\quat}^R$  as needed for the development of the manuscript. The left scalar multiple of vectors on a right quaternionic Hilbert space is an extremely non-canonical operation associated with a choice of preferred Hilbert basis. Now the Hilbert space $V_{\quat}^{R}$ has a Hilbert basis
\begin{equation}\label{b1}
\mathcal{O}=\{\varphi_{k}\,\mid\,k\in \N\},
\end{equation}
where $\N$ is a countable index set.
The left scalar multiplication `$\cdot$' on $V_{\quat}^{R}$ induced by $\mathcal{O}$ is defined as the map $\quat\times V_{\quat}^{R}\ni(\qu,\phi)\longmapsto \qu\cdot\phi\in V_{\quat}^{R}$ given by
\begin{equation}\label{LPro}
\qu\cdot\phi:=\sum_{k\in \N}\varphi_{k}\qu\langle \varphi_{k}\mid \phi\rangle,
\end{equation}
for all $(\qu,\phi)\in\quat\times V_{\quat}^{R}$. Since all left multiplications are made with respect to some basis, assume that the basis $\mathcal{O}$ given by (\ref{b1}) is fixed all over the paper. However we shall make choices wherever needed.
\begin{proposition}\cite{ghimorper}\label{lft_mul}
	The left product defined in (\ref{LPro}) satisfies the following properties. For every $\phi,\psi\in V_{\quat}^{R}$ and $\pu,\qu\in\quat$,
	\begin{itemize}
		\item[(a)] $\qu\cdot(\phi+\psi)=\qu\cdot\phi+\qu\cdot\psi$ and $\qu\cdot(\phi\pu)=(\qu\cdot\phi)\pu$.
		\item[(b)] $\|\qu\cdot\phi\|=|\qu|\|\phi\|$.
		\item[(c)] $\qu\cdot(\pu\cdot\phi)=(\qu\pu\cdot\phi)$.
		\item[(d)] $\langle\overline{\qu}\cdot\phi\mid\psi\rangle
		=\langle\phi\mid\qu\cdot\psi\rangle$.
		\item[(e)] $r\cdot\phi=\phi r$, for all $r\in \mathbb{R}$.
		\item[(f)] $\qu\cdot\varphi_{k}=\varphi_{k}\qu$, for all $\varphi_{k}\in\mathcal O$, $k\in \N$.
	\end{itemize}
\end{proposition}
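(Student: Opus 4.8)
The plan is to verify the six identities (a)--(f) directly from the definition (\ref{LPro}), using as the foundation the convergence of the defining series together with the isometry property (b); everything else then follows by term-by-term manipulation, the only care needed being to respect the left/right placement of the (non-commuting) quaternionic scalars.

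First I would establish convergence and (b) simultaneously. Since $\mathcal{O}$ is orthonormal, for any finite $F\subset\N$ the vectors $\{\varphi_k\qu\langle\varphi_k\mid\phi\rangle\}_{k\in F}$ are pairwise orthogonal, so the Pythagorean identity gives $\bigl\|\sum_{k\in F}\varphi_k\qu\langle\varphi_k\mid\phi\rangle\bigr\|^2=\sum_{k\in F}|\qu|^2\,|\langle\varphi_k\mid\phi\rangle|^2$. Because $\sum_{k\in\N}|\langle\varphi_k\mid\phi\rangle|^2=\|\phi\|^2<\infty$ (Parseval in the right quaternionic Hilbert space), the net of partial sums is Cauchy, hence the series in (\ref{LPro}) converges in $V_{\quat}^R$; letting $F\uparrow\N$ in the same identity yields $\|\qu\cdot\phi\|=|\qu|\,\|\phi\|$, which is (b). I would also record here that $\langle\varphi_k\mid\cdot\rangle$ and right multiplication by a fixed quaternion are bounded, hence continuous, maps on $V_{\quat}^R$ (by Cauchy--Schwarz and by (b)-type estimates respectively), so each may be interchanged with the limit of the partial sums.

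The identities (a), (e), (f) are then immediate: additivity in $\phi$ comes from linearity of $\langle\varphi_k\mid\cdot\rangle$ and rearrangement of convergent series; $\qu\cdot(\phi\pu)=(\qu\cdot\phi)\pu$ from axiom (iv), $\langle\varphi_k\mid\phi\pu\rangle=\langle\varphi_k\mid\phi\rangle\pu$, and continuity of right multiplication by $\pu$; (e) from the centrality of real scalars, $\varphi_k r\langle\varphi_k\mid\phi\rangle=\varphi_k\langle\varphi_k\mid\phi\rangle r$, summed against the basis expansion $\phi=\sum_k\varphi_k\langle\varphi_k\mid\phi\rangle$; and (f) by collapsing the sum with $\langle\varphi_m\mid\varphi_k\rangle=\delta_{mk}$. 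For (c) the key intermediate step is $\langle\varphi_k\mid\pu\cdot\phi\rangle=\pu\langle\varphi_k\mid\phi\rangle$, obtained by pulling the continuous functional $\langle\varphi_k\mid\cdot\rangle$ through the series defining $\pu\cdot\phi$ and using (iv) with $\langle\varphi_k\mid\varphi_m\rangle=\delta_{km}$; substituting back into (\ref{LPro}) gives $\qu\cdot(\pu\cdot\phi)=\sum_k\varphi_k(\qu\pu)\langle\varphi_k\mid\phi\rangle=(\qu\pu)\cdot\phi$. For (d) I would expand both sides over $\mathcal{O}$: using axiom (v) and the conjugate symmetry (i), $\langle\overline{\qu}\cdot\phi\mid\psi\rangle=\sum_k\overline{\,\overline{\qu}\langle\varphi_k\mid\phi\rangle\,}\,\langle\varphi_k\mid\psi\rangle=\sum_k\langle\phi\mid\varphi_k\rangle\,\qu\,\langle\varphi_k\mid\psi\rangle$, whereas axiom (iv) gives $\langle\phi\mid\qu\cdot\psi\rangle=\sum_k\langle\phi\mid\varphi_k\rangle\,\qu\,\langle\varphi_k\mid\psi\rangle$; the two sums coincide.

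I do not expect a genuine obstacle here: the statement is essentially a catalogue of compatibility checks once the convergence of (\ref{LPro}) is secured. The one point demanding attention throughout is the non-commutativity --- no rearrangement that moves a quaternion past $\langle\varphi_k\mid\phi\rangle$, $\qu$ or $\pu$ is permitted --- so the main ``work'' is simply keeping every scalar on the correct side in each manipulation.
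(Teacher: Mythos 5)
Your proof is correct: the Pythagoras/Parseval argument secures convergence of the series in (\ref{LPro}) and gives (b), and the remaining items follow by the term-by-term manipulations you describe, with the scalars kept on the correct sides throughout (the only implicit ingredient, continuity of the inner product in each slot via Cauchy--Schwarz, is exactly what is needed and is standard). Note that the paper itself offers no proof of this proposition --- it is quoted from \cite{ghimorper} --- and your direct verification is essentially the argument given in that reference, so there is no divergence of approach to report.
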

\begin{remark}\label{Rem123}
	One can trivially see that $(\pu+\qu)\cdot\phi=\pu\cdot\phi+\qu\cdot\phi$, for all $\pu,\qu\in\quat$ and $\phi\in V_{\quat}^{R}$. Moreover, with the aid of (b) in above Proposition (\ref{lft_mul}), we can have, if $\{\phi_n\}$ in $V_\quat^R$ is such that $\phi_n\longrightarrow\phi$, then $\qu\cdot\phi_n\longrightarrow\qu\cdot\phi$. Also if $\sum_{n}\phi_n$ is a convergent sequence in $V_\quat^R$, then $\qu\cdot(\sum_{n}\phi_n)=\sum_{n}\qu\cdot\phi_n$.
\end{remark}
Furthermore, the quaternionic scalar multiplication of $\quat$-linear operators is also defined in \cite{ghimorper}. For any fixed $\qu\in\quat$ and a given right $\quat$-linear operator $A:\D(A)\longrightarrow V_{\quat}^{R}$, the left scalar multiplication `$\cdot$' of $A$ is defined as a map $\qu \cdot A:\D(A)\longrightarrow V_{\quat}^{R}$ by setting
\begin{equation}\label{lft_mul-op}
(\qu\cdot A)\phi:=\qu\cdot (A\phi)=\sum_{k\in \N}\varphi_{k}\qu\langle \varphi_{k}\mid A\phi\rangle,
\end{equation}
for all $\phi\in \D(A)$. It is straightforward that $\qu A$ is a right $\quat$-linear operator. If $\qu\cdot\phi\in \D(A)$, for all $\phi\in \D(A)$, one can define right scalar multiplication `$\cdot$' of the right $\quat$-linear operator $A:\D(A)\longrightarrow V_{\quat}^{R}$ as a map $ A\cdot\qu:\D(A)\longrightarrow V_{\quat}^{R}$ by the setting
\begin{equation}\label{rgt_mul-op}
(A\cdot\qu )\phi:=A(\qu\cdot \phi),
\end{equation}
for all $\phi\in \D(A)$. The operator $A\cdot\qu$  is also a right $\quat$-linear operator. One can easily obtain that, if $\qu\cdot\phi\in \D(A)$ for all $\phi\in \D(A)$, and $\D(A)$ is dense in $V_{\quat}^{R}$, then
\begin{equation}\label{sc_mul_aj-op}
(\qu\cdot A)^{\dagger}=A^{\dagger}\cdot\overline{\qu}~\mbox{~and~}~
(A\cdot\qu)^{\dagger}=\overline{\qu}\cdot A^{\dagger}.
\end{equation}

\section{Quantization of the quaternions}
In physics, quantization is a procedure that associates with an algebra $A_{cl}$ of classical observables an algebra $A_q$ of quantum observables. The algebra $A_{cl}$ is usually realized as a commutative Poisson algebra of derivable functions on a symplectic (or phase) space $X$. The algebra $A_q$ is, however, non commutative in general and the quantization procedure must provide a correspondence $A_{cl}\mapsto A_q~:~f\mapsto A_f$. Most physical quantum theories may be obtained as the result of a canonical quantization procedure. However, among the various quantization procedures available in the literature, the coherent state quantization appear quite natural because the only structure that a space $X$ must possess is a measure. For various quantization procedures and their advantages and drawbacks we refer the reader to \cite{Ali, AE,Gaz}.
\subsection{Coherent states on right quaternionic Hilbert spaces}\label{CSLQH}
The main content of this section is extracted from \cite{Thi2} as needed here. For an enhanced explanation we refer the reader to \cite{Thi2}. In \cite{Thi2} the authors have defined coherent states on $V_{\quat}^{R}$ and $V_{\quat}^{L}$, and also established the normalization and resolution of the identities for each of them. We briefly revisit the coherent states of $V_{\quat}^{R}$ and the normalization and resolution of the identity.
Let $\{\mid e_{m}\rangle\}_{m=0}^{\infty}$ be an orthonormal basis of $V_{\quat}^{R}$. For $\qu\in \quat$, the coherent states (CS) are defined as vectors in $V_{\quat}^{R}$ of the form
\begin{equation}\label{CS}
\mid\qu\rangle=\mathcal N (\mid\qu\mid)^{-\frac{1}{2}}\sum_{m=0}^{\infty}\mid e_m\rangle \frac{\qu^{m}}{\sqrt{\rho(m)}},
\end{equation}
where $\mathcal N (\mid\qu\mid)$ is the normalization factor and $\{\rho(m)\}_{m=0}^{\infty}$ is a positive sequence of real numbers, see also \cite{MuTh} for more details.
The resolution of the identity is
\begin{equation}\label{res}
\int_{\mathcal D}\mid\qu\rangle\langle\qu\mid d\varsigma(r, \theta,\phi, \psi)=\mathbb{I}_{V_{\quat}^{R}},
\end{equation}
where $\mathbb{I}_{V_{\quat}^{R}}$ is the identity operator on $V_{\quat}^{R}$.
In particular, if $\rho(m)=m!$, then the normalization factor is $\mathcal N (\mid\qu\mid)=e^{|\qu|^{2}}$ and a resolution of the identity holds with the measure given in (\ref{measure}). For this choice, the vectors (CS) defined by (\ref{CS})  are called {\em right quaternionic canonical coherent states} (RQCS). For the purpose of quantizing the quaternions we shall use these canonical set of CS.\\

At this point a note is in order. By Proposition \ref{lft_mul} (f), for a basis element $|e_n\rangle$ we have $\qu\cdot |e_n\rangle=|e_n\rangle\qu$. For this reason in defining the coherent states we have avoided putting the left multiplication symbol. Since coherent states are linear combination of the basis vectors, for the same reason, we shall also avoid putting it in the definition of the quantization map.
\subsection{Quantization}
For a general scheme of quantization we refer the reader, for example, to \cite{Gaz, AE}. For the quantization of quaternions we refer to \cite{MuTh}. The material presented here, related to quantization of quaternions, is extracted from \cite{MuTh}.
Since $(\quat,d\varsigma(r, \theta,\phi, \psi))$ is a measure space, the set
\begin{equation*}
\left\{f:\quat\rightarrow \quat\mid\int_\quat|f(\qu)|^2d\varsigma(r, \theta,\phi, \psi)<\infty\right\}
\end{equation*}
is the space of right quaternionic square integrable functions and is denoted by  $L^2_{\quat}(\quat,d\varsigma(r, \theta,\phi, \psi))$. Define the sequence of functions $\{\phi_{n}\}_{n=0}^{\infty}$ such that $$\phi_{n}: \quat\longrightarrow \quat$$ by
\begin{equation}\label{phi}
\phi_{n}(\qu)=\frac{\overline{\qu}^{n}}{\sqrt{n!}},~~\mbox{~~for all~~}~~\qu\in \quat.\end{equation}
Then $\phi_{n}, \overline{\phi}_n\in L^2_{\quat}(\quat,d\varsigma(r, \theta,\phi, \psi))$, for all $n=0,1,2\cdots,$~~  $\langle\phi_{m}\mid\phi_{n}\rangle=\delta_{mn}$ and $\langle\overline{\phi}_{m}\mid\overline{\phi}_{n}\rangle=\delta_{mn}$ (see \cite{Thi2,MuTh}).
In other words, $$\mathcal{O}_{\text{ar}}=\{\phi_n~\vert~n=0,1,2\cdots\}\quad\text{and}\quad \mathcal{O}_{\text{r}}=\{\overline{\phi}_n~\vert~n=0,1,2\cdots\}$$ are orthonormal sets in $L^2_{\quat}(\quat,d\varsigma(r, \theta,\phi, \psi))$. The right quaternionic span of $\mathcal{O}_{\text{r}}$ and $\mathcal{O}_{\text{ar}}$ are the spaces of right-regular and the anti-right-regular functions respectively \cite{Thi1} (the counterparts of complex Bargmann spaces of holomorphic and anti-holomorphic functions). Let $\mathfrak{H}$ be a separable right quaternionic Hilbert space with an orthonormal basis  $$\mathcal{E}=\{~\mid e_{n}\rangle~\mid~ n=0,1,2\cdots~\}$$ which is in $1-1$ correspondence with $\mathcal{O}_{\text{r}}$ and with $\mathcal{O}_{\text{ar}}$. If needed, we can also take
$$\HI=\overline{\text{right-span-over}~\quat}~\mathcal{O}_{\text{r}}\quad\text{or}\quad \HI=\overline{\text{right-span-over}~\quat}~\mathcal{O}_{\text{ar}},$$
where the bar stands for the closure.
 Then the coherent states (\ref{CS}) become
\begin{equation}\label{CS1}
\mid\gamma_{\qu}\rangle=e^{-\mid\qu\mid^{2}/2}\sum_{m=0}^{\infty}\mid e_m\rangle\overline{\phi_{m}}.
\end{equation}
Using the set of CS (\ref{CS1}) we shall establish the coherent state quantization on $\mathfrak{H}$ by associating a function $$\quat\ni \qu\longmapsto f(\qu,\overline{\qu}).$$
Now let us define the operator on $\mathfrak{H}$ by
\begin{equation}
f(\qu,\overline{\qu})\mapsto A_{f},
\end{equation}
where $A_{f}$ is given by the operator valued integral
\begin{equation}\label{Qmap}
A_{f}=\int_{\quat}\mid\gamma_{\qu}\rangle f(\qu,\overline{\qu})\langle\gamma_{\qu}\mid  d\varsigma(r, \theta,\phi, \psi)=\sum_{m=0}^{\infty}\sum_{l=0}^{\infty}\frac{\mid e_{m}\rangle J_{m,l}\langle e_{l}\mid}{\sqrt{m!~l!}},
\end{equation}
where the integral $J_{m,l} $ is given by $$\displaystyle \iiiint\limits_{[0,\infty)\times[0,\pi]\times[0,2\pi)^{2}}\frac{ {\qu^{m}}f(\qu,\overline{\qu}){\overline{\qu}^{l}}}{e^{r^2}}d\varsigma(r, \theta,\phi, \psi).$$
By direct calculation we have that
if  $f(\qu,\overline{\qu})=\qu$, then
\begin{equation}\label{Aq}
A_{\qu}
=\sum_{m=0}^{\infty}\sqrt{(m+1)}\mid e_{m}\rangle\langle e_{m+1}\mid
\end{equation}
and if $f(\qu,\overline{\qu})=\overline{\qu}$, then
\begin{equation}\label{Aqb}
A_{\overline{\qu}}
=\sum_{m=0}^{\infty}\sqrt{(m+1)}\mid e_{m+1}\rangle\langle e_{m}\mid.
\end{equation}
Moreover if $f(\qu,\overline{\qu})=1$, then
$A_{1}
=\mathbb{I}_{\mathfrak{H}}.$ Since the operators $A_{\qu}$ and $A_{\oqu}$ are independent of $\qu$ and for the notational convenience, from now on we denote $A_{\qu} :=\mathsf{a}$ and $A_{\overline{\qu}} :=\mathsf{a}^{\dagger}$. This notation is consistent since
$$\langle {\mathsf{a}}^\dagger \phi\mid \psi\rangle=\langle \phi\mid {\mathsf{a}} \psi\rangle;\quad\text{for all}\quad |\phi\rangle, |\psi\rangle\in\mathfrak{H},$$
 ${\mathsf{a}}^\dagger $ is the adjoint of ${\mathsf{a}} $ and viceversa. Now if $\mathfrak{H}=\overline{\text{right-span-over}~\quat}~\mathcal{O}_{\text{ar}}$, then it is a subspace of $L^2_{\quat}(\quat,d\varsigma(r, \theta,\phi, \psi))$ and
$$A_{f}:\mathfrak{H}\longrightarrow \mathfrak{H}~~\mbox{~~by}\quad
A_{f}(u)=A_{f}\mid u\rangle=\int_{\quat}\mid\gamma_{\qu}\rangle f(\qu,\overline{\qu})\langle\gamma_{\qu}\mid u\rangle d\varsigma(r, \theta,\phi, \psi),$$ for all $u\in \mathfrak{H}$.
Moreover, for each $u\in\mathfrak{H},~~A_{f}\mid u\rangle\in\mathfrak{H}$. For $|u\rangle, |v\rangle\in\mathfrak{H}$, it can also be considered as a function
$$A_{f}:\mathfrak{H}\times \mathfrak{H}\longrightarrow \quat~~\mbox{~~by}\quad
A_{f}(u,v)=\langle u\mid A_{f}\mid v\rangle=\int_{\quat}\langle u\mid\gamma_{\qu}\rangle f(\qu,\overline{\qu})\langle\gamma_{\qu}\mid v\rangle d\varsigma(r, \theta,\phi, \psi).$$
Since $\mid\gamma_{\qu}\rangle$ is a column vector and $\langle\gamma_{\qu}\mid$ is a row vector, we can see that the operator $A_{f}$ is a matrix and the matrix elements with respect to the basis $\{\mid e_{n}\rangle\}$ are given by
$$(A_{f})_{mn}=\langle e_{m}\mid A_{f}\mid e_{n}\rangle=\int_{\quat}\langle e_{m}\mid\gamma_{\qu}\rangle f(\qu,\overline{\qu})\langle\gamma_{\qu}\mid e_{n}\rangle d\varsigma(r, \theta,\phi, \psi).$$
Since
$$\langle e_{m}\mid\gamma_{\qu}\rangle=\mathcal N (\mid\qu\mid)^{-\frac{1}{2}}~\overline{\phi_{m}(\qu)}$$ and$$\langle\gamma_{\qu}\mid e_{n}\rangle=\overline{\langle e_{n}\mid\gamma_{\qu}\rangle}=\mathcal N (\mid\qu\mid)^{-\frac{1}{2}}~{\phi_{n}(\qu)},$$
we have
$$(A_{f})_{mn}=\int_{\quat}\mathcal N (\mid\qu\mid)^{-1}\overline{\phi_{m}(\qu)}f(\qu,\overline{\qu})\phi_{n}(\qu). d\varsigma(r, \theta,\phi, \psi).$$
Hence, it can easily be seen that
\begin{eqnarray*}
	({\mathsf{a}} )_{l,m}&=&\langle e_l|{\mathsf{a}} |e_m\rangle=\left\{\begin{array}{ccc}
		\sqrt{l+1}&\text{if}&m=l+1\\
		0&\text{if}&m\not=l+1,\end{array}\right.\\
	({\mathsf{a}}^\dagger )_{l,m}&=&\langle e_l|{\mathsf{a}}^\dagger |e_m\rangle=\left\{\begin{array}{ccc}
		\sqrt{l}&\text{if}&m=l-1\\
		0&\text{if}&m\not=l-1.\end{array}\right.
\end{eqnarray*}
\noindent
Let us realize the operator $A_f$ as annihilation and creation operators. From (\ref{Aq}) and (\ref{Aqb}) we have ${\mathsf{a}} \mid e_{0}\rangle=0\,$,
$${\mathsf{a}} \mid e_{m}\rangle=\sqrt{m}\mid e_{m-1}\rangle\,; ~ m=1,2,\cdots$$ and
$${\mathsf{a}}^\dagger \mid e_{m}\rangle=\sqrt{m+1}\mid e_{m+1}\rangle\,; ~ m=0,1,2,\cdots$$
That is, ${\mathsf{a}} ,{\mathsf{a}}^\dagger $ are annihilation and creation operators respectively. Moreover, one can easily see that ${\mathsf{a}} \mid\gamma_{\qu}\rangle=\mid\gamma_{\qu}\rangle \qu$, which is in complete analogy with the action of the annihilation operator on the ordinary harmonic oscillator CS and the result obtained in \cite{Thi2}. We can also write
$$\mid e_n\rangle=\frac{({\mathsf{a}}^\dagger )^n}{\sqrt{n!}}\mid e_0\rangle.$$
Further, real numbers commute with quaternions. According to (\ref{lft_mul-op}) let us compute
\begin{eqnarray*}
	(\qu \cdot {\mathsf{a}}^\dagger )^2\mid e_0\rangle
	&=&(\qu\cdot {\mathsf{a}}^\dagger )(\qu\cdot {\mathsf{a}}^\dagger )|e_0\rangle\\
	&=&(\qu\cdot {\mathsf{a}}^\dagger )(\qu\cdot {\mathsf{a}}^\dagger |e_0\rangle)\\
	&=&(\qu\cdot {\mathsf{a}}^\dagger )(\qu\cdot|e_1\rangle)\sqrt{1}\\
	&=&(\qu\cdot {\mathsf{a}}^\dagger )|e_1\rangle \qu\\
	&=&\qu\cdot( {\mathsf{a}}^\dagger |e_1\rangle) \qu\\
	&=&\qu\cdot(|e_2\rangle\sqrt{2}) \qu\\
	&=&|e_2\rangle\qu^2\sqrt{2!}.
\end{eqnarray*}
That is, $\dfrac{(\qu \cdot {\mathsf{a}}^\dagger )^2\mid e_0\rangle}{\sqrt{2!}}=|e_2\rangle\qu^2$. By induction, for each $n=0,1,2,\cdots$, we have $\dfrac{(\qu \cdot {\mathsf{a}}^\dagger )^n\mid e_0\rangle}{\sqrt{n!}}=|e_n\rangle\qu^n$.
Using this, one can see that
\begin{eqnarray*}
	(e^{-|\qu|^2/2}\cdot e^{\qu\cdot {\mathsf{a}}^\dagger })|e_0\rangle
	&=&e^{-|\qu|^2/2}\cdot(e^{\qu\cdot {\mathsf{a}}^\dagger }|e_0\rangle)\\
	&=&e^{-|\qu|^2/2}\cdot\left[ \sum_{n=0}^\infty\dfrac{(\qu \cdot {\mathsf{a}}^\dagger )^n\mid e_0\rangle}{n!}\right] \\
	&=&e^{-|\qu|^2/2}\cdot\left[ \sum_{n=0}^\infty|e_n\rangle\dfrac{\qu^n}{\sqrt{n!}}\right] \\
	&=&\mid\gamma_{\qu}\rangle.
\end{eqnarray*}
That is, $\mid\gamma_{\qu}\rangle=(e^{-|\qu|^2/2}\cdot e^{\qu\cdot {\mathsf{a}}^\dagger })|e_0\rangle$.
For quaternionic exponentials we refer the reader to \cite{Eb} (pp 204).
Now a direct calculation shows that
\begin{eqnarray*}
	{\mathsf{a}} {\mathsf{a}}^\dagger
	&=&\sum_{m=0}^{\infty}(m+1)\mid e_{m}\rangle\langle e_{m}\mid
\end{eqnarray*}
and
\begin{eqnarray*}
	{\mathsf{a}}^\dagger {\mathsf{a}}
	&=&\sum_{m=0}^{\infty}(m+1)\mid e_{m+1}\rangle\langle e_{m+1}\mid.
\end{eqnarray*}
Therefore the commutator of ${\mathsf{a}}^\dagger ,{\mathsf{a}} $ takes the form
\begin{eqnarray*}
	[{\mathsf{a}} ,{\mathsf{a}}^\dagger ]&=&{\mathsf{a}} {\mathsf{a}}^\dagger -{\mathsf{a}}^\dagger {\mathsf{a}} \\
	&=&\sum_{m=0}^{\infty}\mid e_{m}\rangle\langle e_{m}\mid
	=\mathbb{I}_{\mathfrak{H}}.
\end{eqnarray*}
\begin{remark}
The operator $A_f$ in (\ref{Qmap}) is formed by the vector $\mid\gamma_{\qu}\rangle f(\qu,\overline{\qu})$, which is the right scalar multiple of the vector $\mid\gamma_{\qu}\rangle$ by the scalar $ f(\qu,\overline{\qu})$, and the dual vector $\langle\gamma_{\qu}\mid$. Instead if one takes
\begin{equation}\label{Af}
A_f=\int_\quat f(\qu,\overline{\qu})\mid\gamma_{\qu}\rangle\langle\gamma_{\qu}\mid d\varsigma(r, \theta,\phi, \psi),
\end{equation}
then it is formed by $ f(\qu,\overline{\qu})\mid\gamma_{\qu}\rangle$ (a left scalar multiple of a right Hilbert space vector) and the dual vector $ \langle\gamma_{\qu}\mid$, which is  unconventional . Further, due to the noncommutativity of quaternions, an $A_f$ in the form (\ref{Af}) would have caused severe technical problems in the computations done in the sequel.
	
However, using the left multiplication defined on the right quaternion Hilbert space, we can write
\begin{equation}
\label{ca}
A_f=\int_{\quat}f(\qu, \oqu)\cdot |\gamma_{\qu}\rangle\langle\gamma_{\qu}|d\xi(r,\theta,\phi,\psi).
\end{equation}
In this case, since CS are linear combinations of the basis vectors by Proposition \ref{lft_mul} (f) the $A_f$ in Equations \ref{ca} and \ref{Qmap} become identical.
\end{remark}
The following Proposition demonstrate commutativity between quaternions and the right linear operators $\as$ and $\asd$. Further, it plays an important role in defining the momentum operator and hence in the following theory. For this reason, we give a detailed proof of the proposition.
\begin{proposition}\label{xAq}
	For each $\mathfrak{q}\in\quat$, we have $\mathfrak{q}\cdot {\mathsf{a}} ={\mathsf{a}} \cdot\mathfrak{q}$ and $\mathfrak{q}\cdot {{\mathsf{a}}^\dagger} ={{\mathsf{a}}^\dagger} \cdot\mathfrak{q}$.
\end{proposition}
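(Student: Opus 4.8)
The plan is to verify both operator identities first on the orthonormal basis $\mathcal{E}=\{\,|e_n\rangle\mid n=0,1,2,\dots\}$ of $\HI$ --- which is precisely the basis with respect to which the left multiplication `$\cdot$' was defined in (\ref{LPro})--(\ref{lft_mul-op}) --- and then to propagate the equality to the whole domain by right $\quat$-linearity and the continuity properties collected in Remark~\ref{Rem123}. Before that, one should check that $\as\cdot\qu$ and $\asd\cdot\qu$ are even defined, i.e. that $\qu\cdot\phi\in\D(\as)$ for every $\phi\in\D(\as)$, and similarly for $\asd$: if $\phi=\sum_n|e_n\rangle c_n$ lies in $\D(\as)$, equivalently $\sum_n n|c_n|^2<\infty$, then Remark~\ref{Rem123} and Proposition~\ref{lft_mul}(f) give $\qu\cdot\phi=\sum_n|e_n\rangle\qu c_n$, and $\sum_n n|\qu c_n|^2=|\qu|^2\sum_n n|c_n|^2<\infty$ by Proposition~\ref{lft_mul}(b), so $\qu\cdot\phi\in\D(\as)$; the analogous computation with the weights $m+1$ works for $\asd$. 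Hence both sides of each claimed identity are right $\quat$-linear operators sharing the same dense domain, and it suffices to check they agree there.

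Next I would compute on a basis vector. Using the definition (\ref{lft_mul-op}), then the action $\as|e_m\rangle=\sqrt m\,|e_{m-1}\rangle$ coming from (\ref{Aq}), and then Proposition~\ref{lft_mul}(a),(f) (together with the fact that real numbers commute with quaternions), one gets
$$(\qu\cdot\as)|e_m\rangle=\qu\cdot(\as|e_m\rangle)=\qu\cdot\bigl(|e_{m-1}\rangle\sqrt m\bigr)=(\qu\cdot|e_{m-1}\rangle)\sqrt m=\sqrt m\,|e_{m-1}\rangle\qu .$$
On the other hand, by the definition (\ref{rgt_mul-op}), Proposition~\ref{lft_mul}(f), and the right $\quat$-linearity of $\as$,
$$(\as\cdot\qu)|e_m\rangle=\as(\qu\cdot|e_m\rangle)=\as(|e_m\rangle\qu)=(\as|e_m\rangle)\qu=\sqrt m\,|e_{m-1}\rangle\qu .$$
So $(\qu\cdot\as)|e_m\rangle=(\as\cdot\qu)|e_m\rangle$ for every $m$. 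The identical argument, using $\asd|e_m\rangle=\sqrt{m+1}\,|e_{m+1}\rangle$ from (\ref{Aqb}), yields $(\qu\cdot\asd)|e_m\rangle=\sqrt{m+1}\,|e_{m+1}\rangle\qu=(\asd\cdot\qu)|e_m\rangle$.

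Finally I would extend from $\mathcal{E}$ to $\D(\as)$ and $\D(\asd)$: for $\phi=\sum_n|e_n\rangle c_n\in\D(\as)$, Remark~\ref{Rem123} lets one interchange `$\cdot$' with the sum, and the basiswise identity above gives $(\qu\cdot\as)\phi=\sum_n(\qu\cdot\as)(|e_n\rangle c_n)=\sum_n(\as\cdot\qu)(|e_n\rangle c_n)=(\as\cdot\qu)\phi$, and likewise for $\asd$; this proves $\qu\cdot\as=\as\cdot\qu$ and $\qu\cdot\asd=\asd\cdot\qu$. There is no serious obstacle: the entire content of the proposition is Proposition~\ref{lft_mul}(f), namely that the (basis-dependent) left multiplication acts trivially on the very basis vectors out of which $\as$ and $\asd$ are assembled, so the two operators commute with it. The only points demanding care are the domain verification for the unbounded operators $\as,\asd$ and the appeal to Remark~\ref{Rem123} to justify commuting `$\cdot$' past the infinite basis expansions.
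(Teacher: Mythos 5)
Your proof is correct and is essentially the paper's argument: the paper likewise reduces everything to the definition of the left multiplication (equivalently Proposition \ref{lft_mul}(f)), computing $(\qu\cdot\as)\phi$ and $(\as\cdot\qu)\phi$ for an arbitrary $\phi\in\mathfrak{H}$ via the series form of $\as$ and the identity $\langle e_{n+1}\mid\qu\cdot\phi\rangle=\qu\langle e_{n+1}\mid\phi\rangle$, which is the same computation you perform on basis vectors and then extend by linearity and Remark \ref{Rem123}. Your additional domain verification is a harmless refinement beyond what the paper records, since there $\as$ and $\asd$ are treated through their series expressions on all of $\mathfrak{H}$.
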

\begin{proof}
	Let $\mathfrak{q}\in\quat$, and $\phi\in\mathfrak{H}$, then
	\begin{eqnarray*}
		(\mathfrak{q}\cdot {\mathsf{a}} )\phi&=&\sum_{n=0}^{\infty}|e_n\rangle\mathfrak{q}\langle e_n|{\mathsf{a}} \phi\rangle\\
		&=&\sum_{n=0}^{\infty}|e_n\rangle\mathfrak{q}\left(\sum_{m=0}^{\infty}\sqrt{m+1} \langle e_n|e_m\rangle\langle e_{m+1}|\phi\rangle\right)\\
		&=& \sum_{n=0}^{\infty}\sqrt{n+1}\, |e_n\rangle\mathfrak{q}\langle e_{n+1}|\phi\rangle.
	\end{eqnarray*}
For any arbitrary $\psi\in\mathfrak{H}$, we have
			$$\langle\psi\mid\mathfrak{q}\cdot\phi\rangle=\langle\psi\mid\sum_{m=0}^{\infty} e_m\,\mathfrak{q}\langle e_m|\phi\rangle\rangle=\sum_{m=0}^{\infty}\langle\psi\mid e_m\,\mathfrak{q}\langle e_m|\phi\rangle\rangle$$ hence
			$$\langle\psi\mid\mathfrak{q}\cdot\phi\rangle=\sum_{m=0}^{\infty}\langle\psi\mid e_m\rangle\mathfrak{q}\langle e_m|\phi\rangle.$$
			That is, $$\mid\mathfrak{q}\cdot\phi\rangle=\sum_{m=0}^{\infty}\mid e_m\rangle\mathfrak{q}\langle e_m|\phi\rangle.$$ Thus, by the orthogonality, it becomes, $\displaystyle\langle e_{n+1}|\mathfrak{q}\cdot\phi\rangle=\sum_{m=0}^{\infty} \langle e_{n+1}|e_m\rangle\mathfrak{q}\langle e_m|\phi\rangle=\mathfrak{q}\langle e_{n+1}|\phi\rangle$. Therefore,
	\begin{eqnarray*}
		({\mathsf{a}} \cdot\mathfrak{q} )\phi&=&{\mathsf{a}} (\mathfrak{q}\cdot\phi)\\
		&=&	\sum_{n=0}^{\infty}\sqrt{n+1} |e_n\rangle\langle e_{n+1}|\mathfrak{q}\cdot\phi\rangle\\
		&=&	\sum_{n=0}^{\infty}\sqrt{n+1} |e_n\rangle\left(\sum_{m=0}^{\infty} \langle e_{n+1}|e_m\rangle\mathfrak{q}\langle e_m|\phi\rangle\right) \\
		&=& \sum_{n=0}^{\infty}\sqrt{n+1}\, |e_n\rangle\mathfrak{q}\langle e_{n+1}|\phi\rangle.
	\end{eqnarray*}
	That is, $(\mathfrak{q}\cdot {\mathsf{a}} )\phi=({\mathsf{a}} \cdot\mathfrak{q} )\phi$. Since $\phi\in\mathfrak{H}$ is arbitrary, we have $\mathfrak{q}\cdot {\mathsf{a}} ={\mathsf{a}} \cdot\mathfrak{q} $. Similarly $\mathfrak{q}\cdot {{\mathsf{a}}^\dagger} ={{\mathsf{a}}^\dagger} \cdot\mathfrak{q}$ can be obtained.
\end{proof}
\subsection{Number, position and momentum operators and Hamiltonian}
Let $N={\mathsf{a}}^\dagger {\mathsf{a}} $, then we have
\begin{eqnarray*}
	N\mid e_n\rangle&=&{\mathsf{a}}^\dagger {\mathsf{a}} \mid e_n\rangle\\
	&=&\sum_{m=0}^{\infty}\mid e_{m+1}\rangle\langle e_{m+1}\mid e_n\rangle(m+1)\\
	&=&\mid e_n\rangle n.
\end{eqnarray*}
Therefore $N$ acts as the number operator and the Hilbert space $\mathfrak{H}$ is the quaternionic Fock space (for the quaternionic Fock spaces see \cite{Al}). As an analogue of the usual harmonic oscillator Hamiltonian, if we take $\mathcal{H}_h=N+\mathbb{I}_{\mathfrak{H}}$, then  $\mathcal{H}_h\mid e_n\rangle=\mid e_n\rangle (n+1)$, which is a Hamiltonian in the right quaternionic Hilbert space $\mathfrak{H}$ with spectrum $(n+1)$ and eigenvector $\mid e_n\rangle$.
Following the complex formalism, for $\qu\in \quat$ if we take $\displaystyle\mathit{q}=\frac{1}{\sqrt{2}}(\qu+\overline{\qu})$, then we can have a linear self-adjoint position operator as $$\displaystyle Q=\frac{1}{\sqrt{2}}({\mathsf{a}} +{\mathsf{a}}^\dagger ).$$
We note that $Q$ is self-adjoint
since the operators ${\mathsf{a}}$, and ${\mathsf{a}}^\dagger$ are defined on the whole Hilbert space and are one the adjoint of the other. As it was indicated in the introduction, under the right multiplication on a right quaternionic Hilbert space a linear self-adjoint momentum operator cannot be obtained by mimicking the complex momentum coordinate with one of $i, j$ or $k$ of the quaternionic units. For example, if we take $p=-\frac{i}{\sqrt{2}}(\qu-\oqu)$ then the momentum operator
$P=-\frac{i}{\sqrt{2}}(\as-\asd)$ becomes non self-adjoint. This point is very well discussed in \cite{MuTh}.

Now let us turn our attention to the momentum operators with a left multiplication defined on a right quaternionic Hilbert space. In fact, we show that there is a class of linear and self-adjoint momentum operators all resulting in the same harmonic oscillator Hamiltonian. For this purpose, we first take
$$\mathit{p}_i=\frac{-i}{\sqrt{2}}(\qu-\overline{\qu}),$$
$$\mathit{p}_j=\frac{-j}{\sqrt{2}}(\qu-\overline{\qu})$$
and
$$\mathit{p}_k=\frac{-k}{\sqrt{2}}(\qu-\overline{\qu});$$
then the momentum operators with respect to the above coordinates become
$$P_i=\frac{-i}{\sqrt{2}}\cdot({\mathsf{a}} -{\mathsf{a}}^\dagger ),$$
$$P_j=\frac{-j}{\sqrt{2}}\cdot({\mathsf{a}} -{\mathsf{a}}^\dagger )$$
and
$$P_k=\frac{-k}{\sqrt{2}}\cdot({\mathsf{a}} -{\mathsf{a}}^\dagger )$$ respectively. Note that the above operators $P_\tau:\,\tau=i,j,k$ has no issue on the domain, since the operators ${\mathsf{a}}$ and ${\mathsf{a}}^\dagger$ are defined on the whole Hilbert space, moreover the operators $P_\tau$ are self-adjoint for any $\tau\in\{i,j,k\}$. In fact, as we already pointed out,  ${\mathsf{a}}^\dagger $ is the adjoint of ${\mathsf{a}} $ and viceversa, so for any $\tau\in\{i,j,k\}$ we have:
\begin{eqnarray*}
	P_\tau^\dagger&=&\left[\frac{-\tau}{\sqrt{2}}\cdot({\mathsf{a}} -{\mathsf{a}}^\dagger )\right]^\dagger\\
	&=&({\mathsf{a}} ^\dagger-{{\mathsf{a}}^\dagger} ^\dagger)\cdot\frac{~\tau}{\sqrt{2}}~~\mbox{~~~~by (\ref{sc_mul_aj-op})}\\
	&=& ({\mathsf{a}}^\dagger -{\mathsf{a}} )\cdot\frac{~\tau}{\sqrt{2}}\\
	&=&\frac{-\tau}{\sqrt{2}}\cdot({\mathsf{a}} -{\mathsf{a}}^\dagger )~~\mbox{~~~~by Proposition \ref{xAq}}\\
	&=&P_\tau.
\end{eqnarray*}
Let us see the Hamiltonian with our position and momentum coordinates. For any $\tau\in\{i,j,k\}$ we have $H_\tau=\frac{1}{2}\left(\mid\mathit{q}\mid^{2}+\mid\mathit{p}_\tau\mid^{2}\right)=|\qu|^2$. Even if we use the position coordinate and all three momentum coordinates, we get
$$H_c=\frac{1}{2}\left(\mathit{q}^{2}-\mathit{p}_i^{2}-\mathit{p}_j^{2}-\mathit{p}_k^{2}\right)=|\qu|^2.$$

The lower symbol of $N$ is $\langle \gamma_{\qu}\mid N\mid\gamma_{\qu}\rangle=|\qu|^2$ and through a rather lengthy calculation, with the quantization map (\ref{Qmap}), we can see that  $A_{|\qu|^2}=N+\mathbb{I}_{\mathfrak{H}}$.
Now, let us turn our attention to the commutator and canonical quantization of the Hamiltonian. For each $\tau\in\{i,j,k\}$, we can calculate
\begin{eqnarray*}
	QP_\tau\phi
	&=&\left[\frac{({\mathsf{a}} +{\mathsf{a}}^\dagger )}{\sqrt{2}}\right]\left[(-\tau)\cdot\frac{({\mathsf{a}}
		-{\mathsf{a}}^\dagger )}{\sqrt{2}}\right]\phi\\
	&=&\left[\frac{({\mathsf{a}} +{\mathsf{a}}^\dagger )}{\sqrt{2}}\right]\left[(-\tau)\cdot\left( \frac{({\mathsf{a}}
		-{\mathsf{a}}^\dagger )}{\sqrt{2}}\phi\right) \right]\\
	&=&\left[\frac{({\mathsf{a}} +{\mathsf{a}}^\dagger )}{\sqrt{2}}\cdot(-\tau)\right]\left[\left( \frac{({\mathsf{a}}
		-{\mathsf{a}}^\dagger )}{\sqrt{2}}\phi\right) \right]~~~\mbox{~by (\ref{rgt_mul-op})}\\
	&=&\left[(-\tau)\cdot\left( \frac{({\mathsf{a}} +{\mathsf{a}}^\dagger )}{\sqrt{2}}\right) \right]\left[\left( \frac{({\mathsf{a}}
		-{\mathsf{a}}^\dagger )}{\sqrt{2}}\phi\right) \right]~~~\mbox{~by Proposition \ref{xAq}}\\
	&=&-\frac{1}{2}\tau\cdot\,[{{\mathsf{a}} }^{2}+{\mathsf{a}}^\dagger {\mathsf{a}} -{\mathsf{a}} {\mathsf{a}}^\dagger -{{\mathsf{a}}^\dagger }^{2}]\phi
\end{eqnarray*}
and
\begin{eqnarray*}
	P_\tau Q\phi
	&=&\left[-\tau\cdot\frac{({\mathsf{a}} -{\mathsf{a}}^\dagger )}{\sqrt{2}}\right]
	\left[\frac{({\mathsf{a}} +{\mathsf{a}}^\dagger )}{\sqrt{2}}\right]\phi\\
	&=&-\frac{1}{2}\tau\cdot\,[{{\mathsf{a}} }^{2}-{\mathsf{a}}^\dagger {\mathsf{a}} +{\mathsf{a}} {\mathsf{a}}^\dagger -{{\mathsf{a}}^\dagger }^{2}]\phi,
\end{eqnarray*}
for all $\phi\in V_\quat^R$. Therefore, for each $\tau\in\{i,j,k\}$, we have the commutator  $$[Q,P_\tau]=QP_\tau-P_\tau Q=\tau\cdot\,[{\mathsf{a}} ,{\mathsf{a}}^\dagger ]=\tau\cdot\mathbb{I}_{\mathfrak{H}}.$$
We can also obtain, in a similar fashion, for each $\tau\in\{i,j,k\}$,
\begin{eqnarray*}
	Q^{2}&=&
	\frac{1}{2}\,[{{\mathsf{a}} }^{2}+{\mathsf{a}}^\dagger {\mathsf{a}} +{\mathsf{a}} {\mathsf{a}}^\dagger
	+{{\mathsf{a}}^\dagger }^{2}]\quad\text{and}\\
	P^{2}_\tau&=&
	-\frac{1}{2}\,[{{\mathsf{a}} }^{2}-{\mathsf{a}}^\dagger {\mathsf{a}} -{\mathsf{a}} {\mathsf{a}}^\dagger
	+{{\mathsf{a}}^\dagger }^{2}].
\end{eqnarray*}
Hence, for each $\tau\in\{i,j,k\}$,
\begin{eqnarray*}
	\hat{H}_\tau&=&\frac{Q^{2}+P^{2}_\tau}{2}=\frac{1}{2}[{\mathsf{a}}^\dagger {\mathsf{a}} +{\mathsf{a}} {\mathsf{a}}^\dagger ]\\
	&=&{\mathsf{a}}^\dagger {\mathsf{a}} +\frac{1}{2}[{\mathsf{a}} {\mathsf{a}}^\dagger -{\mathsf{a}}^\dagger {\mathsf{a}} ]\\
	&=&N+\frac{1}{2}\mathbb{I}_{\mathfrak{H}},
\end{eqnarray*}
which does not depend on the choice of $\tau\in\{i,j,k\}$. Let us consider the momentum coordinate
$$\mathit{p}^*=-\frac{(i+j+k)}{\sqrt{3}}\frac{(\qu-\overline{\qu})}{\sqrt{2}}$$
to define another momentum operator $P^*$ as  $$P^*=-\frac{(i+j+k)}{\sqrt{3}}\cdot\frac{({\mathsf{a}} -{\mathsf{a}}^\dagger )}{\sqrt{2}}.$$
One can realize that $P^*$ is self-adjoint, and  the Hamiltonian becomes
$$H^*=\frac{1}{2}\left(\mid\mathit{q}\mid^{2}+\mid\mathit{p}^*\mid^{2}\right)=|\qu|^2.$$
Furthermore, we have $$[Q,P^*]=\frac{(i+j+k)}{\sqrt{3}}\cdot\mathbb{I}_{\mathfrak{H}}$$ and
$$\hat{H}^*=\frac{Q^{2}+{P^*}^{2}}{2}=N+\frac{1}{2}\mathbb{I}_{\mathfrak{H}}.$$
In a more wider range, we can define the momentum coordinate for each $I\in\mathbb{S}$, such that
$$\mathit{p}_I=\frac{-I}{\sqrt{2}}(\qu-\overline{\qu}),$$
and the corresponding momentum operator is
$$P_I=\frac{-I}{\sqrt{2}}\cdot({\mathsf{a}} -{\mathsf{a}}^\dagger ),$$
which is also self-adjoint.
In this set up, the coordinate for the Hamiltonian is $H_{I}=\frac{1}{2}\left(\mid\mathit{q}\mid^{2}+\mid\mathit{p}_I\mid^{2}\right)=|\qu|^2$. We also have
$$[Q,P_I]=I\cdot\mathbb{I}_{\mathfrak{H}}$$ and
\begin{equation}\label{xxx}
\hat{H}_I=\frac{Q^{2}+P^{2}_I}{2}=N+\frac{1}{2}\mathbb{I}_{\mathfrak{H}}.
\end{equation}
In conclusion, in the quaternionic case, we have a set of momentum operators:
$$\mathfrak{P}=\left\lbrace P_I=\frac{-I}{\sqrt{2}}\cdot({\mathsf{a}} -{\mathsf{a}}^\dagger )~\mid~I\in\mathbb{S}\right\rbrace. $$
\begin{remark}{\rm We point out that in various cases, for example to compute the Hamiltonian, one can choose a specific $I\in\mathbb S$ and to work with $P_I$. However, there is no preferred choice and there is no need to fix $I\in\mathbb S$. A similar phenomenon occurs with the elements of the spectrum which are always spheres. In various cases one may fix a specific element in a sphere and to work with it, but there is no necessity for such a choice. The quaternionic setting offers a larger number of possibilities, compared to the complex case. This fact was already observed by Adler in \cite{Ad}, section 2.3.}
\end{remark}
\subsection{Heisenberg uncertainty}
In this subsection, since we have good candidates for the position and momentum operators, let us demonstrate the Heisenberg uncertainty relation in the whole set of quaternions and on a quaternionic slice. In fact, we shall show that the RQCS become the minimum uncertainty states in a neighbourhood of the origin when we consider the whole set of quaternions and the set of RQCS becomes the minimum uncertainty states and the so-called intelligent states on a quaternion slice.\\

In order to compute the expectation values of the involved operators recall that
\begin{eqnarray*}
    {\mathsf{a}} |e_0\rangle&=&0\\
	{\mathsf{a}} |e_m\rangle&=&\sqrt{m}|e_{m-1}\rangle;\quad m=1,2,\cdots\\
	{\mathsf{a}}^\dagger|e_m\rangle&=&\sqrt{m+1}|e_{m+1}\rangle;\quad m=0,1,\cdots
\end{eqnarray*}
and
\begin{equation}\label{ev}
{\mathsf{a}} |\gamma_{\qu}\rangle=|\gamma_{\qu}\rangle\qu.
\end{equation}
Using (\ref{ev}) we can easily see that
$${\mathsf{a}} ^2|\gamma_{\qu}\rangle={\mathsf{a}} |\gamma_{\qu}\rangle\qu=|\gamma_{\qu}\rangle\qu^2.$$
Hence, as $\langle\gamma_{\qu}|\gamma_{\qu}\rangle=1$, we get
$$\langle\gamma_{\qu}|{\mathsf{a}} |\gamma_{\qu}\rangle=\qu\quad\mbox{and}\quad
\langle\gamma_{\qu}|{\mathsf{a}} ^2|\gamma_{\qu}\rangle=\qu^2.$$
For the sake of simplicity, we set $a_m=\sqrt{m+1}$ and $b_m=\sqrt{(m+1)(m+2)}$. The action of the operators, ${{\mathsf{a}}^\dagger} , {{\mathsf{a}}^\dagger} ^2, {{\mathsf{a}}^\dagger} {\mathsf{a}} $ and ${\mathsf{a}} {{\mathsf{a}}^\dagger} $ on the RQCS takes the form
\begin{eqnarray*}
	{{\mathsf{a}}^\dagger} |\gamma_{\qu}\rangle
	&=&e^{-|\qu|^2/2}\sum_{m=0}^{\infty}{{\mathsf{a}}^\dagger} |e_m\rangle\frac{\qu^m}{\sqrt{m!}}\\
	&=&e^{-|\qu|^2/2}\sum_{m=0}^{\infty}|e_{m+1}\rangle a_m\frac{\qu^m}{\sqrt{m!}},
\end{eqnarray*}
and similarly,
\begin{eqnarray*}
	{{\mathsf{a}}^\dagger} ^2|\gamma_{\qu}\rangle
	&=&e^{-|\qu|^2/2}\sum_{m=0}^{\infty}|e_{m+2}\rangle b_m\frac{\qu^m}{\sqrt{m!}},
\end{eqnarray*}
\begin{eqnarray*}
	{{\mathsf{a}}^\dagger} {\mathsf{a}} |\gamma_{\qu}\rangle
	&=&e^{-|\qu|^2/2}\sum_{m=0}^{\infty}|e_{m+1}\rangle a_m\frac{\qu^{m+1}}{\sqrt{m!}}
\end{eqnarray*}
and
\begin{eqnarray*}
	{\mathsf{a}} {{\mathsf{a}}^\dagger} |\gamma_{\qu}\rangle
	&=&e^{-|\qu|^2/2}\sum_{m=0}^{\infty}|e_{m}\rangle a_m^2\frac{\qu^m}{\sqrt{m!}}.
\end{eqnarray*}
The dual of the RQCS is
$$\langle\gamma_{\qu}|
=e^{-|\qu|^2/2}\sum_{m=0}^{\infty}\frac{\oqu^m}{\sqrt{m!}}\langle e_{m}|.$$
Thereby we get the expectation values
\begin{eqnarray*}
	\langle\gamma_{\qu}|{{\mathsf{a}}^\dagger} |\gamma_{\qu}\rangle
	&=&
	e^{-|\qu|^2}\sum_{m=0}^{\infty}\sum_{n=0}^{\infty}\frac{\oqu^m}{\sqrt{m!}}
	\langle e_m|e_{n+1}\rangle a_n\frac{\qu^n}{\sqrt{n!}}\\
	&=&e^{-|\qu|^2}\sum_{m=0}^{\infty}\frac{\oqu^{m+1}\qu^m}{m!}\\
	&=&e^{-|\qu|^2}\oqu\sum_{m=0}^{\infty}\frac{|\qu|^{2m}}{m!}\\
	&=&\oqu,
\end{eqnarray*}
and similarly,
\begin{eqnarray*}
	\langle\gamma_{\qu}|{{\mathsf{a}}^\dagger} ^2|\gamma_{\qu}\rangle&=&
	\oqu^2,\\
	\langle\gamma_{\qu}|{{\mathsf{a}}^\dagger} {\mathsf{a}} |\gamma_{\qu}\rangle
	&=&\oqu\qu=|\qu|^2,\\
	\langle\gamma_{\qu}|{\mathsf{a}} {{\mathsf{a}}^\dagger} |\gamma_{\qu}\rangle
	&=&1+|\qu|^2.
\end{eqnarray*}
Using the above expectation values we can get the expectation values of $Q$ and $ Q^2$ as follows.
\begin{eqnarray*}
	\langle\gamma_{\qu}|Q|\gamma_{\qu}\rangle&=&
	\frac{1}{\sqrt{2}}\langle\gamma_{\qu}|{\mathsf{a}} +{{\mathsf{a}}^\dagger} |\gamma_{\qu}\rangle\\
	&=&\frac{1}{\sqrt{2}}[\langle\gamma_{\qu}|{\mathsf{a}} |\gamma_{\qu}\rangle
	+\langle\gamma_{\qu}|{{\mathsf{a}}^\dagger} |\gamma_{\qu}\rangle]\\
	&=&\frac{1}{\sqrt{2}}(\qu+\oqu),
\end{eqnarray*}
and hence
$$\langle\gamma_{\qu}|Q|\gamma_{\qu}\rangle^2=\frac{1}{2}(\qu^2+2|\qu|^2+\oqu^2).$$
Now for $Q^2$
\begin{eqnarray*}
	\langle\gamma_{\qu}|Q^2|\gamma_{\qu}\rangle&=&
	\frac{1}{2}\langle\gamma_{\qu}|{\mathsf{a}} ^2+{\mathsf{a}} {\mathsf{a}}^\dagger +{\mathsf{a}}^\dagger {\mathsf{a}}
	{{\mathsf{a}}^\dagger} ^2|\gamma_{\qu}\rangle\\
	&=&\frac{1}{2}[\qu^2+1+|\qu|^2+|\qu|^2+\oqu^2]\\
	&=&\frac{1}{2}[\qu^2+1+2|\qu|^2+\oqu^2].
\end{eqnarray*}
Therefore the variance of $Q$ becomes
\begin{eqnarray*}
	\langle\Delta Q\rangle^2&=&\langle\gamma_{\qu}|Q^2|\gamma_{\qu}\rangle-\langle\gamma_{\qu}|Q|\gamma_{\qu}\rangle^2\\
	&=&1/2.
\end{eqnarray*}
That is,
$$\langle\Delta Q\rangle=\frac{1}{\sqrt{2}}.$$
Let $I\in\mathbb{S}$, then for the momentum operator $P_I$, we have
\begin{eqnarray*}
	P_I|\gamma_{\qu}\rangle&=&\left(\frac{-I}{\sqrt{2}}\cdot[{\mathsf{a}} -{{\mathsf{a}}^\dagger} ]\right)|\gamma_{\qu}\rangle\\
	&=&\frac{-I}{\sqrt{2}}\cdot\left([{\mathsf{a}} -{{\mathsf{a}}^\dagger} ]|\gamma_{\qu}\rangle\right)\\
	&=&\frac{-I}{\sqrt{2}}\cdot\left({\mathsf{a}} |\gamma_{\qu}\rangle-{{\mathsf{a}}^\dagger} |\gamma_{\qu}\rangle\right).
\end{eqnarray*}
Since
$$I\cdot \mathsf{a}=\sum_{m=0}^{\infty}\sqrt{(m+1)}\mid e_{m}\rangle I\langle e_{m+1}\mid,$$ we have
\begin{eqnarray*}
	(I\cdot \mathsf{a})\mid\gamma_{\qu}\rangle
	&=&e^{-|\qu|^2/2}\sum_{m=0}^{\infty}\sqrt{(m+1)}\mid e_m\rangle I\sum_{n=0}^{\infty}\langle  e_{m+1}
	\vert
	e_n\rangle\frac{\qu^n}{\sqrt{n!}}  \\
	&=&e^{-|\qu|^2/2}\sum_{m=0}^{\infty}\sqrt{(m+1)}\mid e_m\rangle I\frac{\qu^{m+1}}{\sqrt{m+1!}}
\end{eqnarray*}
and
\begin{eqnarray*}
	\langle\gamma_{\qu}\mid(I\cdot \mathsf{a})\mid\gamma_{\qu}\rangle
	&=&	e^{-|\qu|^2/2}\sum_{m=0}^{\infty}\sqrt{(m+1)}\langle\gamma_{\qu}\mid e_m\rangle I\frac{\qu^{m+1}}{\sqrt{m+1!}}\\
	&=&e^{-|\qu|^2}\sum_{m=0}^{\infty}\sqrt{(m+1)} \sum_{n=0}^{\infty}\frac{\oqu^n}{\sqrt{n!}}\langle e_{n}\vert e_m\rangle I\frac{\qu^{m+1}}{\sqrt{m+1!}}\\
	&=&e^{-|\qu|^2}\sum_{m=0}^{\infty}\sqrt{(m+1)} \frac{\oqu^m}{\sqrt{m!}} I\frac{\qu^{m+1}}{\sqrt{m+1!}}\\
	&=&\left( e^{-|\qu|^2}\sum_{m=0}^{\infty} \frac{\oqu^m I \qu^{m}}{m!}\right)\qu=\mathfrak{C}_I\qu; \\
\end{eqnarray*}	
where $\displaystyle\mathfrak{C}_I=e^{-|\qu|^2}\sum_{m=0}^{\infty} \frac{\oqu^m I \qu^{m}}{m!}$ and this series  absolutely converges to $1$. That is, $|\mathfrak{C}_I|\leq1$. It is nice to note that, $\overline{{\mathfrak{C}}_I}=-\mathfrak{C}_I$ and $|\mathfrak{C}_I|^2=-\mathfrak{C}_I^2$. From this, we can say that there exist $\mathcal{I}\in\mathbb{S}$ and $r\in[0,1]$ such that $\mathfrak{C}_I=r\mathcal{I}$. Also we  find
$$(I\cdot {{\mathsf{a}}^\dagger} )\mid\gamma_{\qu}\rangle=e^{-|\qu|^2/2}\sum_{m=0}^{\infty}\sqrt{(m+1)}\mid e_{m+1}\rangle I\frac{\qu^{m}}{\sqrt{m!}}$$
and
$$\langle\gamma_{\qu}\mid(I\cdot {{\mathsf{a}}^\dagger} )\mid\gamma_{\qu}\rangle=\oqu\left( e^{-|\qu|^2}\sum_{m=0}^{\infty} \frac{\oqu^m I \qu^{m}}{m!}\right)=\oqu\mathfrak{C}_I.$$
Therefore,
\begin{eqnarray*}
	\langle\gamma_{\qu}|P_I|\gamma_{\qu}\rangle&=&	\frac{1}{\sqrt{2}} [\langle\gamma_{\qu}|(I\cdot {\mathsf{a}} )|\gamma_{\qu}\rangle	-\langle\gamma_{\qu}|(I\cdot {{\mathsf{a}}^\dagger} )|\gamma_{\qu}\rangle]\\	&=&\frac{1}{\sqrt{2}}(\mathfrak{C}_I\qu-\oqu\mathfrak{C}_I).
\end{eqnarray*}
Hence we obtain
\begin{eqnarray*}
	\langle\gamma_{\qu}|P_I|\gamma_{\qu}\rangle^2
	&=&\frac{1}{2}(\mathfrak{C}_I\qu-\oqu\mathfrak{C}_I)^2\\
	&=&\frac{1}{2}(\mathfrak{C}_I\qu+\overline{\mathfrak{C}_I\qu})^2\\
	&=&\frac{1}{2}[(\mathfrak{C}_I\qu)^2+2|\mathfrak{C}_I\qu|^2+(\overline{\mathfrak{C}_I\qu})^2].
\end{eqnarray*}
Since $I^2=-1$, we have
\begin{eqnarray*}
	\langle\gamma_{\qu}|P_I^2|\gamma_{\qu}\rangle&=&
	-\frac{1}{2}\langle\gamma_{\qu}|{\mathsf{a}} ^2-{\mathsf{a}} {{\mathsf{a}}^\dagger} -{{\mathsf{a}}^\dagger} {\mathsf{a}}
	+{{\mathsf{a}}^\dagger} ^2|\gamma_{\qu}\rangle\\
	&=&-\frac{1}{2}[\qu^2-1-|\qu|^2-|\qu|^2+\oqu^2]\\
	&=&-\frac{1}{2}[\qu^2-1-2|\qu|^2+\oqu^2].
\end{eqnarray*}
Therefore the variance of $P_I$ becomes
\begin{eqnarray*}
	\langle\Delta P_I\rangle^2&=&\langle\gamma_{\qu}|P_I^2|\gamma_{\qu}\rangle-\langle\gamma_{\qu}|P_I|\gamma_{\qu}\rangle^2\\
	&=&-\frac{1}{2}[\qu^2-1-2|\qu|^2+\oqu^2]-\frac{1}{2}[(\mathfrak{C}_I\qu)^2+2|\mathfrak{C}_I\qu|^2
+(\overline{\mathfrak{C}_I\qu})^2],
\end{eqnarray*}
which is a nonnegative real number, since
Re$( \mathfrak{q}^2 - (\mathfrak{C}_I \mathfrak{q})^2) \leq 2(|\mathfrak{q}|^2+ |\mathfrak{C}_I \mathfrak{q}|^2) +1$.
Hence, we have
\begin{eqnarray*}
	\langle\Delta Q\rangle^2\langle\Delta P_I\rangle^2
	&=&-\frac{1}{4}[(\qu^2-1-2|\qu|^2+\oqu^2)+((\mathfrak{C}_I\qu)^2+2|\mathfrak{C}_I\qu|^2+(\overline{\mathfrak{C}_I\qu})^2)]\\
	&\geq&-\frac{1}{4}[(\qu^2-1-2|\qu|^2+\oqu^2)+((\mathfrak{C}_I\qu)^2+2|\qu|^2+(\overline{\mathfrak{C}_I\qu})^2)]~\mbox{~~as~~}|\mathfrak{C}_I|\leq1\\
	&=&\frac{1}{4}-\frac{1}{4}[(\qu^2+\oqu^2)+((\mathfrak{C}_I\qu)^2+(\overline{\mathfrak{C}_I\qu})^2)].\\
\end{eqnarray*}	
From this,
\begin{eqnarray*}
	|\langle\Delta Q\rangle^2\langle\Delta P_I\rangle^2|
	&\geq&\frac{1}{4}-\frac{1}{4}(|\qu|^2(1+|\mathfrak{C}_I|^2)+|\oqu|^2(1+|\mathfrak{C}_I|^2|))\\
	&\geq&\frac{1}{4}-\frac{1}{2}|\qu^2|\,(1+\mathfrak{C}_I|^2)\\
	&\geq&\frac{1}{4}-|\qu^2|~\mbox{~~as~~}|\mathfrak{C}_I|\leq1.
\end{eqnarray*}	
Thus $|\langle\Delta Q\rangle^2\langle\Delta P_I\rangle^2|\geq\frac{1}{4}-|\qu|^2$. Likewise, one can easily get  $|\langle\Delta Q\rangle^2\langle\Delta P_I\rangle^2|\leq\frac{1}{4}+|\qu|^2$. As a summary, we have $$|\,|\langle\Delta Q\rangle^2\langle\Delta P_I\rangle^2|-\frac{1}{4}|\leq|\qu|^2.$$
From this, one can say that $$\lim_{|\qu|\longrightarrow0}|\langle\Delta Q\rangle\langle\Delta P_I\rangle|=\frac{1}{2}.$$
Further, since $[Q,P_I]=I\cdot\mathbb{I}_{\mathfrak{H}}$, we have
\begin{eqnarray*}
	[Q,P_I]|\gamma_{\qu}\rangle&=&(I\cdot\mathbb{I}_{\mathfrak{H}})|\gamma_{\qu}\rangle
	=I\cdot(\mathbb{I}_{\mathfrak{H}}|\gamma_{\qu}\rangle)\\
	&=&|I\cdot\gamma_{\qu}\rangle.
\end{eqnarray*}
Therefore
$$\langle\gamma_{\qu}|[Q,P_I]|\gamma_{\qu}\rangle=\langle\gamma_{\qu}|I\cdot\gamma_{\qu}\rangle=e^{-|\qu|^2}\sum_{m=0}^{\infty} \frac{\oqu^m I \qu^{m}}{m!}=\mathfrak{C}_I=r\mathcal{I}.$$
Hence
$$\frac{1}{2}|\langle[Q,P_I]\rangle|=\frac{1}{2}|r\mathcal{I}|=\frac{1}{2}r\leq\frac{1}{2},$$
as $r\leq1$.
Therefore, we have
$$\lim_{|\qu|\longrightarrow0}|\langle\Delta Q\rangle\langle\Delta P_I\rangle|\geq\frac{1}{2}|\langle[Q,P_I]\rangle|.$$
The Heisenberg uncertainty  gets saturated only in a limit sense. We believe that this is not due to a defect in the definition of the momentum operator but it is just a technical issue in obtaining a closed form for the series $\displaystyle\sum_{m=0}^{\infty}\frac{\oqu^m I\qu^m}{m!}$.

In the above setup, the imaginary-unit element $I\in\mathbb{S}$ in the momentum operator $P_I$ was chosen arbitrarily, that is, there was no correlation between the quantization map $f:\qu\mapsto f(\qu,\bar{\qu})$ and the choice of imaginary-unit element $I\in\mathbb{S}$. But when there is a correlation between them, it is possible to have the Heisenberg uncertainty with saturation.

In order to see this, define a map $\mathfrak{U}:\quat\longrightarrow\mathbb{S}$ by $\mathfrak{U}(\qu)=I_\qu$, for all $\qu=x+I_\qu y\in\quat$. Note that on a quaternionic slice the $I_{\qu}$ will remain the same and therefore, in this set up, we are working on a quaternion slice. The map $\mathfrak{U}$ is well-defined and onto. Using this map, we shall define a momentum operator as
\begin{equation}\label{mom_op}
P=-\frac{\mathfrak{U}(\qu)}{\sqrt{2}}\cdot({\mathsf{a}} -{\mathsf{a}}^\dagger )=-\frac{I_\qu}{\sqrt{2}}\cdot({\mathsf{a}} -{\mathsf{a}}^\dagger )
\end{equation}
with the momentum coordinate
\begin{equation}\label{mom_cor}
\mathit{p}=-\frac{\mathfrak{U}(\qu)}{\sqrt{2}}(\qu-\overline{\qu})=-\frac{I_\qu  }{\sqrt{2}}(\qu-\overline{\qu}).
\end{equation}
Now, as before,	$P^\dagger=P$.
That is, the momentum operator $P$ is self-adjoint. The Hamiltonian $H$ is given by $H=\mathit{p}^2+\mathit{q}^2=|\qu|^2$. Furthermore,
\begin{eqnarray*}
	QP\phi
	&=&\left[\frac{({\mathsf{a}} +{\mathsf{a}}^\dagger )}{\sqrt{2}}\right]\left[(-I_\qu)\cdot\frac{({\mathsf{a}}
		-{\mathsf{a}}^\dagger )}{\sqrt{2}}\right]\phi
	=-\frac{1}{2}I_\qu\cdot\,[{{\mathsf{a}} }^{2}+{\mathsf{a}}^\dagger {\mathsf{a}} -{\mathsf{a}} {\mathsf{a}}^\dagger -{{\mathsf{a}}^\dagger }^{2}]\phi
\end{eqnarray*}
and
\begin{eqnarray*}
	P Q\phi
	&=&\left[-I_\qu\cdot\frac{({\mathsf{a}} -{\mathsf{a}}^\dagger )}{\sqrt{2}}\right]
	\left[\frac{({\mathsf{a}} +{\mathsf{a}}^\dagger )}{\sqrt{2}}\right]\phi
	=-\frac{1}{2}I_\qu\cdot\,[{{\mathsf{a}} }^{2}-{\mathsf{a}}^\dagger {\mathsf{a}} +{\mathsf{a}} {\mathsf{a}}^\dagger -{{\mathsf{a}}^\dagger }^{2}]\phi,
\end{eqnarray*}
for all $\phi\in \mathfrak{H}$. Therefore, we have the commutator $$[Q,P]=QP-PQ=I_\qu\cdot\,[{\mathsf{a}} ,{\mathsf{a}}^\dagger ]=I_\qu\cdot\mathbb{I}_{\mathfrak{H}}.$$
We can also obtain, as before,
\begin{eqnarray*}
	Q^{2}&=&
	\frac{1}{2}\,[{{\mathsf{a}} }^{2}+{\mathsf{a}}^\dagger {\mathsf{a}} +{\mathsf{a}} {\mathsf{a}}^\dagger
	+{{\mathsf{a}}^\dagger }^{2}]\quad\text{and}\\
	P^{2}&=&
	-\frac{1}{2}\,[{{\mathsf{a}} }^{2}-{\mathsf{a}}^\dagger {\mathsf{a}} -{\mathsf{a}} {\mathsf{a}}^\dagger
	+{{\mathsf{a}}^\dagger }^{2}].
\end{eqnarray*}
Hence
\begin{eqnarray*}
	\hat{H}&=&\frac{Q^{2}+P^{2}}{2}=\frac{1}{2}[{\mathsf{a}}^\dagger {\mathsf{a}} +{\mathsf{a}} {\mathsf{a}}^\dagger ]
	={\mathsf{a}}^\dagger {\mathsf{a}} +\frac{1}{2}[{\mathsf{a}} {\mathsf{a}}^\dagger -{\mathsf{a}}^\dagger {\mathsf{a}} ]
	=N+\frac{1}{2}\mathbb{I}_{\mathfrak{H}}.
\end{eqnarray*}
Furthermore we have
\begin{eqnarray*}
	P|\gamma_{\qu}\rangle
	&=&\frac{-I_\qu}{\sqrt{2}}\cdot\left({\mathsf{a}} |\gamma_{\qu}\rangle-{{\mathsf{a}}^\dagger}. |\gamma_{\qu}\rangle\right)
\end{eqnarray*}
and
$$I_\qu\cdot \mathsf{a}=\sum_{m=0}^{\infty}\sqrt{(m+1)}\mid e_{m}\rangle I_\qu\langle e_{m+1}\mid.$$ Thus
\begin{eqnarray*}
	(I_\qu\cdot \mathsf{a})\mid\gamma_{\qu}\rangle
	&=&e^{-|\qu|^2/2}\sum_{m=0}^{\infty}\sqrt{(m+1)}\mid e_m\rangle I_\qu\sum_{n=0}^{\infty}\langle  e_{m+1}
	\vert
	e_n\rangle\frac{\qu^n}{\sqrt{n!}}  \\
	&=&e^{-|\qu|^2/2}\sum_{m=0}^{\infty}\sqrt{(m+1)}\mid e_m\rangle I_\qu\frac{\qu^{m+1}}{\sqrt{m+1!}}.
\end{eqnarray*}
Since $\qu, \oqu$ and $I_{\qu}$ commute, we have
\begin{equation}\label{H1}
	\langle\gamma_{\qu}\mid(I_\qu\cdot \mathsf{a})\mid\gamma_{\qu}\rangle
	=\left( e^{-|\qu|^2}\sum_{m=0}^{\infty} \frac{\oqu^m  \qu^{m}}{m!}\right)I_\qu\qu=I_\qu\qu~\mbox{~~as~~}\sum_{m=0}^{\infty} \frac{\oqu^m \qu^{m}}{m!}=e^{|\qu|^2}
\end{equation}	
and
\begin{equation}\label{H2}
\langle\gamma_{\qu}\mid(I_\qu\cdot {{\mathsf{a}}^\dagger} )\mid\gamma_{\qu}\rangle=\oqu I_\qu\left( e^{-|\qu|^2}\sum_{m=0}^{\infty} \frac{\oqu^m  \qu^{m}}{m!}\right)=\oqu I_\qu .
\end{equation}
Hence,
\begin{eqnarray*}
	\langle\gamma_{\qu}|P|\gamma_{\qu}\rangle&=&	\frac{1}{\sqrt{2}} [\langle\gamma_{\qu}|(I_\qu\cdot {\mathsf{a}} )|\gamma_{\qu}\rangle	-\langle\gamma_{\qu}|(I_\qu\cdot {{\mathsf{a}}^\dagger} )|\gamma_{\qu}\rangle]=\frac{I_\qu}{\sqrt{2}}( \qu-\oqu )
\end{eqnarray*}
and
\begin{eqnarray*}
	\langle\gamma_{\qu}|P|\gamma_{\qu}\rangle^2
	&=&-\frac{1}{2}( \qu-\oqu )^2
	=-\frac{1}{2}[( \qu)^2-2| \qu|^2+(\overline{ \qu})^2].
\end{eqnarray*}
Since $I_\qu^2=-1$, we have
\begin{eqnarray*}
	\langle\gamma_{\qu}|P^2|\gamma_{\qu}\rangle&=&-\frac{1}{2}[\qu^2-1-2|\qu|^2+\oqu^2].
\end{eqnarray*}
Therefore the variance of $P$ becomes
\begin{eqnarray*}
	\langle\Delta P\rangle^2&=&\langle\gamma_{\qu}|P^2|\gamma_{\qu}\rangle-\langle\gamma_{\qu}|P|\gamma_{\qu}\rangle^2\\
	&=&-\frac{1}{2}[\qu^2-1-2|\qu|^2+\oqu^2]+\frac{1}{2}[\qu^2-2| \qu|^2+\overline{ \qu}^2]\\
	&=&\frac{1}{2}.
\end{eqnarray*}
Hence
$$\langle\Delta Q\rangle^2\langle\Delta P\rangle^2=\frac{1}{4}.$$
Further, since $[Q,P]=I_\qu\cdot\mathbb{I}_{\mathfrak{H}}$, we obtain
\begin{eqnarray*}
	[Q,P]|\gamma_{\qu}\rangle&=&(I_\qu\cdot\mathbb{I}_{\mathfrak{H}})|\gamma_{\qu}\rangle
	=I_\qu\cdot(\mathbb{I}_{\mathfrak{H}}|\gamma_{\qu}\rangle)\\
	&=&|I_\qu\cdot\gamma_{\qu}\rangle.
\end{eqnarray*}
Therefore
\begin{equation}\label{H3}
\langle\gamma_{\qu}|[Q,P]|\gamma_{\qu}\rangle=\langle\gamma_{\qu}|I_\qu\cdot\gamma_{\qu}\rangle=e^{-|\qu|^2}\sum_{m=0}^{\infty} \frac{\oqu^m I_\qu \qu^{m}}{m!}=I_\qu
\end{equation}
and
$$\frac{1}{2}|\langle[Q,P]\rangle|=\frac{1}{2}|I_\qu|=\frac{1}{2}.$$
The above can be recapitulated in one line as
$$\langle\Delta Q\rangle\langle\Delta P\rangle=\frac{1}{2}|\langle[Q,P]\rangle|=\frac{1}{2}.$$
That is, the RQCS, $|\gamma_{\qu}\rangle$, saturate the Heisenberg uncertainty on a quaternion slice and, due to  $[Q,P]=I_\qu\cdot\mathbb{I}_{\mathfrak{H}}$, the RQCS are minimum uncertainty states. States for which equality in the Heisenberg uncertainty relation is achieved are called intelligent states. In this sense the QRCS are intelligent states too, which is in complete analogy with the canonical CS of the complex quantum mechanics.
\begin{remark}
From the equalities (\ref{H1}), (\ref{H2}) and (\ref{H3}), it is evident that in the general case (the case of whole set of quaternions) what have prevented us in getting the Heisenberg uncertainty relation saturated is just the closed form of the series $\displaystyle\sum_{m=0}^{\infty}\frac{\oqu^m I\qu^m}{m!}$ but not the way the momentum operator is defined.
\end{remark}

\section{Some Algebraic Structures}
In this section, we shall investigate the Weyl-Heisenberg Lie Algebra and group
representation in the quaternionic setting.\\
 First of all, as in the complex quantum mechanics, all the operators considered here are unbounded operators. However, the operators act as $\HI\ni|\phi\rangle\mapsto |\psi\rangle\in\HI$, that is, the domain and the range of the operators are dense subsets of $\HI$. Furthermore, the Hilbert space, $\HI$, can be taken as a space right-spanned by the regular functions $\{\frac{\qu^m}{m!}~~|~~m\in\mathbb{N}\}$ or anti-regular functions $\{\frac{\oqu^m}{m!}~~|~~m\in\mathbb{N}\}$ over $\quat$ (counterparts of holomorphic and anti-holomorphic functions). In this respect, the operators considered here do not have any domain problems as for the operators in the complex quantum mechanics. Therefore, we can use the operator tools of complex quantum mechanics, in particular, the Baker-Campbell-Hausdorff formula (for a complex argument along these lines see chapter 14 in \cite{Brian}).

\subsection{Some Quaternionic Lie Algebras}
Let $\tau\in\{i,j,k\}$ and define
$$\mathfrak{A}_{\mathbb{C}_\tau}=\mbox{linear span over }\mathbb{C}_\tau\,\{\mathbb{I}_{\mathfrak{H}}, \mathsf{a}, \mathsf{a}^\dagger\};$$
where $\mathbb{C}_\tau=\{x=x_1+\tau x_2~|~x_1, x_2\in\mathbb{R}\}$. Then $\mathfrak{A}_{\mathbb{C}_\tau}$ is a vector space over $\mathbb{C}_\tau$. Define $$[\cdot,\cdot]_\tau:\mathfrak{A}_{\mathbb{C}_\tau}\times\mathfrak{A}_{\mathbb{C}_\tau}
\longrightarrow\mathfrak{A}_{\mathbb{C}_\tau} \quad{\text{by}}\quad
[\mathcal{A}, \mathcal{B}]_\tau=\mathcal{A} \mathcal{B}-\mathcal{B} \mathcal{A},\mbox{~~for all~~}\mathcal{A}, \mathcal{B}\in\mathfrak{A}_{\mathbb{C}_\tau}.$$
One can easily see that the bracket $[\cdot, \cdot]$ satisfies the following axioms:
\begin{itemize}
	\item[(a)] \textit{Bilinearity}: for all $x,y\in\mathbb{R}$ and $\mathcal{A},\mathcal{B},\mathcal{C}\in\mathfrak{A}_{\mathbb{C}_\tau}$, $$[x\mathcal{A}+y\mathcal{B},\mathcal{C}]_\tau=x[\mathcal{A},\mathcal{C}]_\tau+y[\mathcal{B},\mathcal{C}]_\tau
\quad{\text{and}}\quad [\mathcal{A},x\mathcal{B}+y\mathcal{C}]_\tau=x[\mathcal{A},\mathcal{B}]_\tau+y[\mathcal{A},\mathcal{C}]_\tau.$$
	\item [(b)]\textit{Alternativity}: $[\mathcal{A},\mathcal{A}]_\tau=0$, for all  $\mathcal{A}\in\mathfrak{A}_{\mathbb{C}_\tau}$.
	\item [(c)] \textit{The Jacobi identity}: for all  $\mathcal{A},\mathcal{B},\mathcal{C}\in\mathfrak{A}_{\mathbb{C}_\tau}$. $$[\mathcal{A},[\mathcal{B},\mathcal{C}]_\tau]_\tau+[\mathcal{C},
[\mathcal{A},\mathcal{B}]_\tau]_\tau+[\mathcal{B},[\mathcal{C},\mathcal{A}]_\tau]_\tau=0.$$
	\item [(d)] \textit{Anti-commutativity}: $[\mathcal{A},\mathcal{B}]_\tau=-[\mathcal{B},\mathcal{A}]_\tau$, for all  $\mathcal{A},\mathcal{B}\in\mathfrak{A}_{\mathbb{C}_\tau}$.
\end{itemize}
Let $\mathcal{A}, \mathcal{B}\in\mathfrak{A}_{\mathbb{C}_\tau}$, then there exists $a,b,c,x,y,z\in\mathbb{C}_\tau$ such that
\begin{center}
	$\mathcal{A}=a\cdot\mathbb{I}_{\mathfrak{H}}+ b\cdot\mathsf{a}+  c\cdot \mathsf{a}^\dagger$	and $\mathcal{B}=x\cdot\mathbb{I}_{\mathfrak{H}}+ y\cdot\mathsf{a}+  z\cdot \mathsf{a}^\dagger$.
\end{center}
Then
\begin{center}

	\begin{tabular}{ c c c c c c c }
		$[\mathcal{A},\mathcal{B}]_\tau$	& $=$ & $ax\cdot[\mathbb{I}_{\mathfrak{H}},\mathbb{I}_{\mathfrak{H}}]_\tau$ & $+$ & $ay\cdot[\mathbb{I}_{\mathfrak{H}},\mathsf{a}]_\tau$ & $+$ & $az\cdot[\mathbb{I}_{\mathfrak{H}},\mathsf{a}^\dagger]_\tau$ \\\\
		~ & $+$ & $bx\cdot[\mathsf{a},\mathbb{I}_{\mathfrak{H}}]_\tau$ & $+$ & $by\cdot[\mathsf{a},\mathsf{a}]_\tau$ & $+$ & $bz\cdot[\mathsf{a},\mathsf{a}^\dagger]_\tau$ \\\\
		~ & $+$ & $cx\cdot[\mathsf{a}^\dagger,\mathbb{I}_{\mathfrak{H}}]_\tau$ & $+$ & $cy\cdot[\mathsf{a}^\dagger,\mathsf{a}]_\tau$ & $+$ & $cz\cdot[\mathsf{a}^\dagger,\mathsf{a}^\dagger]_\tau$.
	\end{tabular}
\end{center}
But $[\mathbb{I}_{\mathfrak{H}},\mathbb{I}_{\mathfrak{H}}]_\tau=[\mathsf{a},\mathsf{a}]_\tau=[\mathsf{a}^\dagger,\mathsf{a}^\dagger]_\tau=0$ and with the aid of Proposition \ref{xAq}, we can obtain that
$$[\mathbb{I}_{\mathfrak{H}},\mathsf{a}]_\tau=\mathbb{I}_{\mathfrak{H}}\mathsf{a}-\mathsf{a}\,\mathbb{I}_{\mathfrak{H}}=(\mathbb{I}_{\mathfrak{H}}\mathsf{a}-\mathsf{a}\mathbb{I}_{\mathfrak{H}})=[\mathbb{I}_{\mathfrak{H}},\mathsf{a}]_\tau=0.$$ Similarly $[\mathbb{I}_{\mathfrak{H}},\mathsf{a}^\dagger]_\tau=[\mathsf{a},\mathbb{I}_{\mathfrak{H}}]_\tau
=[\mathsf{a}^\dagger,\mathbb{I}_{\mathfrak{H}}]_\tau=0.$
Thus $[\mathcal{A},\mathcal{B}]_\tau=(bz-cy)\cdot\mathbb{I}_{\mathfrak{H}}\in\mathfrak{A}_{\mathbb{C}_\tau}$ and $[\cdot,\cdot]_\tau$ is a binary operation on $\mathfrak{A}_{\mathbb{C}_\tau}$. Hence $\mathfrak{A}_{\mathbb{C}_\tau}$ is a Lie algebra with the Lie bracket $[\cdot,\cdot]_\tau$. But it cannot be the complete version of a quaternionic Weyl-Heisenberg Lie algebra. It is a sub case of the quaternionic Weyl-Heisenberg Lie algebra, since $\mathfrak{A}_{\mathbb{C}_\tau}$ involves a single $\tau\in\{i,j,k\}$ at a time. Little more generally, for each $I\in\mathbb{S}$, we have
$$\mathfrak{A}_{\mathbb{C}_I}=\mbox{linear span over }\mathbb{C}_I\,\{I\cdot\mathbb{I}_{\mathfrak{H}}, Q, P_I\};$$
where $\mathbb{C}_I=\{x=x_1+I x_2~|~x_1, x_2\in\mathbb{R}\}$, is a Lie algebra associated with the Lie bracket $[\cdot, \cdot]_I$, where
$$[\mathcal{A}, \mathcal{B}]_I=\mathcal{A} \mathcal{B}-\mathcal{B} \mathcal{A},\mbox{~~for all~~}\mathcal{A}, \mathcal{B}\in\mathfrak{A}_{\mathbb{C}_I}.$$
\subsection{Quaternionic Weyl-Heisenberg Lie Algebra}
In this section we shall investigate the complete version of a quaternionic Weyl-Heisenberg Lie algebra. Only for notational convenience, in order to write a quaternion as $\qu=q_0+\sum_{\tau=i,j,k}q_{\tau}\tau$, we shall write $\qu=q_0+q_ii+q_jj+q_kk$ with $q_0,q_i,q_j,q_k\in\mathbb{R}$. At times, we shall also write $(q_0,q_i, q_j, q_k)\in\quat$. All these notations bear the same meaning $\qu=q_0+q_ii+q_jj+q_kk$ with $q_0,q_i,q_j,q_k\in\mathbb{R}$.\\
Let
\begin{equation}\label{X1}
\mathfrak{A}=\mbox{linear span over }\mathbb{R}\,\{\tau\cdot\mathbb{I}_{\mathfrak{H}}, Q, P_\tau~\mid~\tau=i,j,k\}.
\end{equation}
Then obviously $\mathfrak{A}$ is a vector space over $\mathbb{R}$. Define $[\cdot,\cdot]_\sigma:\mathfrak{A}\times\mathfrak{A}\longrightarrow\mathfrak{A}$ by
\begin{equation}\label{liebra}
[\mathcal{A},\mathcal{B}]_\sigma=\sum_{\tau=i,j,k}[\mathcal{A}_\tau,\mathcal{B}_\tau]_\tau;
\end{equation}
where for each $\tau=i,j,k$, $\mathcal{A}_\tau=x_\tau\tau\cdot\mathbb{I}_{\mathfrak{H}}+\dfrac{y}{\sqrt{3}} Q+z_\tau P_\tau$ and $\mathcal{B}_\tau=r_\tau\tau\cdot\mathbb{I}_{\mathfrak{H}}+\dfrac{s}{\sqrt{3}} Q+t_\tau P_\tau$ with $x_\tau,y,z_\tau,r_\tau,s,t_\tau\in\mathbb{R}$, and  $\displaystyle\mathcal{A}=x_\tau\tau\cdot\mathbb{I}_{\mathfrak{H}}+y\, Q+z_\tau P_\tau$,  $\displaystyle\mathcal{B}=r_\tau\tau\cdot\mathbb{I}_{\mathfrak{H}}+s\, Q+t_\tau P_\tau.$ Let $(\mathcal{A},\mathcal{B}),(\mathcal{C},\mathcal{D})\in\mathfrak{A}\times\mathfrak{A}$ with $(\mathcal{A},\mathcal{B})=(\mathcal{C},\mathcal{D})$. Then $\mathcal{A}=\mathcal{C}$ and $\mathcal{B}=\mathcal{D}$. But
\begin{equation}\label{eqA}
\mathcal{A}=\sum_{\tau=i,j,k}x_\tau\tau\cdot\mathbb{I}_{\mathfrak{H}}+yQ+\sum_{\tau=i,j,k}z_\tau P_{\tau},
\end{equation}

\begin{equation}\label{eqB}
\mathcal{B}=\sum_{\tau=i,j,k}r_\tau\tau\cdot\mathbb{I}_{\mathfrak{H}}+sQ+\sum_{\tau=i,j,k}t_\tau P_{\tau},
\end{equation}
\begin{equation}\label{eqC}
\mathcal{C}=\sum_{\tau=i,j,k}a_\tau\tau\cdot\mathbb{I}_{\mathfrak{H}}+bQ+\sum_{\tau=i,j,k}c_\tau P_{\tau}
\end{equation}
and
\begin{equation}\label{eqD}
\mathcal{D}=\sum_{\tau=i,j,k}l_\tau\tau\cdot\mathbb{I}_{\mathfrak{H}}+mQ+\sum_{\tau=i,j,k}n_\tau P_{\tau},
\end{equation}
for some $x_\tau,y,z_\tau,r_\tau,s,t_\tau,a_\tau,b,c_\tau,l_\tau,m,n_\tau\in\mathbb{R}:\tau=i,j,k$. Thus, for each $\tau=i,j,k$, $x_\tau=a_\tau,y=b,z_\tau=c_\tau,r_\tau=l_\tau,s=m$ and $t_\tau=n_\tau$. This implies that for each $\tau=i,j,k$, $\mathcal{A}_\tau=\mathcal{C}_\tau$ and $\mathcal{B}_\tau=\mathcal{D}_\tau$  for any $\tau=i,j,k$, $\mathcal{A}_\tau=x_\tau\tau\cdot\mathbb{I}_{\mathfrak{H}}+\dfrac{y}{\sqrt{3}} Q+z_\tau P_\tau$, $\mathcal{B}_\tau=r_\tau\tau\cdot\mathbb{I}_{\mathfrak{H}}+\dfrac{s}{\sqrt{3}} Q+t_\tau P_\tau$,
$\mathcal{C}_\tau=a_\tau\tau\cdot\mathbb{I}_{\mathfrak{H}}+\dfrac{b}{\sqrt{3}} Q+c_\tau P_\tau$ and $\mathcal{D}_\tau=l_\tau\tau\cdot\mathbb{I}_{\mathfrak{H}}+\dfrac{m}{\sqrt{3}} Q+n_\tau P_\tau$. Since for each $\tau=i,j,k$, $[\cdot,\cdot]_\tau$ is well-defined, we have $[\mathcal{A}_\tau,\mathcal{B}_\tau]_\tau=[\mathcal{C}_\tau,\mathcal{D}_\tau]_\tau$, for all $\tau=i,j,k$. Hence $[\mathcal{A},\mathcal{B}]_\sigma=[\mathcal{C},\mathcal{D}]_\sigma$.
Further, $$[\mathcal{A},\mathcal{B}]_\sigma=\sum_{\tau=i,j,k}[\mathcal{A}_\tau,\mathcal{B}_\tau]_\tau=\sum_{\tau=i,j,k}\dfrac{1}{\sqrt{3}}(yt_\tau-z_\tau s)\tau\cdot\mathbb{I}_{\mathfrak{H}}\in\mathfrak{A}.$$
Therefore $[\cdot,\cdot]_\sigma$ is well-defined. For each $\tau=i,j,k$, $[\cdot,\cdot]_\tau$ satisfies the axioms (a)-(d). Therefore, for any $\displaystyle\mathcal{A}$, $\displaystyle\mathcal{B}$, $\displaystyle\mathcal{C}\in\mathfrak{A}$ as in (\ref{eqA})-(\ref{eqC}) and $x,y\in\mathbb{R}$ we have $$[x\mathcal{A}+y\mathcal{B},\mathcal{C}]_\sigma=
\sum_{\tau=i,j,k}[x\mathcal{A}_\tau+y\mathcal{B}_\tau,\mathcal{C}_\tau]_\tau
=\sum_{\tau=i,j,k}x[\mathcal{A}_\tau,\mathcal{C}_\tau]_\tau+\sum_{\tau
=i,j,k}y[\mathcal{B}_\tau,\mathcal{C}_\tau]_\tau
=x[\mathcal{A},\mathcal{C}]_\sigma+y[\mathcal{B},\mathcal{C}]_\sigma$$ and similarly we can obtain  $$[\mathcal{A},x\mathcal{B}+y\mathcal{C}]_\sigma=x[\mathcal{A},\mathcal{B}]_\sigma+y[\mathcal{A},\mathcal{C}]_\sigma.$$ This shows the bilinearity of $[\cdot,\cdot]_{\sigma}$.
Let $\displaystyle\mathcal{A}\in\mathfrak{A}$ as in (\ref{eqA}). Since $[\mathcal{A}_\tau,\mathcal{A}_\tau]_\tau=0$, for all $\tau=i,j,k$ we have $[\mathcal{A},\mathcal{A}]_\sigma=\displaystyle\sum_{\tau=i,j,k}[\mathcal{A}_\tau,\mathcal{A}_\tau]_\tau=0$ and hence the alternativity of $[\cdot,\cdot]_{\sigma}$ follows. Let $\displaystyle\mathcal{A}$, $\displaystyle\mathcal{B}$, $\displaystyle\mathcal{C}\in\mathfrak{A}$ as in (\ref{eqA})-(\ref{eqC}).
Now using the Jacobian identities of $[\cdot,\cdot]_\tau:\tau=i,j,k$, we have
$$\sum_{\tau=i,j,k}\left[ \mathcal{A}_\tau,\left[ \mathcal{B}_\tau,\mathcal{C}_\tau\right]_\tau \right]_\tau=-\sum_{\tau=i,j,k} ([\mathcal{C}_\tau,[\mathcal{A}_\tau,\mathcal{B}_\tau]_\tau]_\tau+
[\mathcal{B}_\tau,[\mathcal{C}_\tau,\mathcal{A}_\tau]_\tau]_\tau)$$ which implies
$$[\mathcal{A},[\mathcal{B},\mathcal{C}]_\sigma]_\sigma
=-[\mathcal{C},[\mathcal{A},\mathcal{B}]_\sigma]_\sigma-[\mathcal{B},[\mathcal{C},\mathcal{A}]_\sigma]_\sigma.$$ This concludes the Jacobian identity of $[\cdot,\cdot]_{\sigma}$. Let $\displaystyle\mathcal{A}$, $\displaystyle\mathcal{B}\in\mathfrak{A}$ as in (\ref{eqA}) and (\ref{eqB}). The anti-commutativity of $[\cdot,\cdot]_\tau:\tau=i,j,k$ gives us the anti-commutativity of $[\cdot,\cdot]_{\sigma}$, that is,
$$[\mathcal{A},\mathcal{B}]_\sigma=\sum_{\tau=i,j,k}[\mathcal{A}_\tau,\mathcal{B}_\tau]_\tau
=-\sum_{\tau=i,j,k}[\mathcal{B}_\tau,\mathcal{A}_\tau]_\tau=-[\mathcal{B},\mathcal{A}]_\sigma.$$
 Therefore $\mathfrak{A}$ is a Lie algebra with the Lie bracket $[\cdot,\cdot]_{\sigma}$. We call this Lie algebra a Weyl-Heisenberg Lie Algebra.\\

But unfortunately $\mathsf{a},\mathsf{a}^\dagger\notin\mathfrak{A}$, this prevents us to use the commutator  $[\mathsf{a},\mathsf{a}^\dagger]_{\sigma}$ legitimately. In order to avoid this difficulty, we have to find a bigger Lie-algebra which can hold $\mathsf{a},\mathsf{a}^\dagger$ as elements and the Lie-algebra $\mathfrak{A}$ should become a sub Lie-algebra of it. Let $$\mathfrak{A}_{\quat}=\mbox{linear span over }\,\quat\,\{\mathbb{I}_{\mathfrak{H}}, \mathsf{a}, \mathsf{a}^\dagger\}.$$
Then the Proposition \ref{lft_mul} guarantees, together with the Remark \ref{Rem123}, that  $\mathfrak{A}_{\quat}$ is a vector space over $\quat$, and it contains $\mathfrak{A}$, under the left multiplication \textquoteleft$\cdot$\textquoteright \,which is defined in (\ref{lft_mul-op}). The extended map of the map (\ref{liebra}), $[\cdot,\cdot]:\mathfrak{A}_{\quat}\times\mathfrak{A}_{\quat}\longrightarrow\mathfrak{A}_{\quat},$ can be defined by
\begin{equation}\label{liebraH}
[\mathcal{A},\mathcal{B}]=\sum_{\tau=i,j,k}[\mathcal{A}_\tau,\mathcal{B}_\tau]_\tau,
\end{equation}
where $\mathcal{A}=\mathfrak{x}\cdot\mathbb{I}_{\mathfrak{H}}+\mathfrak{y}\cdot\mathsf{a}
+\mathfrak{z}\cdot\mathsf{a}^\dagger$, $\mathcal{B}=\mathfrak{r}\cdot\mathbb{I}_{\mathfrak{H}}+\mathfrak{s}\cdot\mathsf{a}
+\mathfrak{t}\cdot\mathsf{a}^\dagger$ with $\mathfrak{x}=(x_0,x_i,x_j,x_k),\mathfrak{y}=(y_0,y_i,y_j,y_k),\mathfrak{z}
=(z_0,z_i,z_j,z_k),\mathfrak{r}=(r_0,r_i,r_j,r_k),
\mathfrak{s}=(s_0,s_i,s_j,s_k),\mathfrak{t}=(t_0,t_i,t_j,t_k)\in\quat$. For each $\tau=i,j,k$, we can write
\begin{eqnarray*}
\mathcal{A}_\tau &=&(\dfrac{x_0}{\sqrt{3}}+x_\tau\,\tau)\cdot\mathbb{I}_{\mathfrak{H}}
+(\dfrac{y_0}{\sqrt{3}}+y_\tau\,\tau)\cdot \mathsf{a}+(\dfrac{z_0}{\sqrt{3}}+z_\tau\,\tau)\cdot \mathsf{a}^\dagger\quad{\text{ and}} \\ \mathcal{B}_\tau &=&(\dfrac{r_0}{\sqrt{3}}+r_\tau\,\tau)\cdot\mathbb{I}_{\mathfrak{H}}
+(\dfrac{s_0}{\sqrt{3}}+s_\tau\,\tau)\cdot \mathsf{a}+(\dfrac{t_0}{\sqrt{3}}+t_\tau\,\tau)\cdot \mathsf{a}^\dagger.
\end{eqnarray*}
Then, for any $\tau=i,j,k$, we have
\begin{eqnarray*}
	[\mathcal{A}_\tau,\mathcal{B}_\tau]_\tau
	&=&[(\dfrac{y_0}{\sqrt{3}}+y_\tau\,\tau)\cdot \mathsf{a},(\dfrac{t_0}{\sqrt{3}}+t_\tau\,\tau)\cdot \mathsf{a}^\dagger]_\tau+[(\dfrac{z_0}{\sqrt{3}}+z_\tau\,\tau)\cdot \mathsf{a}^\dagger,(\dfrac{s_0}{\sqrt{3}}+s_\tau\,\tau)\cdot \mathsf{a}]_\tau\\ &=&\left[(\dfrac{y_0}{\sqrt{3}}+y_\tau\,\tau)(\dfrac{t_0}{\sqrt{3}}+t_\tau\,\tau)-(\dfrac{z_0}{\sqrt{3}}+z_\tau\,\tau)(\dfrac{s_0}{\sqrt{3}}+s_\tau\,\tau) \right]\cdot\mathbb{I}_{\mathfrak{H}}\mbox{~~as~~}[\mathsf{a},\mathsf{a}^\dagger]=\mathbb{I}_{\mathfrak{H}}.
\end{eqnarray*}
This shows $$[\mathcal{A},\mathcal{B}]=\left( \sum_{\tau=i,j,k}\left[(\dfrac{y_0}{\sqrt{3}}+y_\tau\,\tau)(\dfrac{t_0}{\sqrt{3}}+t_\tau\,\tau)-(\dfrac{z_0}{\sqrt{3}}+z_\tau\,\tau)(\dfrac{s_0}{\sqrt{3}}+s_\tau\,\tau) \right]\right)\cdot\mathbb{I}_{\mathfrak{H}}\in\mathfrak{A}_{\quat}.$$
Further, since $\mathfrak{A}_\tau$ is a Lie-algebra with the Lie-bracket $[\cdot,\cdot]_\tau$, we can see that the bracket $[\cdot,\cdot]$ satisfies the axioms: bilinearity, alternativity, the Jacobi identity, and the anti-commutativity. Hence $\mathfrak{A}_{\quat}$ is a Lie-algebra with the Lie-bracket $[\cdot,\cdot]$. Let $\displaystyle\mathcal{A}$, $\displaystyle\mathcal{B}\in\mathfrak{A}$ as in (\ref{eqA}) and (\ref{eqB}). Then
$$\mathcal{A}=(x_i+x_j+x_k)\cdot\mathbb{I}_{\mathfrak{H}}+\dfrac{1}{\sqrt{2}}(y-z_i-z_j-z_k)\cdot \mathsf{a}+\dfrac{1}{\sqrt{2}}(y+z_i+z_j+z_k)\cdot \mathsf{a}^\dagger$$
and $$
\mathcal{B}=(r_i+r_j+r_k)\cdot\mathbb{I}_{\mathfrak{H}}+\dfrac{1}{\sqrt{2}}(s-t_i-t_j-t_k)\cdot \mathsf{a}+\dfrac{1}{\sqrt{2}}(s+t_i+t_j+ t_k)\cdot \mathsf{a}^\dagger.$$
Thus \begin{eqnarray*}
	[\mathcal{A},\mathcal{B}]
	&=& \sum_{\tau=i,j,k}\left[\dfrac{1}{\sqrt{2}}(\dfrac{y}{\sqrt{3}}-z_\tau\,\tau)\dfrac{1}{\sqrt{2}}(\dfrac{s}{\sqrt{3}}+t_\tau\,\tau)-\dfrac{1}{\sqrt{2}}(\dfrac{y}{\sqrt{3}}+z_\tau\,\tau)\dfrac{1}{\sqrt{2}}(\dfrac{s}{\sqrt{3}}-t_\tau\,\tau) \right]\\
	&=&\sum_{\tau=i,j,k}\dfrac{1}{\sqrt{3}}(yt_\tau-z_\tau s)\cdot\mathbb{I}_{\mathfrak{H}}=[\mathcal{A},\mathcal{B}]_\sigma\in\mathfrak{A}_{\quat}.
\end{eqnarray*}
Hence, two things became clear: one is that $\mathfrak{A}$ is a sub Lie-algebra of $\mathfrak{A}_{\quat}$, and other one is that $[\cdot,\cdot]$ is an extended map of $[\cdot,\cdot]_\sigma$.

For $\tau\in\{i,j,k\}$, take
 \begin{equation}\label{X1tau}
 \mathfrak{A}_\tau=\mbox{linear span over }\mathbb{R}\,\{\tau\cdot\mathbb{I}_{\mathfrak{H}}, Q, P_\tau\}.
 \end{equation}
Then $\mathfrak{A}_\tau$ is a vector space over $\mathbb{R}$ and a subspace of $\mathfrak{A}_{\mathbb{C}_\tau}$. $\mathfrak{A}_\tau$ is also a sub Lie algebra of $\mathfrak{A}_{\mathbb{C}_\tau}$ with the Lie bracket $[\cdot,\cdot]_\tau$.

The following Proposition shows us that the Lie algebra $\mathfrak{A}$ defined in (\ref{X1}) is embedded in the direct sum $\displaystyle\bigoplus_{\tau=i,j,k}\mathfrak{A}_\tau$. That is, the Lie algebra $\mathfrak{A}$ has a similar Lie structure of $\displaystyle\bigoplus_{\tau=i,j,k}\mathfrak{A}_\tau$. Indeed, this gives more meaning and validation to the definition of the Lie bracket $[\cdot,\cdot]_{\sigma}$ given in Eq. (\ref{liebra}).
\begin{proposition}
Let $\mathfrak{A}$, $\mathfrak{A}_\tau$ as in (\ref{X1}), (\ref{X1tau}) respectively.
	The map $\displaystyle\sigma:\mathfrak{A}\longrightarrow\bigoplus_{\tau=i,j,k}\mathfrak{A}_\tau$, defined by
	$$\sigma(\mathcal{A})=(\mathcal{A}_\tau),~~\mbox{for all}~~\mathcal{A}=\sum_{\tau=i,j,k}\mathcal{A}_\tau\in\mathfrak{A},$$
	where  $\displaystyle\mathcal{A}=\sum_{\tau=i,j,k}x_\tau\tau\cdot\mathbb{I}_{\mathfrak{H}}+yQ+\sum_{\tau=i,j,k}z_\tau P_{\tau}$ with $x_\tau,y,z_\tau\in\mathbb{R}~~\mbox{and for each}~~\tau=i,j,k,~~
 \mathcal{A}_\tau=x_\tau\tau\cdot\mathbb{I}_{\mathfrak{H}}+\dfrac{y}{\sqrt{3}} Q+z_\tau P_\tau$, is an embedding.
\end{proposition}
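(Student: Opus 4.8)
The plan is to check that $\sigma$ has the four properties that together make it a Lie algebra embedding: it is well defined as a map into $\bigoplus_{\tau=i,j,k}\mathfrak{A}_\tau$, it is $\mathbb{R}$-linear, it is injective, and it intertwines $[\cdot,\cdot]_\sigma$ with the componentwise bracket on the direct sum, namely $[(\mathcal{A}_\tau),(\mathcal{B}_\tau)]=([\mathcal{A}_\tau,\mathcal{B}_\tau]_\tau)$, which is a Lie bracket since each $\mathfrak{A}_\tau$ is a Lie algebra. Most of this is bookkeeping; the one step that requires a genuine argument is linear independence of the generators of $\mathfrak{A}$, and the one step that requires care is the compatibility with the brackets.

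First I would prove that the seven operators $\{\tau\cdot\mathbb{I}_{\mathfrak{H}}:\tau=i,j,k\}\cup\{Q\}\cup\{P_\tau:\tau=i,j,k\}$ are $\mathbb{R}$-linearly independent, which is what makes $\sigma$ well defined: the scalars $x_\tau,y,z_\tau$ in $\mathcal{A}=\sum_\tau x_\tau\tau\cdot\mathbb{I}_{\mathfrak{H}}+yQ+\sum_\tau z_\tau P_\tau$ are then uniquely determined by $\mathcal{A}$, and hence so is each $\mathcal{A}_\tau=x_\tau\tau\cdot\mathbb{I}_{\mathfrak{H}}+\tfrac{y}{\sqrt3}Q+z_\tau P_\tau\in\mathfrak{A}_\tau$. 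To prove the independence I would evaluate a hypothetical vanishing real combination $\sum_\tau a_\tau\tau\cdot\mathbb{I}_{\mathfrak{H}}+bQ+\sum_\tau c_\tau P_\tau$ on the vacuum vector $|e_0\rangle$: using $\tau\cdot\mathbb{I}_{\mathfrak{H}}|e_0\rangle=|e_0\rangle\tau$, $Q|e_0\rangle=\tfrac{1}{\sqrt2}|e_1\rangle$ and $P_\tau|e_0\rangle=|e_1\rangle\tfrac{\tau}{\sqrt2}$ (which follow from $\mathsf{a}|e_0\rangle=0$, $\mathsf{a}^\dagger|e_0\rangle=|e_1\rangle$ and Proposition \ref{lft_mul}(f)), one gets $|e_0\rangle(a_ii+a_jj+a_kk)+|e_1\rangle\tfrac{1}{\sqrt2}(b+c_ii+c_jj+c_kk)=0$; orthogonality of $|e_0\rangle,|e_1\rangle$ together with the fact that $|e_0\rangle\mathfrak q=0$ forces $\mathfrak q=0$ then give $a_ii+a_jj+a_kk=0$ and $b+c_ii+c_jj+c_kk=0$, so all seven scalars vanish. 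Granting well-definedness, $\mathbb{R}$-linearity is immediate, because the passages $\mathcal{A}\mapsto(x_\tau,y,z_\tau)\mapsto(\mathcal{A}_\tau)$ are linear in each slot; and injectivity follows since $\sigma(\mathcal{A})=0$ forces the $Q$-coefficient $\tfrac{y}{\sqrt3}$ and all the $\tau\cdot\mathbb{I}_{\mathfrak{H}}$- and $P_\tau$-coefficients to vanish, whence $\mathcal{A}=0$.

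It remains to verify $\sigma([\mathcal{A},\mathcal{B}]_\sigma)=([\mathcal{A}_\tau,\mathcal{B}_\tau]_\tau)_\tau$. Here I would use the computation already carried out above, $[\mathcal{A}_\tau,\mathcal{B}_\tau]_\tau=\tfrac{1}{\sqrt3}(yt_\tau-z_\tau s)\,\tau\cdot\mathbb{I}_{\mathfrak{H}}$, so that $[\mathcal{A},\mathcal{B}]_\sigma=\sum_\tau\tfrac{1}{\sqrt3}(yt_\tau-z_\tau s)\,\tau\cdot\mathbb{I}_{\mathfrak{H}}$. The decisive observation — and the place where care is needed — is that this element lies entirely in the $\mathbb{R}$-span of $\{\tau\cdot\mathbb{I}_{\mathfrak{H}}\}$, so in the representation underlying the definition of $\sigma$ its $Q$-coefficient and all its $P_\tau$-coefficients are zero; consequently the $\tfrac{y}{\sqrt3}$-splitting of the $Q$ variable never has to be undone, and the $\tau$-component of $[\mathcal{A},\mathcal{B}]_\sigma$ is exactly $\tfrac{1}{\sqrt3}(yt_\tau-z_\tau s)\,\tau\cdot\mathbb{I}_{\mathfrak{H}}=[\mathcal{A}_\tau,\mathcal{B}_\tau]_\tau$. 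This gives $\sigma([\mathcal{A},\mathcal{B}]_\sigma)=([\mathcal{A}_\tau,\mathcal{B}_\tau]_\tau)_\tau=[\sigma(\mathcal{A}),\sigma(\mathcal{B})]$, and $\sigma$ is an embedding. I expect the main (mild) obstacle to be precisely this last point: confirming that the bracket always lands in the centre spanned by the $\tau\cdot\mathbb{I}_{\mathfrak{H}}$, so that $\sigma$ genuinely commutes with it; the linear-independence step and the linearity are routine once the action of $Q,P_\tau,\tau\cdot\mathbb{I}_{\mathfrak{H}}$ on $|e_0\rangle$ is written down.
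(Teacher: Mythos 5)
Your proof is correct and follows essentially the same route as the paper: decompose $\mathcal{A}$ into its $\tau$-components, use the already-computed brackets $[\mathcal{A}_\tau,\mathcal{B}_\tau]_\tau=\tfrac{1}{\sqrt{3}}(yt_\tau-z_\tau s)\,\tau\cdot\mathbb{I}_{\mathfrak{H}}$ to see that $\sigma$ intertwines $[\cdot,\cdot]_\sigma$ with the componentwise bracket on $\bigoplus_{\tau}\mathfrak{A}_\tau$, and conclude from injectivity that $\sigma$ is an embedding. The one genuine difference is that you explicitly establish the $\mathbb{R}$-linear independence of the seven generators $\{\tau\cdot\mathbb{I}_{\mathfrak{H}},Q,P_\tau\}$ by evaluating a vanishing combination on $|e_0\rangle$ and using $\tau\cdot|e_0\rangle=|e_0\rangle\tau$, $Q|e_0\rangle=\tfrac{1}{\sqrt{2}}|e_1\rangle$, $P_\tau|e_0\rangle=|e_1\rangle\tfrac{\tau}{\sqrt{2}}$; the paper simply asserts that equality of elements forces equality of coefficients (``it is easy to see that $\mathcal{A}_\tau=\mathcal{B}_\tau$''), so your argument supplies the justification for the well-definedness of $\sigma$ (and, in fact, of the bracket $[\cdot,\cdot]_\sigma$ itself) that the paper leaves implicit. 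Conversely, the paper also verifies explicitly that $\sigma^{-1}$ on the image is a Lie homomorphism, a step you omit; this is harmless, since the inverse of an injective Lie homomorphism onto its image automatically preserves brackets, so your shorter argument loses nothing. Your observation that the bracket always lands in the span of the $\tau\cdot\mathbb{I}_{\mathfrak{H}}$, so that applying $\sigma$ to $[\mathcal{A},\mathcal{B}]_\sigma$ is unambiguous, is exactly the point that makes the homomorphism check in the paper go through.
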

\begin{proof}
	Let $\displaystyle\mathcal{A},\mathcal{B}\in\mathfrak{A}$ as in (\ref{eqA}),(\ref{eqB}), and for each $\tau=i,j,k,$, $\mathcal{A}_\tau=x_\tau\tau\cdot\mathbb{I}_{\mathfrak{H}}+\dfrac{y}{\sqrt{3}} Q+z_\tau P_\tau$, $\mathcal{B}_\tau=r_\tau\tau\cdot\mathbb{I}_{\mathfrak{H}}+\dfrac{s}{\sqrt{3}} Q+t_\tau P_\tau$ with $x_\tau,y,z_\tau,r_\tau,s,t_\tau\in\mathbb{R}$. If $\mathcal{A}=\mathcal{B}$, then it is easy to see that $\mathcal{A}_\tau=\mathcal{B}_\tau$, for all $\tau=i,j,k,$. This implies that $\sigma(\mathcal{A})=(\mathcal{A}_\tau)=(\mathcal{B}_\tau)=\sigma(\mathcal{B}).$
On the other hand, we can also obtain that, if $\sigma(\mathcal{A})=(\mathcal{A}_\tau)=(\mathcal{B}_\tau)=\sigma(\mathcal{B})$, then $\mathcal{A}=\mathcal{B}$. Further, one can trivially say that $\displaystyle\sigma(\mathfrak{A})\subseteq\bigoplus_{\tau=i,j,k}\mathfrak{A}_\tau$. Thus $\sigma$ is well-defined and injective. Therefore the inverse of $\sigma$, $\displaystyle\sigma^{-1}:\sigma(\mathfrak{A})\longrightarrow\mathfrak{A}$ exists and it is injective. Now we shall show that $\sigma,~\sigma^{-1}$ are Lie homomorphism. For, let $\displaystyle\mathcal{A},\mathcal{B}\in\mathfrak{A}$  as in (\ref{eqA}),(\ref{eqB}) and for each $\tau=i,j,k,$  $\mathcal{A}_\tau=x_\tau\tau\cdot\mathbb{I}_{\mathfrak{H}}+\dfrac{y}{\sqrt{3}} Q+z_\tau P_\tau$, $\mathcal{B}_\tau=r_\tau\tau\cdot\mathbb{I}_{\mathfrak{H}}+\dfrac{s}{\sqrt{3}} Q+t_\tau P_\tau$ with $x_\tau,y,z_\tau,r_\tau,s,t_\tau\in\mathbb{R}$, then $$[\mathcal{A},\mathcal{B}]=\sum_{\tau=i,j,k}[\mathcal{A}_\tau,\mathcal{B}_\tau]_\tau=\sum_{\tau=i,j,k}\dfrac{1}{\sqrt{3}}(yt_\tau-z_\tau s)\tau\cdot\mathbb{I}_{\mathfrak{H}}$$
	and this implies that $\sigma([\mathcal{A},\mathcal{B}])=([\mathcal{A}_\tau,\mathcal{B}_\tau]_\tau)$. But $\sigma(\mathcal{A})=(\mathcal{A}_\tau)$ and $\sigma(\mathcal{B})=(\mathcal{B}_\tau)$. So $[\sigma(\mathcal{A}),\sigma(\mathcal{B})]_\oplus=([\mathcal{A}_\tau,\mathcal{B}_\tau]_\tau)$, where $[\cdot,\cdot]_\oplus$, is the typical Lie bracket of
	$\displaystyle\bigoplus_{\tau=i,j,k}\mathfrak{A}_\tau$, and it is defined by $[\mathcal{X},\mathcal{Y}]_\oplus=([\mathcal{X}_\tau,\mathcal{Y}_\tau]_\tau)$, for all $\mathcal{X}=(\mathcal{X}_\tau), \mathcal{Y}=(\mathcal{Y}_\tau)\in\displaystyle\bigoplus_{\tau=i,j,k}\mathfrak{A}_\tau$. Therefore $$\sigma([\mathcal{A},\mathcal{B}])=[\sigma(\mathcal{A}),\sigma(\mathcal{B})]_\oplus$$ and $\sigma$ is a Lie homomorphism. Now let $\displaystyle\mathcal{A}=(\mathcal{A}_\tau),\mathcal{B}=(\mathcal{B}_\tau)\in\sigma(\mathfrak{A})$ and for each $\tau=i,j,k,$, $\mathcal{A}_\tau=x_\tau\tau\cdot\mathbb{I}_{\mathfrak{H}}+\dfrac{y}{\sqrt{3}} Q+z_\tau P_\tau$, $\mathcal{B}_\tau=r_\tau\tau\cdot\mathbb{I}_{\mathfrak{H}}+\dfrac{s}{\sqrt{3}} Q+t_\tau P_\tau$ with $x_\tau,y,z_\tau,r_\tau,s,t_\tau\in\mathbb{R}$, then $\displaystyle\sigma^{-1}([\mathcal{A},\mathcal{B}]_\oplus)=\sum_{\tau=i,j,k}[\mathcal{A}_\tau,\mathcal{B}_\tau]_\tau$. But $$\displaystyle\sigma^{-1}(\mathcal{A})=\sum_{\tau=i,j,k}x_\tau\tau\cdot\mathbb{I}_{\mathfrak{H}}+yQ+\sum_{\tau=i,j,k}z_\tau P_{\tau},\sigma^{-1}(\mathcal{B})=\sum_{\tau=i,j,k}r_\tau\tau\cdot\mathbb{I}_{\mathfrak{H}}+sQ+\sum_{\tau=i,j,k}t_\tau P_{\tau}\in\mathfrak{A}$$
 and therefore
$$[\sigma^{-1}(\mathcal{A}),\sigma^{-1}(\mathcal{B})]=
\displaystyle\sum_{\tau=i,j,k}[\mathcal{A}_\tau,\mathcal{B}_\tau]_\tau.$$
Thus
	$$\sigma^{-1}([\mathcal{A},\mathcal{B}]_\oplus)=[\sigma^{-1}(\mathcal{A}),\sigma^{-1}(\mathcal{B})].$$ Hence $\sigma$ is an embedding.
\end{proof}
\subsection{The Displacement Operator}
As we have explained in the introduction, on a right quaternionic Hilbert space with a right multiplication we cannot have a displacement operator as a representation for the representation space $\HI$. This fact has been indicated twice in the literature, in \cite{Ad2} while studying quaternionic Perelomov type CS and in \cite{Thi2} when the authors studied the quaternionic canonical CS. However, in this subsection, we shall show that if we consider a right quaternionic Hilbert space with a left multiplication on it, see Eq. (\ref{lft_mul-op}), we can have a displacement operator as a representation for the representation space $\HI$ with all the desired properties.\\

 We have already obtained, from
$$\mid\gamma_{\qu}\rangle=e^{-|\qu|^2/2}\cdot\left[ \sum_{n=0}^\infty|e_n\rangle\dfrac{\qu^n}{\sqrt{n!}}\right]$$
and
$$\mid e_n\rangle=\frac{({\mathsf{a}}^\dagger )^n}{\sqrt{n!}}\mid e_0\rangle,$$
that
$$\mid\gamma_{\qu}\rangle=e^{-|\qu|^2/2}\cdot\left[ \sum_{n=0}^\infty\frac{(\mathsf{a}^\dagger )^n}{\sqrt{n!}}\mid e_0\rangle\right]=(e^{-|\qu|^2/2}\cdot e^{\qu\cdot {\mathsf{a}}^\dagger })|e_0\rangle.$$
We consider an alternative version of the Lie algebra $\mathfrak{A}$ given in (\ref{X1}), denoted by $\hat{\mathfrak{A}}$,  as the quaternionic Weyl-Heisenberg Lie algebra
\begin{equation}\label{alt_WHL_alg}
\hat{\mathfrak{A}}=\mbox{linear span over }\mathbb{R}\,\{\tau\cdot\mathbb{I}_{\mathfrak{H}}, \tau\cdot Q, P_0~\mid~\tau=i,j,k\},
\end{equation}
where $P_0=-\left( \dfrac{\mathsf{a}-\mathsf{a}^\dagger}{\sqrt{2}}\right) =\tau\cdot P_\tau,$ for all $\tau=i,j,k$. It is a sub Lie algebra of $\mathfrak{A}_\quat$.
Here the Lie bracket, $[\![\cdot,\cdot]\!]:\hat{\mathfrak{A}}\times\hat{\mathfrak{A}}\longrightarrow\hat{\mathfrak{A}}$, of $\hat{\mathfrak{A}}$, is defined   by
$$[\![\mathcal{A},\mathcal{B}]\!]=\sum_{\tau=i,j,k}[\mathcal{A}_\tau,\mathcal{B}_\tau]_\tau=-\sum_{\tau=i,j,k}\dfrac{1}{\sqrt{3}}(yt_\tau-z_\tau s)\tau\cdot\mathbb{I}_{\mathfrak{H}}\in\hat{\mathfrak{A}};$$
for all $\displaystyle\mathcal{A}=\sum_{\tau=i,j,k}x_\tau\tau\cdot\mathbb{I}_{\mathfrak{H}}+yP_0+\sum_{\tau=i,j,k}z_\tau \tau\cdot Q$, $\displaystyle\mathcal{B}=\sum_{\tau=i,j,k}r_\tau\tau\cdot\mathbb{I}_{\mathfrak{H}}+sP_0+\sum_{\tau=i,j,k}t_\tau \tau\cdot Q\in\hat{\mathfrak{A}}$, where for each $\tau=i,j,k$, $\mathcal{A}_\tau=x_\tau\tau\cdot\mathbb{I}_{\mathfrak{H}}+\dfrac{y}{\sqrt{3}} P_0+z_\tau\,\tau\cdot Q$ and $\mathcal{B}_\tau=r_\tau\tau\cdot\mathbb{I}_{\mathfrak{H}}+\dfrac{s}{\sqrt{3}} P_0+t_\tau \,\tau\cdot Q$ with $x_\tau,y,z_\tau,r_\tau,s,t_\tau\in\mathbb{R}$.
\begin{remark}
The Lie bracket $[\![\cdot,\cdot]\!]$ is a restricted version of the Lie bracket $[\cdot,\cdot]$. In fact, both Lie brackets $[\cdot,\cdot]_\sigma$ and $[\![\cdot,\cdot]\!]$ are restrictions of $[\cdot,\cdot]$. This fact enables us to use the same notation $[\cdot,\cdot]$ for all three Lie algebras, $\mathfrak{A}_\sigma$, $\hat{\mathfrak{A}}$ and $\mathfrak{A}_\quat$.
\end{remark}
A general element $\mathcal{A}$ of $\hat{\mathfrak{A}}$ can be written as
\begin{equation}\label{gen_ele1}
\mathcal{A}=xi\cdot\mathbb{I}_{\mathfrak{H}}+yj\cdot\mathbb{I}_{\mathfrak{H}}+zk\cdot\mathbb{I}_{\mathfrak{H}}+3aP_0+bi\cdot Q+cj\cdot Q+dk\cdot Q={\bf{x}}\cdot\mathbb{I}_{\mathfrak{H}}+(\mathfrak{q}\cdot \mathsf{a}^\dagger-\oqu\cdot \mathsf{a})
\end{equation} and
\begin{equation}\label{gen_ele2}
\mathcal{A}= (xi\cdot\mathbb{I}_{\mathfrak{H}}+i\cdot(bQ-aP_i))+(yj\cdot\mathbb{I}_{\mathfrak{H}}+j\cdot(cQ-aP_j))+(zk\cdot\mathbb{I}_{\mathfrak{H}}+k\cdot(dQ-aP_k));
\end{equation}
where ${\bf{x}}=xi+yj+zk$ and $\mathfrak{q}=\dfrac{1}{\sqrt{2}}(3a+bi+cj+dk)$ in the equation (\ref{gen_ele1}) with $x,y,z,a,b,c,d\in\mathbb{R}$.
The $\mathcal{A}$ in (\ref{gen_ele1}) can also be written as
$$\mathcal{A}=xi\cdot\mathbb{I}_{\mathfrak{H}}+(\mathfrak{q}_i\cdot \mathsf{a}^\dagger-{\oqu_i}\cdot \mathsf{a})+yj\cdot\mathbb{I}_{\mathfrak{H}}+(\mathfrak{q}_j\cdot \mathsf{a}^\dagger-{\oqu_j}\cdot \mathsf{a})+zk\cdot\mathbb{I}_{\mathfrak{H}}+(\mathfrak{q}_k\cdot \mathsf{a}^\dagger-{\oqu_k}\cdot \mathsf{a});$$
where $\mathfrak{q}_i=\dfrac{1}{\sqrt{2}}(a+ib)$, $\mathfrak{q}_j=\dfrac{1}{\sqrt{2}}(a+jc)$, $\mathfrak{q}_k=\dfrac{1}{\sqrt{2}}(a+kd)$ and the $\qu$ in (\ref{gen_ele1}) is $\mathfrak{q}=\mathfrak{q}_i+\mathfrak{q}_j+\mathfrak{q}_k$. Applying Equation (\ref{sc_mul_aj-op}) with the Proposition \ref{xAq} we can see that the operator $\mathcal{A}$ is anti-self-adjoint in $\mathfrak{H}$, and it is the infinitesimal generator of the operator:
\begin{equation}\label{infi}
U(\mathcal{A} )=e^{\mathcal{A}}=e^{{\bf{x}}}\,e^{(\mathfrak{q}\cdot \mathsf{a}^\dagger-\oqu\cdot \mathsf{a})}:=e^{\bf{x}}\,\D(\mathfrak{q});
\end{equation}
where $\D(\mathfrak{q})$ is the operator-valued map $\mathfrak{q}\longmapsto\D(\mathfrak{q}):=e^{(\mathfrak{q}\cdot \mathsf{a}^\dagger-\oqu\cdot \mathsf{a})}$.
\begin{proposition}\label{uni_op}
	For any $\mathcal{A}\in\hat{\mathfrak{A}}$, the operator $U(\mathcal{A})=e^{\mathcal{A}}$ is unitary and continuous. 	
\end{proposition}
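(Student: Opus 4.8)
The plan is to obtain unitarity entirely from the fact, recorded just above, that every $\mathcal{A}\in\hat{\mathfrak{A}}$ is anti-self-adjoint, combined with the exponential/BCH calculus declared available in the opening of Section~4; the factorization $U(\mathcal{A})=e^{\bf x}\,\D(\mathfrak{q})$ in (\ref{infi}) will not be needed for this statement.

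First I would make the identity $\mathcal{A}^\dagger=-\mathcal{A}$ explicit for $\mathcal{A}={\bf x}\cdot\mathbb{I}_{\mathfrak{H}}+(\mathfrak{q}\cdot\mathsf{a}^\dagger-\oqu\cdot\mathsf{a})$ as in (\ref{gen_ele1}). By (\ref{sc_mul_aj-op}) one has $({\bf x}\cdot\mathbb{I}_{\mathfrak{H}})^\dagger=\mathbb{I}_{\mathfrak{H}}\cdot\overline{\bf x}=\overline{\bf x}\cdot\mathbb{I}_{\mathfrak{H}}=-{\bf x}\cdot\mathbb{I}_{\mathfrak{H}}$, since ${\bf x}=xi+yj+zk$ is purely imaginary; likewise $(\mathfrak{q}\cdot\mathsf{a}^\dagger)^\dagger=\mathsf{a}\cdot\oqu=\oqu\cdot\mathsf{a}$ and $(\oqu\cdot\mathsf{a})^\dagger=\mathsf{a}^\dagger\cdot\mathfrak{q}=\mathfrak{q}\cdot\mathsf{a}^\dagger$, where the last equality in each case is Proposition~\ref{xAq}. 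Adding these gives $\mathcal{A}^\dagger=-{\bf x}\cdot\mathbb{I}_{\mathfrak{H}}+\oqu\cdot\mathsf{a}-\mathfrak{q}\cdot\mathsf{a}^\dagger=-\mathcal{A}$.

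Next I would set $U(\mathcal{A})=e^{\mathcal{A}}=\sum_{n=0}^{\infty}\mathcal{A}^n/n!$, first on the dense linear span of $\{\,\mid e_n\rangle\,\}$, on which $\mathcal{A}$, $\mathsf{a}$ and $\mathsf{a}^\dagger$ act by finite sums, so the series converges absolutely and all the term-by-term manipulations below are legitimate; the operator then extends to $\mathfrak{H}$ just as in the complex case (cf.\ the remark opening Section~4). Taking adjoints termwise and using $\mathcal{A}^\dagger=-\mathcal{A}$ yields $U(\mathcal{A})^\dagger=\sum_{n}(\mathcal{A}^\dagger)^n/n!=e^{-\mathcal{A}}=U(-\mathcal{A})$. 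Since $\mathcal{A}$ commutes with $-\mathcal{A}$, the trivial instance of the Baker--Campbell--Hausdorff formula gives $U(\mathcal{A})U(\mathcal{A})^\dagger=e^{\mathcal{A}}e^{-\mathcal{A}}=e^{0}=\mathbb{I}_{\mathfrak{H}}$ and, symmetrically, $U(\mathcal{A})^\dagger U(\mathcal{A})=\mathbb{I}_{\mathfrak{H}}$; hence $U(\mathcal{A})$ is unitary. In particular it is isometric, $\|U(\mathcal{A})\phi\|^{2}=\langle\phi\mid U(\mathcal{A})^\dagger U(\mathcal{A})\phi\rangle=\|\phi\|^{2}$, so it is a bounded operator of norm $1$ defined on all of $\mathfrak{H}$, i.e.\ continuous --- which is the point worth stating, since $\mathcal{A}$ itself is unbounded. (If one wants continuity in the stronger sense of strong continuity of the one-parameter group $t\mapsto U(t\mathcal{A})=e^{t\mathcal{A}}$, this follows from the same series estimate on the invariant domain together with the uniform bound $\|U(t\mathcal{A})\|=1$.)

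The only genuinely delicate point is the passage from the formal exponential series to a bona fide everywhere-defined operator and the justification of the two rearrangements used above (termwise adjoint and $e^{\mathcal{A}}e^{-\mathcal{A}}=\mathbb{I}_{\mathfrak{H}}$) for the unbounded generator $\mathcal{A}$. I would handle this either by invoking a quaternionic Stone/functional-calculus theorem for anti-self-adjoint operators, so that $e^{\mathcal{A}}$ is the unitary attached to $\mathcal{A}$ by the spectral theorem, or --- staying strictly within the conventions of this paper --- by running every computation on the dense $\mathcal{A}$-invariant domain of finite combinations of the $\mid e_n\rangle$, where absolute convergence makes all rearrangements valid, and then extending by continuity once the isometry bound $\|U(\mathcal{A})\phi\|=\|\phi\|$ has been established. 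I expect to adopt the second route, consistent with the standing assumption in Section~4 that these operators have no domain pathologies.
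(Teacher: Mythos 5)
Your proposal is correct and follows essentially the same route as the paper: anti-self-adjointness of $\mathcal{A}$ (via (\ref{sc_mul_aj-op}) and Proposition \ref{xAq}, which the paper records just before the proposition) gives $U(\mathcal{A})^{\dagger}=e^{-\mathcal{A}}$, the Baker--Campbell--Hausdorff identity with $\mathcal{B}=-\mathcal{A}$ gives $e^{\mathcal{A}}e^{-\mathcal{A}}=\mathbb{I}_{\mathfrak{H}}$, and continuity follows from the isometry $\|U(\mathcal{A})\phi-U(\mathcal{A})\psi\|=\|\phi-\psi\|$. Your extra care about the dense invariant domain and termwise adjoints is consistent with the paper's standing remark at the start of Section 4 and does not change the argument.
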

\begin{proof} Let $\mathcal{A}\in\hat{\mathfrak{A}}$.
	The Baker-Campbell-Hausdorff identity (see, e.g., \cite{Thi2}) is
	\begin{equation}\label{B-C-H-I}
	e^{\mathcal{A}+\mathcal{B}}=e^{-\frac{1}{2}[\mathcal{A},\mathcal{B}]}\,e^{\mathcal{A}}e^{\mathcal{B}}
	\end{equation} when $\mathcal{A}$ and $\mathcal{B}$ commute with $[\mathcal{A},\mathcal{B}]$, i.e. $[\mathcal{A},[\!\mathcal{A},\mathcal{B}] ]=0=[\mathcal{B},[\mathcal{A},\mathcal{B}] ]$. Indeed, for any $\tau\in\{i,j,k\}$, $[\mathcal{A}_\tau,[\mathcal{A}_\tau,\mathcal{B}_\tau]_\tau]_\tau=0$. Thus $$[\mathcal{A},[\!\mathcal{A},\mathcal{B}] ]=\sum_{\tau=i,j,k}[A_\tau,[\mathcal{A}_\tau,\mathcal{B}_\tau]_\tau]_\tau=0.$$ Similarly $[\mathcal{B},[\mathcal{A},\mathcal{B}] ]=0$. Since we are working on the right quaternionic Hilbert space $\mathfrak{H}$ and the factor $e^{-\frac{1}{2}[\mathcal{A},\mathcal{B}]}$ is a quaternion scalar, one should note that the equation (\ref{B-C-H-I}) will be determined as follows:
	$$e^{\mathcal{A}+\mathcal{B}}\phi=(e^{\mathcal{A}}e^{\mathcal{B}}\phi)\,e^{-\frac{1}{2}[\mathcal{A},\mathcal{B}]},$$
	for all $\phi\in\mathfrak{H}$. Using this identity, we can obtain that
	$$U(\mathcal{A}){U(\mathcal{A})}^{\dagger}=e^{\mathcal{A}}{e^{\mathcal{A}}}^{\dagger}=e^{\mathcal{A}}e^{\mathcal{-A}}=e^{\frac{1}{2}[\mathcal{A},-\mathcal{A}]}\,e^{\mathcal{A}+(-\mathcal{A})}=\mathbb{I}_{\mathfrak{H}}.$$ Thus $U(\mathcal{A})$ is unitary.
	From this, for any $\mathcal{A}\in\hat{\mathfrak{A}}$ and $\phi,\psi\in\mathfrak{H}$, we have
	$$\|U(\mathcal{A})\phi-U(\mathcal{A})\psi\|=\|\phi-\psi\|.$$ It is enough to imply the continuity of $U(\mathcal{A})$. 	
\end{proof}
\begin{proposition}\label{uni_re}
	For any $\mathfrak{q},\mathfrak{p}\in\quat$, the composition rule of $\D(\mathfrak{q})$ and $\D(\mathfrak{p})$ is given by
	\begin{equation}\label{com_rl_D}
	\D(\mathfrak{q})\D(\mathfrak{p})=e^{-\sum_{\tau=i,j,k}\tau(\mathfrak{q}_{\tau}\wedge\mathfrak{p}_{\tau})}\,\D(\mathfrak{q}+\mathfrak{p}),
	\end{equation}
	where $\displaystyle \D(\qu)=e^{\mathfrak{q}\cdot \mathsf{a}^\dagger-\oqu\cdot \mathsf{a}}$ and $\displaystyle\D(\pu)=e^{\mathfrak{p}\cdot \mathsf{a}^\dagger-\bar{\mathfrak{p}}\cdot \mathsf{a}}$. Moreover, $\D(\mathfrak{q})$ is a unitary representation up to the phase factor $e^{-\sum_{\tau=i,j,k}\tau(\mathfrak{q}_{\tau}\wedge\mathfrak{p}_{\tau})}$ of the representation space $\mathfrak{H}$.	
\end{proposition}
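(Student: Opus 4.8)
The plan is to realize $\D(\mathfrak q)$ and $\D(\mathfrak p)$ as exponentials $U(\mathcal A)=e^{\mathcal A}$ and $U(\mathcal B)=e^{\mathcal B}$ of elements of the Lie algebra $\hat{\mathfrak A}$, and then to run the Baker--Campbell--Hausdorff identity in exactly the form already established inside the proof of Proposition \ref{uni_op}. Concretely, set $\mathcal A=\mathfrak q\cdot\mathsf a^\dagger-\oqu\cdot\mathsf a$ and $\mathcal B=\mathfrak p\cdot\mathsf a^\dagger-\bar{\mathfrak p}\cdot\mathsf a$. By the representation (\ref{gen_ele1}) of a general element of $\hat{\mathfrak A}$ (taking ${\bf x}=0$), both $\mathcal A,\mathcal B\in\hat{\mathfrak A}$, so Proposition \ref{uni_op} already gives that $\D(\mathfrak q)=e^{\mathcal A}$ and $\D(\mathfrak p)=e^{\mathcal B}$ are everywhere defined and unitary. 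Using the additivity of the left multiplication (Remark \ref{Rem123}) one has $\mathcal A+\mathcal B=(\mathfrak q+\mathfrak p)\cdot\mathsf a^\dagger-\overline{(\mathfrak q+\mathfrak p)}\cdot\mathsf a$, hence $e^{\mathcal A+\mathcal B}=\D(\mathfrak q+\mathfrak p)$, and the whole proposition reduces to computing the correction term $[\mathcal A,\mathcal B]$.

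The next step is to evaluate that commutator slice by slice. Writing $\mathfrak q=\mathfrak q_i+\mathfrak q_j+\mathfrak q_k$ with $\mathfrak q_\tau\in\mathbb C_\tau$, and likewise for $\mathfrak p$, decompose $\mathcal A=\sum_{\tau=i,j,k}\mathcal A_\tau$ and $\mathcal B=\sum_{\tau=i,j,k}\mathcal B_\tau$ with $\mathcal A_\tau,\mathcal B_\tau\in\mathfrak A_{\mathbb C_\tau}$ as in the definition of the bracket (\ref{liebraH}); then $[\mathcal A,\mathcal B]=\sum_{\tau}[\mathcal A_\tau,\mathcal B_\tau]_\tau$. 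Each summand is a commutator inside the single commuting plane $\mathbb C_\tau$, where scalars pass freely through $\mathsf a$ and $\mathsf a^\dagger$ by Proposition \ref{xAq}, so using $[\mathsf a,\mathsf a^\dagger]=\mathbb I_{\mathfrak H}$ one gets $[\mathcal A_\tau,\mathcal B_\tau]_\tau=2\,\tau(\mathfrak q_\tau\wedge\mathfrak p_\tau)\cdot\mathbb I_{\mathfrak H}$, where $\mathfrak q_\tau\wedge\mathfrak p_\tau\in\mathbb R$ is the symplectic pairing on $\mathbb C_\tau\cong\mathbb R^2$. Summing over $\tau$ yields $[\mathcal A,\mathcal B]=2\sum_{\tau=i,j,k}\tau(\mathfrak q_\tau\wedge\mathfrak p_\tau)\cdot\mathbb I_{\mathfrak H}$, a left scalar multiple of the identity.

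It then remains to feed this into the Baker--Campbell--Hausdorff identity. Exactly as in the proof of Proposition \ref{uni_op}, one checks $[\mathcal A,[\mathcal A,\mathcal B]]=0=[\mathcal B,[\mathcal A,\mathcal B]]$ by verifying it within each slice $\mathbb C_\tau$ (where everything commutes), so the hypotheses of (\ref{B-C-H-I}) hold. Since the correction factor is a genuine (non-real) quaternionic scalar, the identity must be read as in Proposition \ref{uni_op}, namely $e^{\mathcal A+\mathcal B}\phi=(e^{\mathcal A}e^{\mathcal B}\phi)\,e^{-\frac12[\mathcal A,\mathcal B]}$ for all $\phi\in\mathfrak H$; rearranging gives $\D(\mathfrak q)\D(\mathfrak p)=e^{-\frac12[\mathcal A,\mathcal B]}\,\D(\mathfrak q+\mathfrak p)=e^{-\sum_{\tau=i,j,k}\tau(\mathfrak q_\tau\wedge\mathfrak p_\tau)}\,\D(\mathfrak q+\mathfrak p)$, which is (\ref{com_rl_D}). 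Finally, $\sum_{\tau}\tau(\mathfrak q_\tau\wedge\mathfrak p_\tau)$ is a purely imaginary quaternion, so its exponential is a unit quaternion; thus $\D(\mathfrak q)\D(\mathfrak p)$ and $\D(\mathfrak q+\mathfrak p)$ differ only by a scalar of modulus one, and combined with the unitarity of each $\D(\mathfrak q)$ this is precisely the assertion that $\D$ is a unitary representation of $\mathfrak H$ up to that phase factor.

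I expect the main obstacle to be bookkeeping rather than anything conceptual. The delicate points are: keeping track of the side on which the quaternionic scalar $e^{-\frac12[\mathcal A,\mathcal B]}$ acts (it is a true quaternion, not a real, so the naive scalar BCH formula is false and one must use the right-multiplication version from Proposition \ref{uni_op}); and respecting that distinct slices $\mathbb C_\tau$ do not commute, which is why the relevant bracket is the slice-summed one of (\ref{liebraH}) and not a naive commutator in $\mathfrak A_\quat$. Matching the precise sign and normalisation of the pairing $\mathfrak q_\tau\wedge\mathfrak p_\tau$ to the statement (\ref{com_rl_D}) is the only genuinely computational nuisance.
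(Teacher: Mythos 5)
Your proof follows essentially the same route as the paper's: set $\mathcal A=\qu\cdot\asd-\oqu\cdot\as$, $\mathcal B=\pu\cdot\asd-\overline{\pu}\cdot\as$, apply the Baker--Campbell--Hausdorff identity (\ref{B-C-H-I}) (legitimate since $[\mathcal A,\mathcal B]$ is a scalar multiple of $\mathbb{I}_{\mathfrak{H}}$, exactly as checked in Proposition \ref{uni_op}), identify the central commutator, and quote the unitarity of $\D(\qu)$ from Proposition \ref{uni_op}; the only material you add is the explicit slice-wise evaluation of $[\mathcal A,\mathcal B]$ via (\ref{liebraH}) and Proposition \ref{xAq}, which the paper asserts without displaying.

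One bookkeeping caveat: as written, your argument contains two sign slips that cancel. Rearranging $e^{\mathcal A+\mathcal B}\phi=(e^{\mathcal A}e^{\mathcal B}\phi)\,e^{-\frac12[\mathcal A,\mathcal B]}$ yields the factor $e^{+\frac12[\mathcal A,\mathcal B]}$ accompanying $\D(\qu)\D(\pu)$, not $e^{-\frac12[\mathcal A,\mathcal B]}$ (the paper's proof accordingly writes $e^{\mathcal A}e^{\mathcal B}=e^{\frac12[\mathcal A,\mathcal B]}\,e^{\mathcal A+\mathcal B}$). Correspondingly, with the standard symplectic pairing $\qu_\tau\wedge\pu_\tau$ --- the convention the paper implicitly adopts, since it equates $\frac12[\mathcal A,\mathcal B]$ with $-\sum_{\tau}\tau(\qu_\tau\wedge\pu_\tau)$ --- the per-slice commutator is $[\mathcal A_\tau,\mathcal B_\tau]_\tau=-2\,\tau(\qu_\tau\wedge\pu_\tau)\cdot\mathbb{I}_{\mathfrak{H}}$, not $+2$. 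The two flips compensate, so your final identity (\ref{com_rl_D}) and the unitarity-up-to-phase conclusion agree with the paper; just fix the two signs consistently.
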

\begin{proof}
	Let $\mathfrak{q},\mathfrak{p}\in\quat$ with $\mathcal{A}=(\mathfrak{q}\cdot \mathsf{a}^\dagger-\oqu\cdot \mathsf{a})$ and $\mathcal{B}=(\mathfrak{p}\cdot \mathsf{a}^\dagger-\bar{\mathfrak{p}}\cdot \mathsf{a})$. Now, using the Baker-Campbell-Hausdorff identity, we have that
	\begin{eqnarray*}
		\D(\mathfrak{q})\D(\mathfrak{p})
		&=& e^{\mathcal{A}}e^{\mathcal{B}} =e^{\frac{1}{2}[\mathcal{A},\mathcal{B}]}\,e^{\mathcal{A}+\mathcal{B}}\\
		&=& e^{-\sum_{\tau=i,j,k}\tau(\mathfrak{q}_{\tau}\wedge\mathfrak{p}_{\tau})}\,e^{((\mathfrak{q}+\mathfrak{p})\cdot \mathsf{a}^\dagger-\overline{(\mathfrak{q}+\mathfrak{p})}\cdot \mathsf{a})}\\
		&=& e^{-\sum_{\tau=i,j,k}\tau(\mathfrak{q}_{\tau}\wedge\mathfrak{p}_{\tau})}\,\D(\mathfrak{q}+\mathfrak{p}).
	\end{eqnarray*}
	Since $\D(\mathfrak{q})$ is a unitary operator, we can conclude that $\D(\mathfrak{q})$ is a unitary representation up to the phase factor $e^{-\sum_{\tau=i,j,k}\tau(\mathfrak{q}_{\tau}\wedge\mathfrak{p}_{\tau})}$ of the representation space $\mathfrak{H}$, the desired conclusion.
\end{proof}
The above proposition also says, apart from a phase factor, that the product of two displacement operators produces another displacement operator and it has the total displacement as the sum of two individual displacements.

In complex quantum mechanics, the vector $e^{i\theta}\phi$, $\theta\in\mathbb{R}$, represents the same physical state as $\phi$. Thus it is natural to consider projective unitary representations. In the Lie algebra set up a projective representation is a Lie algebra homomorphism of the Lie algebra onto the space $AU(\HI)/\{e^{i\theta}I\}$, $\theta\in\mathbb{R}$, where the Lie algebra $AU(\HI)$ is the space of anti-self-adjoint operators on $\HI$, the space $\{e^{i\theta}I\}$ is an ideal in $AU(\HI)$ and the quotient is in the sense of Lie algebra over $\mathbb{R}$ \cite{Brian}. Further, in quantum theory the overall phase of a quantum state is not an observable. Thus it is natural to consider projective representations because quantum theory predicts that states in a Hilbert space do not need to transform under rotations, but only under projective representations. For the quaternionic quantum mechanics a reasonable argument along these lines is given in \cite{Ad, Ad2}.\\

In our case, the above proposition establishes that the operator $\D$ is a homomorphism from the algebra $\quat$ of quaternions under the operation of usual addition. Furthermore, it should be noticed that, from the Baker-Campbell-Hausdorff identity, for any $\mathcal{A},\mathcal{B}\in\hat{\mathfrak{A}}$ and $\phi\in\mathfrak{H}$,
		\begin{equation}\label{eAeB}
		e^{\mathcal{A}}e^{\mathcal{B}}\phi=(e^{\mathcal{A}+\mathcal{B}}\phi) \,e^{\frac{1}{2}[\mathcal{A},\mathcal{B}]}=((e^{\mathcal{B}+\mathcal{A}}\phi) \,e^{\frac{1}{2}[\mathcal{B},\mathcal{A}]})\,e^{[\mathcal{A},\mathcal{B}]}
=(e^{\mathcal{B}}e^{\mathcal{A}}\phi)\,e^{[\mathcal{A},\mathcal{B}]}.
		\end{equation}
		Using this, we can obtain that
		$$\D(\mathfrak{q})\D(\mathfrak{p})
=e^{\sum_{\tau=i,j,k}2\tau(\mathfrak{q}_{\tau}\wedge\mathfrak{p}_{\tau})}\,\D(\mathfrak{p})\D(\mathfrak{q}).$$
In this regard, we can say that the map $\mathfrak{q}\longmapsto\D(\mathfrak{q})$ is a projective representation of the additive Abelian group $\quat$, since the composition of operators $\D(\mathfrak{q}),\,\D(\mathfrak{p})$ produce a phase factor, $e^{-\sum_{\tau=i,j,k}\tau(\mathfrak{q}_{\tau}\wedge\mathfrak{p}_{\tau})}$.\\

In order to establish the square integrability and irreducibility of $\D(\qu)$ we shall require some preliminaries. These preliminary definitions hold true for complex and quaternionic Hilbert spaces. The citations are given for the complex theory, however we shall present it in the language of this manuscript.
\begin{definition}\label{cyclic} (Page 144-146, \cite{Ba})
	A  representation $\D$ of $\HI$ is said to be cyclic if there exists $\phi\in\HI$ (called cyclic vector of $\D$) if
	$$\overline{\text{span}}\{\D(\qu)\phi | \qu\in \quat\}=\HI.$$
\end{definition}
\begin{lemma} (Lemma 12.1.3 in \cite{An}) Let $(\D, \HI, \hat{\mathfrak{A}}) $ be a representation. Suppose that $\phi\in\HI$ is a cyclic square integrable vector of the unitary representation $(\D, \HI)$. Then $\D$ is a square integrable representation.
\end{lemma}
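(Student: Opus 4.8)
The plan is to check the definition of square integrability for $\D$ directly, using the hypothesis that $\phi$ is cyclic and (weakly) admissible. Recall that $\D$ is a square integrable representation exactly when it possesses a non-zero admissible vector $\eta$, i.e.\ one for which the voice transform $V_\eta\colon\HI\to L^2(\quat,d\mu)$, $(V_\eta\psi)(\qu)=\langle\D(\qu)\eta\mid\psi\rangle$, is a bounded map, equivalently one for which the (formal) frame operator $A_\eta$, determined by $\langle\psi\mid A_\eta\psi\rangle=\int_\quat|\langle\D(\qu)\eta\mid\psi\rangle|^2\,d\mu(\qu)$, is bounded, positive and injective. I would simply take $\eta=\phi$ and verify these properties for $A_\phi$.

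The steps, in order. (i) First I would show that $A_\phi$ commutes with the representation: using that each $\D(\qu)$ is unitary (Proposition \ref{uni_op}) and that, by Proposition \ref{uni_re}, $\D(\pu)^{\dagger}\D(\qu)$ is $\D(\qu-\pu)$ times a unit quaternion which cancels inside $|\cdot\rangle\langle\cdot|$, together with the invariance of $d\mu$ under $\qu\mapsto\qu-\pu$, one obtains $\langle\D(\pu)\psi\mid A_\phi\D(\pu)\psi\rangle=\langle\psi\mid A_\phi\psi\rangle$ for all $\psi,\pu$, hence $\D(\pu)^{\dagger}A_\phi\D(\pu)=A_\phi$ after polarization. (ii) Next, $A_\phi$ is injective: if $A_\phi\psi=0$ then $\langle\D(\qu)\phi\mid\psi\rangle=0$ for $\mu$-a.e.\ $\qu$, hence everywhere by continuity of $\qu\mapsto\D(\qu)\phi$, so $\psi$ is orthogonal to $\overline{\text{span}}\{\D(\qu)\phi\mid\qu\in\quat\}=\HI$ by cyclicity (Definition \ref{cyclic}), i.e.\ $\psi=0$. (iii) Then I would prove that $A_\phi$ is bounded: the hypothesis that $\phi$ is square integrable says precisely $\langle\phi\mid A_\phi\phi\rangle=\int_\quat|\langle\D(\qu)\phi\mid\phi\rangle|^2\,d\mu(\qu)<\infty$, and by step (i) the diagonal value $\langle\D(\pu)\phi\mid A_\phi\D(\pu)\phi\rangle$ equals this same finite number for every $\pu$; since the spectral projections of the positive self-adjoint operator $A_\phi$ lie in the commutant of $\D$ (again by step (i)) and therefore have closed $\D$-invariant ranges, cyclicity of $\phi$ forbids any proper such range, and combined with the $\pu$-uniform bound this pins the spectrum of $A_\phi$ into a bounded set. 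Finally, $A_\phi$ being bounded, positive, injective and $\D$-intertwining (so that $V_\phi$ is bounded with $V_\phi^{\dagger}V_\phi=A_\phi$) is exactly the assertion that $\phi$ is an admissible vector, so $\D$ is square integrable, which is the claim.

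I expect the genuine difficulty to lie in step (iii): mere finiteness of the single number $\langle\phi\mid A_\phi\phi\rangle$ does not bound the operator $A_\phi$, and cyclicity must really be used to rule out $A_\phi$ being unbounded on some invariant subspace — in the irreducible case this would follow instantly from Schur's lemma, and here cyclicity is the working replacement. A lesser point to be careful about, special to the quaternionic and projective setting, is that the cocycle of Proposition \ref{uni_re} is quaternion-valued; one must confirm it is a unit quaternion (so that it drops out of the moduli and of $|\cdot\rangle\langle\cdot|$) and that $d\mu$ is the translation-invariant measure, so that the computations in steps (i)--(iii) are valid.
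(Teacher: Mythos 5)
The paper never proves this lemma: it is quoted as background from \cite{An} (Lemma 12.1.3 of Deitmar--Echterhoff), and in the actual application square integrability of $\D$ is obtained directly from the coherent-state resolution of the identity $\int_\quat|\D(\qu)e_0\rangle\langle \D(\qu)e_0|\,d\mu=\mathbb{I}_{\mathfrak{H}}$. So your proposal can only be judged on its own terms, and there it has a genuine gap, exactly at the place you yourself flagged, namely step (iii). Cyclicity of $\phi$ does \emph{not} forbid proper closed $\D$-invariant subspaces; it only forbids invariant subspaces that contain $\phi$ (the statement that every nonzero vector is cyclic is irreducibility, as the paper's quoted proposition from \cite{Ba} makes explicit). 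The spectral projections of the positive operator $A_\phi$ do lie in the commutant of $\D$, but nothing forces their ranges to contain $\phi$, so they can perfectly well be proper, and the inference ``no proper invariant range, hence bounded spectrum'' collapses. Nor does the $\pu$-uniform value $\langle\D(\pu)\phi\mid A_\phi\,\D(\pu)\phi\rangle=\langle\phi\mid A_\phi\phi\rangle<\infty$ bound $\|A_\phi\|$: an unbounded positive operator can have uniformly bounded quadratic form on a total set of unit vectors, because the components of the vectors $\D(\pu)\phi$ in the high spectral subspaces need only be small, not zero. This is precisely where the real work lies in the classical treatments: either one assumes irreducibility and uses Schur's lemma (Duflo--Moore), or, in the merely cyclic case, one argues via Godement's theory of square-integrable positive-definite functions that the GNS representation of $\qu\mapsto\langle\D(\qu)\phi\mid\phi\rangle$ embeds into the regular representation --- a different and deeper mechanism than bounding a frame operator, and it is this route (not yours) that underlies the cited Lemma 12.1.3.

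Two further mismatches should be addressed even if step (iii) were repaired. First, the paper's Definition \ref{SQ} of square integrability demands that $\int_\quat|\D(\qu)\eta\rangle\langle\D(\qu)\eta|\,d\mu$ equal $\mathbb{I}_{\mathfrak{H}}$ exactly, so boundedness of $A_\phi$ alone is not the conclusion you need: you would also require a lower frame bound (bounded invertibility of $A_\phi$) and a renormalization of the vector, neither of which your argument supplies. Second, steps (i)--(ii) tacitly use the translation invariance of $d\mu$ on $\quat\cong\mathbb{R}^4$, the strong continuity of $\qu\mapsto\D(\qu)\phi$, the closability and dense definition of the form defining $A_\phi$, and the fact that the quaternionic cocycle of Proposition \ref{uni_re} is a unimodular factor multiplying the vector on the right, so that it drops out of moduli (this last point is fine and is essentially the paper's Lemma \ref{Lem-admissible}); these should be stated, but they are not where the proof fails. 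Note finally that in the paper's intended application the representation is in fact irreducible and the frame operator for $\phi=|e_0\rangle$ is exactly the identity, so the full strength of the lemma is never needed there.
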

\begin{definition}\label{SQ} (Page 204 \cite{Ali}) A representation $\D(\qu)$ is said to be square integrable if there exist a vector $\phi$ such that
	$$\int_{\quat} |\D(\qu)\phi\rangle\langle \D(\qu)\phi| d\varsigma=\mathbb{I}_{\mathfrak{H}}.$$
\end{definition}
\begin{proposition} (Page 146, \cite{Ba})
	A unitary representation $\D$ of $\hat{\mathfrak{A}}$ in $\HI$ is irreducible if and only if every non-zero vector $\phi\in\HI$ is cyclic for $\D$.
\end{proposition}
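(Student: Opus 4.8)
The plan is to import the classical argument from \cite{Ba}, that a unitary representation is irreducible exactly when each of its nonzero vectors is cyclic, while keeping track of the two features that distinguish the present situation: $\D$ is only a projective representation (Proposition \ref{uni_re}), and the projective cocycle acts through the left multiplication ``$\cdot$'' rather than through right ($=$ module) scalars. I use the topological notion of irreducibility, namely that $\{0\}$ and $\HI$ are the only closed subspaces of $\HI$ invariant under every $\D(\qu)$, $\qu\in\quat$. The implication ``every nonzero vector cyclic $\Rightarrow$ irreducible'' is then immediate: if $\mathcal{M}\subseteq\HI$ is closed, invariant, and $\mathcal{M}\neq\{0\}$, pick $\phi\in\mathcal{M}\setminus\{0\}$; invariance and closedness give $\overline{\text{span}}\{\D(\qu)\phi\mid\qu\in\quat\}\subseteq\mathcal{M}$, while cyclicity of $\phi$ (Definition \ref{cyclic}) makes the left-hand side equal to $\HI$, so $\mathcal{M}=\HI$.

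For the converse, assume $\D$ irreducible and fix $\phi\neq0$. Set $\mathcal{M}:=\overline{\text{span}}\{\D(\qu)\phi\mid\qu\in\quat\}$. Since $\D(0)=e^{0}=\mathbb{I}_{\mathfrak{H}}$ we have $\phi\in\mathcal{M}$, so $\mathcal{M}\neq\{0\}$, and it suffices to prove that $\mathcal{M}$ is $\D$-invariant: irreducibility will then force $\mathcal{M}=\HI$, which is the assertion that $\phi$ is cyclic. By continuity of $\D(\pu)$ (Proposition \ref{uni_op}) it is enough to check that $\D(\pu)\D(\qu)\phi\in\mathcal{M}$ for all $\pu,\qu\in\quat$. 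The composition rule of Proposition \ref{uni_re}, $\D(\pu)\D(\qu)=e^{-\sum_{\tau=i,j,k}\tau(\mathfrak{p}_{\tau}\wedge\mathfrak{q}_{\tau})}\cdot\D(\pu+\qu)$, exhibits $\D(\pu)\D(\qu)\phi$ as the left multiple $\omega\cdot\bigl(\D(\pu+\qu)\phi\bigr)$ of a vector of $\mathcal{M}$, where $\omega:=e^{-\sum_{\tau=i,j,k}\tau(\mathfrak{p}_{\tau}\wedge\mathfrak{q}_{\tau})}$ is a unit quaternion. Hence everything reduces to the stability of $\mathcal{M}$ under the left multiplications $\psi\mapsto\omega\cdot\psi$.

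This reduction is the sole nonformal step, and it is where I expect the (mild) difficulty: over $\mathbb{C}$ it is vacuous because a subspace is automatically stable under scalar multiplication, whereas here $\omega\cdot(\,\cdot\,)$ is a genuine operator on $\HI$. Writing $\omega=e^{I\vartheta}$ with $I\in\mathbb{S}$ and $\vartheta\in\mathbb{R}$, one sees, by Proposition \ref{lft_mul}, Remark \ref{Rem123} and (\ref{infi}), that left multiplication by $\omega$ is the operator $e^{I\vartheta\cdot\mathbb{I}_{\mathfrak{H}}}=U(I\vartheta\cdot\mathbb{I}_{\mathfrak{H}})$, one of the unitaries of the representation $U$ of $\hat{\mathfrak{A}}$ whose displacement subfamily is $\{\D(\qu)\mid\qu\in\quat\}$. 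The clean way to exploit this is to build $\mathcal{M}$ from the full orbit $\{U(\mathcal{A})\phi\mid\mathcal{A}\in\hat{\mathfrak{A}}\}$ and to note that its closed right span is $U$-invariant: on the dense set this follows from the Baker--Campbell--Hausdorff identity $e^{\mathcal{A}}e^{\mathcal{B}}\phi=(e^{\mathcal{A}+\mathcal{B}}\phi)\,e^{\frac12[\mathcal{A},\mathcal{B}]}$, in which the cocycle now multiplies on the right and hence stays inside the right span, while $\mathcal{A}+\mathcal{B}\in\hat{\mathfrak{A}}$. Either route yields $\omega\cdot\mathcal{M}\subseteq\mathcal{M}$, hence $\D(\pu)\D(\qu)\phi\in\mathcal{M}$, hence $\mathcal{M}$ is $\D$-invariant and equals $\HI$. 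The upshot is that the only content beyond the complex proof is the bookkeeping of the side on which quaternionic scalars act; once the left-acting phases are identified as operators in $U(\hat{\mathfrak{A}})$ (Proposition \ref{uni_op}), the argument of \cite{Ba} transfers verbatim.
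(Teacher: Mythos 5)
The paper itself offers no proof of this statement: it is quoted verbatim from Barut--Raczka \cite{Ba} as one of the preliminaries, so your attempt is measured against the standard argument rather than against anything in the text. Your skeleton is indeed that standard argument, and the direction ``every nonzero vector is cyclic $\Rightarrow$ irreducible'' is fine. The problem sits in the converse, at exactly the point you flag as the crux: stability of $\mathcal{M}=\overline{\text{span}}\{\D(\qu)\phi\mid\qu\in\quat\}$ under the projective phase. Your two routes do not combine into a proof. The first route (recognizing $\omega\cdot(\,\cdot\,)$ as $e^{I\vartheta\cdot\mathbb{I}_{\mathfrak{H}}}$, a unitary of the family $U(\hat{\mathfrak{A}})$) is circular: knowing the phase is one of the unitaries says nothing about whether $\mathcal{M}$ is invariant under it, and left multiplications genuinely do fail to preserve closed right subspaces in general (for instance $j\cdot(e_0+e_1i)=e_0j-e_1k\notin(e_0+e_1i)\quat$, since $ij=k\neq-k$). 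The second route replaces $\mathcal{M}$ by the closed right span of the full orbit $\{U(\mathcal{A})\phi\mid\mathcal{A}\in\hat{\mathfrak{A}}\}$; its invariance argument is acceptable, but what it then delivers is that $\phi$ is cyclic for the family $U(\hat{\mathfrak{A}})$, not cyclic in the sense of Definition \ref{cyclic}, which only allows the displacement operators $\D(\qu)$. Since each $U(\mathcal{A})\phi$ is a left multiple $e^{\mathbf{x}}\cdot(\D(\qu)\phi)$, the enlarged span may be strictly larger than $\mathcal{M}$, and your concluding claim that ``either route yields $\omega\cdot\mathcal{M}\subseteq\mathcal{M}$'' is never actually established for the original $\mathcal{M}$. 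As written, the converse therefore has a genuine gap.

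The gap disappears at once if you adopt the paper's own reading of the composition law. In Eq.\ (\ref{eAeB}) and in the proof of Lemma \ref{Lem-admissible} the Baker--Campbell--Hausdorff phase is treated as a quaternionic scalar multiplying the vector on the \emph{right}: $\D(\pu)\D(\qu)\phi=\bigl(\D(\pu+\qu)\phi\bigr)\,e^{\frac12[\mathcal{A},\mathcal{B}]}$. With that convention $\D(\pu)\D(\qu)\phi$ is a right scalar multiple of a vector of $\mathcal{M}$, a right subspace absorbs it, and the classical argument of \cite{Ba} transfers verbatim with $\mathcal{M}$ unchanged --- no enlargement of the orbit and no lemma about left multiplications is needed. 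If instead you insist on reading the phase in (\ref{com_rl_D}) as the left-multiplication operator (arguably the more coherent operator-theoretic reading, since $\phi\mapsto(A\phi)\lambda$ is not right linear), then you owe an actual proof that the closed right span of the $\D$-orbit is stable under the specific left multiplications that arise, and the proposal does not supply one. Either commit to the right-acting convention used by the paper, or prove that stability; at present the proof is incomplete precisely at the step you yourself identified as the only nonformal one.
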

\begin{proposition} (Page 38, \cite{S})
	Every admissible vector is cyclic.
\end{proposition}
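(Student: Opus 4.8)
The plan is the classical argument deducing cyclicity from a resolution of the identity, transcribed into the quaternionic language. Fix an admissible vector $\phi\in\mathfrak{H}$; in the sense underlying Definition~\ref{SQ}, this means that the operator-valued integral $\int_{\quat}\mid\D(\qu)\phi\rangle\langle\D(\qu)\phi\mid\,d\varsigma$ equals $\mathbb{I}_{\mathfrak{H}}$, read weakly: for every $\psi\in\mathfrak{H}$,
$$\|\psi\|^{2}=\int_{\quat}\langle\psi\mid\D(\qu)\phi\rangle\langle\D(\qu)\phi\mid\psi\rangle\,d\varsigma .$$
By Propositions~\ref{uni_op} and \ref{uni_re} every $\D(\qu)$ is unitary on $\mathfrak{H}$, so the orbit $\{\D(\qu)\phi\mid\qu\in\quat\}$ lies in $\mathfrak{H}$; set $M=\overline{\text{span}}\{\D(\qu)\phi\mid\qu\in\quat\}$. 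By Definition~\ref{cyclic} the claim to prove is $M=\mathfrak{H}$.

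I would argue by contradiction. Since $\mathfrak{H}$ is a right quaternionic Hilbert space the projection theorem is available (the Riesz theorem on $V_{\quat}^{R}$ was recalled in section~2), so $M$ is a closed right-linear subspace and $\mathfrak{H}=M\oplus M^{\perp}$. Assume $M\neq\mathfrak{H}$ and pick $\psi\in M^{\perp}$, $\psi\neq 0$. Then $\langle\D(\qu)\phi\mid\psi\rangle=0$ for all $\qu\in\quat$, hence $\langle\psi\mid\D(\qu)\phi\rangle=\overline{\langle\D(\qu)\phi\mid\psi\rangle}=0$ by axiom (i) of the inner product. Since
$$\langle\psi\mid\D(\qu)\phi\rangle\langle\D(\qu)\phi\mid\psi\rangle=\overline{\langle\D(\qu)\phi\mid\psi\rangle}\,\langle\D(\qu)\phi\mid\psi\rangle=\bigl|\langle\D(\qu)\phi\mid\psi\rangle\bigr|^{2}\geq 0 ,$$
the weak identity gives $\|\psi\|^{2}=\int_{\quat}\bigl|\langle\D(\qu)\phi\mid\psi\rangle\bigr|^{2}\,d\varsigma=0$, so $\psi=0$, a contradiction. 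Hence $M=\mathfrak{H}$ and $\phi$ is cyclic.

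The points that need care are those where the order of quaternionic products matters. The integrand in the weak resolution of identity is genuinely $\langle\psi\mid\D(\qu)\phi\rangle\langle\D(\qu)\phi\mid\psi\rangle$ in this order: apply the projector $\mid\D(\qu)\phi\rangle\langle\D(\qu)\phi\mid$ to $\psi$, obtaining the vector $\mid\D(\qu)\phi\rangle\langle\D(\qu)\phi\mid\psi\rangle$, and pair $\langle\psi\mid$ on the left using axiom (iv), $\langle f\mid g\qu\rangle=\langle f\mid g\rangle\qu$. The decisive observation is that, by $\overline{\langle f\mid g\rangle}=\langle g\mid f\rangle$ together with $\oqu\qu=|\qu|^{2}$, this product collapses to the real non-negative number $|\langle\D(\qu)\phi\mid\psi\rangle|^{2}$; thus the integral, its vanishing, and the final conclusion $\psi=0$ all take place in $\mathbb{R}$ and are insensitive to the non-commutativity of $\quat$. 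Everything else (closedness of $M$, the orthogonal decomposition, the behaviour of the weak operator integral under pairing with a fixed vector) is the quaternionic counterpart of standard Hilbert-space material recalled in section~2, so the write-up should be short; the only ``obstacle'' worth flagging is to state the admissibility hypothesis precisely enough — here normalized so that the resolution operator is $\mathbb{I}_{\mathfrak{H}}$; for a general bounded, boundedly invertible resolution operator $A_{\phi}$ one replaces the last step by $\eps\|\psi\|^{2}\leq\langle\psi\mid A_{\phi}\psi\rangle=0$ for a suitable $\eps>0$ — so that the closing chain of equalities is immediate.
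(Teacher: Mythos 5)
The paper does not actually prove this proposition: it is imported verbatim from \cite{S} (page 38) and used as a black box, so there is no in-paper argument to compare yours with. Taken on its own terms, your argument is the standard one and is executed correctly in the quaternionic setting: the quantity $\overline{\langle\D(\qu)\phi\mid\psi\rangle}\,\langle\D(\qu)\phi\mid\psi\rangle=|\langle\D(\qu)\phi\mid\psi\rangle|^{2}$ is a nonnegative real, closed right-linear subspaces of $\mathfrak{H}$ do admit orthogonal complements, and the weak reading of the operator integral is the right one. This is essentially the proof in the cited source, where admissibility is defined through the isometry of the voice transform (equivalently, a resolution-type operator that is bounded with bounded inverse).

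The caveat is that you prove a different statement from the one the paper writes down. You take ``admissible'' to mean that the orbit of $\phi$ furnishes a resolution of the identity (the property of Definition \ref{SQ}), or at least a bounded, boundedly invertible $A_{\phi}$; but the paper's own Definition \ref{admissible} asks only for the finiteness condition $I(\eta)=\int_{\quat}|\langle\D(\qu)\eta\mid\eta\rangle|^{2}\,d\varsigma<\infty$. From that hypothesis alone your key identity $\|\psi\|^{2}=\int_{\quat}\langle\psi\mid\D(\qu)\phi\rangle\langle\D(\qu)\phi\mid\psi\rangle\,d\varsigma$, or the substitute bound $\eps\|\psi\|^{2}\leq\langle\psi\mid A_{\phi}\psi\rangle$, does not follow, and the proposition with the weak definition is false in general without extra input such as irreducibility: for a reducible representation, a vector lying in a proper closed invariant subspace can satisfy $I(\eta)<\infty$ while its orbit never leaves that subspace, so it is not cyclic. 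Thus, as a proof of the statement with the paper's stated definition, the gap is at the very first step --- the silent upgrade of ``admissible'' to ``generates a resolution of the identity.'' For the use the paper actually makes of the proposition (the vectors $\D(\pu)|e_0\rangle$, for which the resolution of the identity (\ref{x1}) is available and $\D(\pu)$ is unitary), your argument does suffice, so the flaw is one of hypothesis bookkeeping rather than of the underlying idea.
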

\begin{definition}\cite{Ali}\label{admissible}
	A vector $\eta\in\mathfrak{H}$ is said to be admissible for $\D(\qu)$, if
	\begin{equation}\label{adm_vec}
	I(\eta)=\int_{\quat}\mid\langle \D(\qu)\eta\mid\eta\rangle\mid^{2} d\varsigma<\infty.
	\end{equation}
\end{definition}
Now let us turn our attention to our case.
\begin{lemma}\label{Lem-admissible}
	If $\eta\in\mathfrak{H}$ is a admissible vectors, then
	so also is $\eta_\pu=\D(\pu)\eta$, for all $\pu\in\quat$.
\end{lemma}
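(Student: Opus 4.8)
The plan is to exploit that, by Proposition~\ref{uni_re}, $\D$ is a unitary representation \emph{up to a phase factor of modulus one}, and that the parameter group $(\quat,+)$ is abelian; together these make $\abs{\langle\D(\qu)\eta\mid\eta\rangle}^{2}$ an exact pointwise invariant under $\eta\mapsto\D(\pu)\eta$, so the integral defining $I(\eta_\pu)$ will literally coincide with the one defining $I(\eta)$ and no property of the measure $d\varsigma$ will be needed.

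Concretely I would proceed as follows. First I would record the ingredients already available: $\D(\qu)$ is unitary, $\D(\pu)\eta\in\mathfrak{H}$ since $\D(\pu)$ is a unitary operator on $\mathfrak{H}$, and $\D(\qu)\D(\pu)=e^{-\sum_{\tau=i,j,k}\tau(\qu_{\tau}\wedge\pu_{\tau})}\,\D(\qu+\pu)$, where the scalar prefactor, being the exponential of a purely imaginary quaternion, has modulus $1$; specializing $\pu=-\qu$ and using $\qu_\tau\wedge\qu_\tau=0$ and $\D(0)=\mathbb{I}_{\mathfrak{H}}$ gives $\D(-\pu)=\D(\pu)^{-1}=\D(\pu)^{\dagger}$. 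Next, for fixed $\pu,\qu\in\quat$, I would use unitarity of $\D(\pu)$ to write $\langle\D(\qu)\eta_\pu\mid\eta_\pu\rangle=\langle\D(\qu)\D(\pu)\eta\mid\D(\pu)\eta\rangle=\langle\D(\pu)^{\dagger}\D(\qu)\D(\pu)\eta\mid\eta\rangle$, and then apply the composition rule twice: since the three arguments add up to $-\pu+\qu+\pu=\qu$, this yields $\D(\pu)^{\dagger}\D(\qu)\D(\pu)\eta=(\D(\qu)\eta)\,c(\pu,\qu)$ for a quaternion $c(\pu,\qu)$ which is a product of two unit-modulus phases, hence $\abs{c(\pu,\qu)}=1$. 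Property~(v) of the quaternionic inner product, $\langle f\mathfrak{q}\mid g\rangle=\overline{\mathfrak{q}}\langle f\mid g\rangle$, then gives $\langle\D(\qu)\eta_\pu\mid\eta_\pu\rangle=\overline{c(\pu,\qu)}\,\langle\D(\qu)\eta\mid\eta\rangle$, so that $\abs{\langle\D(\qu)\eta_\pu\mid\eta_\pu\rangle}=\abs{\langle\D(\qu)\eta\mid\eta\rangle}$ for every $\qu\in\quat$. Integrating against $d\varsigma$ and invoking Definition~\ref{admissible} for $\eta$ then gives $I(\eta_\pu)=I(\eta)<\infty$, which is the claim.

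The only delicate points are bookkeeping ones. One is the placement of the scalar phase factors: in the Baker--Campbell--Hausdorff form used in Propositions~\ref{uni_op} and~\ref{uni_re} they emerge as \emph{right} scalar multiples of Hilbert-space vectors, so the identity to invoke when pulling them out of the inner product is property~(v) above rather than Proposition~\ref{lft_mul}\,(d), and one must keep the two successive phases in the correct order inside $c(\pu,\qu)$ (irrelevant for the modulus, which is all we use). The other is checking that each phase has modulus one, i.e. that $\sum_{\tau=i,j,k}\tau(\qu_{\tau}\wedge\pu_{\tau})$ is a purely imaginary quaternion and that $\wedge$ is alternating on each slice $\mathbb{C}_\tau$, so that the $\pu=-\qu$ specialization is trivial. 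Neither is a genuine obstacle; the whole substance of the lemma is the ``unitary up to a unit phase'' structure together with the fact that conjugation by $\D(\pu)$ acts trivially on $\D(\qu)$ because the parameter group $(\quat,+)$ is abelian.
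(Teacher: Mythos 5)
Your proposal is correct and follows essentially the same route as the paper: both arguments reduce the claim to the pointwise identity $\left|\langle\D(\qu)\eta_\pu\mid\eta_\pu\rangle\right|=\left|\langle\D(\qu)\eta\mid\eta\rangle\right|$, obtained from the projective composition law (unit-modulus quaternionic phase appearing as a right scalar multiple, removed via property (v)) together with unitarity of $\D(\pu)$, and then integrate without needing any invariance property of $d\varsigma$. The only cosmetic difference is that the paper uses the commutation relation $\D(\qu)\D(\pu)=e^{\sum_{\tau=i,j,k}2\tau(\qu_{\tau}\wedge\pu_{\tau})}\,\D(\pu)\D(\qu)$ and then cancels $\D(\pu)$ by unitarity inside the inner product, whereas you move $\D(\pu)^{\dagger}=\D(-\pu)$ across and collapse $\D(\pu)^{\dagger}\D(\qu)\D(\pu)$ by two applications of the addition rule (in effect the covariance relation the paper proves separately later) — the same ingredients in a slightly different order.
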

\begin{proof}
For any $\qu,\pu\in\quat$,
\begin{equation}\label{ncomD}
\D(\mathfrak{q})\D(\mathfrak{p})
=e^{\sum_{\tau=i,j,k}2\tau(\mathfrak{q}_{\tau}\wedge\mathfrak{p}_{\tau})}\,\D(\mathfrak{p})\D(\mathfrak{q}).
\end{equation}
Now, let $\qu,\pu\in\quat$, then
	\begin{eqnarray*}
		I(\eta_{\pu})
		&=&\int_{\quat}\mid\langle \D(\qu)\eta_\pu\mid\eta_{\pu}\rangle\mid^{2}d\varsigma(r,\theta,\phi,\psi)\\
		&=& \int_{\quat}\mid\langle \D(\qu)\D(\pu)\eta\mid \D(\pu)\eta\rangle\mid^{2}d\varsigma(r,\theta,\phi,\psi)\mbox{~~as~~}\eta_\pu=\D(\pu)\eta\\
		&=& \int_{\quat}\mid\langle (\D(\pu)\D(\qu)\eta)\,e^{\sum_{\tau=i,j,k}2\tau(\mathfrak{q}_{\tau}\wedge\mathfrak{p}_{\tau})}\mid \D(\pu)\eta\rangle\mid^{2}d\varsigma(r,\theta,\phi,\psi)\mbox{~~by \ref{ncomD}}\\
		&=& \int_{\quat}\mid \,e^{-\sum_{\tau=i,j,k}2\tau(\mathfrak{q}_{\tau}\wedge\mathfrak{p}_{\tau})}\mid\,\mid\langle (\D(\pu)\D(\qu)\eta)\mid \D(\pu)\eta\rangle\mid^{2}d\varsigma(r,\theta,\phi,\psi)\\
		&=& \int_{\quat}\mid\langle \D(\qu)\eta\mid \eta\rangle\mid^{2}d\varsigma(r,\theta,\phi,\psi)\mbox{~~as~~}\D(\pu)\eta\mbox{~~is unitary and~~}\mid e^{-\sum_{\tau=i,j,k}2\tau(\mathfrak{q}_{\tau}\wedge\mathfrak{p}_{\tau})}\mid=1\\
		&=&I(\eta)\leq\infty.
	\end{eqnarray*}
	This concludes the proof.
\end{proof}

We have $\D(\qu)|e_0\rangle=|\qu\rangle$ and the resolution of the identity
$$\int_{\quat} |\qu\rangle\langle\qu|d\mu=\mathbb{I}_{\mathfrak{H}}.$$
Therefore
\begin{equation}\label{x1}
\int_{\quat}|\D(\qu)e_0\rangle\langle \D(\qu)e_0| d\mu=\mathbb{I}_{\mathfrak{H}}
\end{equation}
Therefore by Definition \ref{SQ} the representation $\D(\qu)$ is square integrable. From (\ref{x1}) we also have
$$\langle e_0| e_0\rangle=\int_{\quat} \langle e_0|\D(\qu)e_0\rangle\langle \D(\qu)e_0|e_0\rangle d\mu.$$
Since $|e_0\rangle\not=0,$ this gives us
$$0<\int_{\quat} |\langle e_0|\D(\qu)e_0\rangle|^2 d\mu=|~|e_0\rangle|^2<\infty,$$
therefore, $|e_0\rangle$ is an admissible vector (see definition (\ref{admissible}). Therefore, by Lemma \ref{Lem-admissible}, the set
$$\Lambda=\{\D(\qu)|e_0\rangle~|~\qu\in\quat\}$$
is a set of admissible vectors. On the other hand, the set $\Lambda$ is the set of all coherent states, so the set of vectors $\Lambda$ is not only total but is an overcomplete family in $\HI$ \cite{Bar}.

\begin{proposition} The representation $\D(\qu)$ is irreducible.
\end{proposition}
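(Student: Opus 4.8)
The plan is to prove, equivalently to the cyclicity criterion of \cite{Ba} quoted above, that $\HI$ has no nontrivial closed $\D$-invariant (right $\quat$-)subspace. So I would fix a closed $\D$-invariant subspace $M\subseteq\HI$ with $M\neq\{0\}$ and show $M=\HI$; this settles the statement, since then for any nonzero $\phi$ the subspace $\overline{\text{span}}\{\D(\qu)\phi\mid\qu\in\quat\}$ is all of $\HI$, i.e. $\phi$ is cyclic.

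First I would identify the infinitesimal generators of the one-parameter subgroups of $\D$. Since $\D(\qu)=e^{\qu\cdot\asd-\oqu\cdot\as}$, for $t\in\R$ we get $\D(t)=e^{t(\asd-\as)}$ and, for $\tau\in\{i,j,k\}$, $\D(t\tau)=e^{t\,\tau\cdot(\asd+\as)}$; hence the generators are $\sqrt{2}\,P_0=\asd-\as$ and $\sqrt{2}\,(\tau\cdot Q)=\tau\cdot(\asd+\as)$, which are exactly the elements of the Lie algebra $\hat{\mathfrak A}$. Because $M$ is closed and invariant under each of these strongly continuous unitary groups, it is invariant under each generator in the usual sense for unbounded operators; working on the dense set of smooth vectors of $\D$ lying in $M$ (which contains $M\cap\mathcal D_0$, where $\mathcal D_0=\text{span}_\quat\{|e_n\rangle\mid n\geq0\}$) one may form commutators. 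The key algebraic point: by Proposition \ref{xAq}, left multiplication by $\tau$ commutes with $\as$ and $\asd$, so $[\tau\cdot Q,\,P_0]=\tau\cdot[Q,P_0]=\tau\cdot\mathbb{I}_{\HI}$; hence $M$ is invariant under $\tau\cdot\mathbb{I}_{\HI}$ for each $\tau=i,j,k$. Composing operators, $M$ is then invariant under $Q=-(i\cdot\mathbb{I}_{\HI})(i\cdot Q)$, hence under $\as=\tfrac{1}{\sqrt{2}}(Q-P_0)$, $\asd=\tfrac{1}{\sqrt{2}}(Q+P_0)$ and $N=\asd\as$.

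To conclude, note that $M^\perp$ is also closed and $\D$-invariant, so if $|e_0\rangle\in M^\perp$ then $\overline{\text{span}}\{\D(\qu)|e_0\rangle\mid\qu\in\quat\}=\overline{\text{span}}\{|\qu\rangle\mid\qu\in\quat\}\subseteq M^\perp$, and the resolution of the identity (\ref{res}) shows this span is all of $\HI$, forcing $M=\{0\}$, a contradiction. Hence there is $\phi\in M$ with $\langle e_0\mid\phi\rangle\neq0$. Since $M$ and $M^\perp$ both reduce the self-adjoint operator $N$, its spectral projection onto $\ker N$, namely $|e_0\rangle\langle e_0|$, commutes with the orthogonal projection onto $M$ and so maps $M$ into $M$; therefore $|e_0\rangle\langle e_0\mid\phi\rangle\in M$ is a nonzero right $\quat$-multiple of $|e_0\rangle$, whence $|e_0\rangle\in M$. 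Applying $\asd$ repeatedly, $|e_n\rangle=\tfrac{(\asd)^n}{\sqrt{n!}}|e_0\rangle\in M$ for every $n$, so $M\supseteq\overline{\text{span}}\{|e_n\rangle\mid n\geq0\}=\HI$, i.e. $M=\HI$.

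The main difficulty is functional-analytic rather than algebraic: all of $P_0,Q,\as,\asd,N$ are unbounded, so one must argue carefully that a closed $\D$-invariant subspace is genuinely invariant under the generators and their products on a common core, that $\mathcal D_0$ is a core of analytic vectors on which $N$ is essentially self-adjoint — so that $E_{\{0\}}(N)=|e_0\rangle\langle e_0|$ is legitimate and lies in the commutant of the projection onto $M$ — and that the domain bookkeeping above is valid; this is exactly where the paper's remark that on the Fock space $\HI$ there are ``no domain problems'' is used. An alternative, spectral-theory-free route is to establish cyclicity of an arbitrary nonzero $\phi$ directly: write $\langle\D(\qu)\phi\mid\psi\rangle=0$ for all $\qu$, expand $\phi$ by the resolution of the identity (\ref{res}), and use the composition rule (\ref{com_rl_D}) with a change of variables to reduce this to a convolution-type identity that forces the coherent-state transform $\pu\mapsto\langle\pu\mid\psi\rangle$ to vanish, so $\psi=0$; the obstacle there is controlling the quaternionic phase factor $e^{-\sum_{\tau=i,j,k}\tau(\qu_\tau\wedge\pu_\tau)}$ — the same non-commutativity that blocked a closed form for $\sum_m\oqu^m I\qu^m/m!$.
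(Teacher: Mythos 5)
Your strategy is sound in outline, but it is a genuinely different and considerably heavier route than the one taken in the paper. The paper's proof is essentially two lines and uses no unbounded-operator theory at all: if $\phi$ is orthogonal to every $\D(\qu)e_0$, then the resolution of the identity (\ref{x1}) gives $\|\phi\|^2=\int_{\quat}\langle\phi\mid\D(\qu)e_0\rangle\langle\D(\qu)e_0\mid\phi\rangle\,d\mu=0$, so the orbit $\Lambda$ of the ground state is total, and irreducibility is then read off from the cyclicity/admissibility statements quoted before the proposition. You instead run the classical von Neumann-type argument: a closed invariant subspace $M$ reduces the one-parameter groups $t\mapsto\D(t)$ and $t\mapsto\D(t\tau)$, hence (on an invariant core) the generators $P_0$ and $\tau\cdot Q$, then the commutators $\tau\cdot\mathbb{I}_{\mathfrak{H}}$, then $Q,\as,\asd,N$, and finally the vacuum spectral projection $|e_0\rangle\langle e_0|$, which together with the totality of the coherent states forces $M=\HI$. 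What your route buys is a statement the paper's argument does not literally deliver: it excludes nontrivial closed invariant subspaces directly, i.e.\ it shows every nonzero vector is cyclic, rather than only the cyclicity of $e_0$ followed by an appeal to the quoted criteria. What it costs is exactly the functional analysis you flag, and you should be aware that the paper's remark about the absence of domain problems does not by itself supply it: you need a quaternionic Stone theorem for the groups $t\mapsto\D(t\tau)$, density and invariance of the smooth (G\aa rding) vectors of the restricted representation in $M$, essential self-adjointness of $N$ on your $\mathcal{D}_0$, and commutation of the projection onto $M$ with the spectral projections of $N$; all of this is available via the $S$-spectrum spectral theory (see \cite{ack}), but it must be imported and verified, which is substantially more work than the published proof. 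Note finally that the step you use to rule out $e_0\in M^{\perp}$ (orthogonality to the orbit of $e_0$ plus the resolution of the identity) is precisely the paper's entire proof, and your second, ``spectral-theory-free'' alternative is essentially that argument: by testing against $\D(\qu)e_0$ rather than $\D(\qu)\phi$ for arbitrary $\phi$, the quaternionic phase factor you worry about never enters.
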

\begin{proof}
	Let $\phi\in\HI$ be any vector such that
	$$\langle \D(\qu)e_0|\phi\rangle=0, \quad \forall \qu\in\quat.$$
	Then from (\ref{x1}) we have
	$$\int_{\quat} \langle \phi|\D(\qu)e_0\rangle\langle \D(\qu)e_0|\phi\rangle d\mu=\|\phi\|^2=0$$
	which implies $\phi=0$. Hence $\Lambda^{\perp}=\{0\}$. That is, for each $\qu\in\quat$,  $\D(\qu)|e_0\rangle$ is cyclic and $\Lambda$ is dense in $\HI$. Therefore $\D(\qu)$ is irreducible.
\end{proof}
Therefore, in complete analogy with the complex displacement operator, we have established a square integrable, unitary and irreducible displacement operator in the quaternionic case.
\section{Symmetry}
In quantum mechanics (complex or quaternion) Hamiltonians are expected to be invariant under a unitary transformation. In this section, let us show this invariance for the quaternionic Hamiltonian $\hat{H}_I$ with the parity operator. In this section, we shall also discuss some properties of the displacement operator.
\subsection{Symmetry properties}
Consider the parity operator $\Pi=e^{\tau \pi a^{\dagger}a}=e^{\tau \pi N}$, $\tau\in\{i,j,k\}$ acting on $\HI$ as a linear operator. Note that
$$e^{\tau\pi N}|e_n\rangle=e^{\tau\pi n}|e_n\rangle=(\cos(\pi n)+\tau\sin(n\pi))|e_n\rangle=(-1)^n|e_n\rangle.$$
That is,
$$\Pi|e_n\rangle=(-1)^n|e_n\rangle$$
or equivalently
$$\Pi=\sum_{n=0}^{\infty}(-1)^n |e_n\rangle\langle e_n|.$$
Let us verify some properties of $\Pi$ and symmetry of operators.
\begin{enumerate}
	\item $\Pi^2|e_n\rangle=(-1)^n\Pi|e_n\rangle=(-1)^{2n}|e_n\rangle=\mathbb{I}_{\mathfrak{H}}|e_n\rangle$. Therefore
	$$\Pi^2=\mathbb{I}_{\mathfrak{H}}.$$
	That is, $\Pi$ is a unitary operator.
	\item
	\begin{eqnarray*}
		\Pi \as\Pi|e_n\rangle&=&(-1)^n\Pi \as|e_n\rangle=\sqrt{n}(-1)^n\Pi|e_{n-1}\rangle=(-1)^{2n-1}\sqrt{n}|e_{n-1}\rangle\\
		&=&-\sqrt{n}|e_{n-1}\rangle=-\as|e_n\rangle.
	\end{eqnarray*}
	Therefore
	$$\Pi \as\Pi=-\as.$$
	Similarly we have
	$$\Pi \asd\Pi=-\asd,\quad \Pi P_{\tau}\Pi=-P_{\tau},\quad \Pi Q\Pi=-Q.$$
	In other words, in terms of Poisson brackets,
	$$\{\Pi, \as\}=0,\quad\{\Pi, \asd\}=0,\quad\{\Pi, P_{\tau}\}=0,\quad\{\Pi, Q\}=0.$$
	\item[(3)] A straight forward calculation shows that
	$\Pi Q^2\Pi|e_n\rangle=Q^2|e_n\rangle$ and $\Pi P_I^2\Pi|e_n\rangle=P_I^2|e_n\rangle$. That is,
	$$\Pi Q^2\Pi=Q^2\quad\text{and}\quad \Pi P_I^2\Pi=P_I^2,$$
	and therefore $$\Pi \hat{H}_I\Pi=\hat{H}_I, $$ where $\hat{H}_I$ is the Hamiltonian as in equation (\ref{xxx}). That is, nondegenerate eigenkets of the Hamiltonians are also parity eigenkets.
	\end{enumerate}
Next result states how the parity operator acts on the displacement operator:
\begin{lemma} We have $\Pi \D(\mx)\Pi=\D(-\mx)$.
\end{lemma}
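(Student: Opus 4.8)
The plan is to compute the action of $\Pi\D(\mx)\Pi$ on an arbitrary basis vector $|e_n\rangle$ and show it agrees with $\D(-\mx)|e_n\rangle$, using the fact that $\Pi^2=\mathbb{I}_{\mathfrak{H}}$ and that conjugation by $\Pi$ flips the sign of $\mathsf{a}$ and $\mathsf{a}^\dagger$. The cleanest route is to first establish the operator identity $\Pi(\mx\cdot\mathsf{a}^\dagger-\oqu\cdot\mathsf{a})\Pi=-(\mx\cdot\mathsf{a}^\dagger-\oqu\cdot\mathsf{a})$, and then propagate it through the exponential series defining $\D(\mx)=e^{\mx\cdot\mathsf{a}^\dagger-\oqu\cdot\mathsf{a}}$.

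The first step is the conjugation of the generator. From the already-established relations $\Pi\mathsf{a}\Pi=-\mathsf{a}$ and $\Pi\mathsf{a}^\dagger\Pi=-\mathsf{a}^\dagger$, I would note that $\Pi$ commutes with the left multiplication by a fixed quaternion: indeed $\Pi$ is diagonal in the basis $\{|e_n\rangle\}$ with real entries $(-1)^n$, and by Proposition \ref{lft_mul} (f) left multiplication acts basis-element-wise as $\mx\cdot|e_n\rangle=|e_n\rangle\mx$, so $\Pi(\mx\cdot\phi)=\mx\cdot(\Pi\phi)$ for all $\phi\in\mathfrak{H}$; equivalently $\Pi\cdot(\mx\cdot\mathsf{a})=\mx\cdot(\Pi\mathsf{a})$. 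Combining these gives $\Pi(\mx\cdot\mathsf{a}^\dagger)\Pi=\mx\cdot(\Pi\mathsf{a}^\dagger\Pi)=-\mx\cdot\mathsf{a}^\dagger$ and similarly $\Pi(\oqu\cdot\mathsf{a})\Pi=-\oqu\cdot\mathsf{a}$, hence $\Pi\mathcal{A}\Pi=-\mathcal{A}$ where $\mathcal{A}=\mx\cdot\mathsf{a}^\dagger-\oqu\cdot\mathsf{a}$.

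The second step is to pass to the exponential. Writing $\D(\mx)=\sum_{n=0}^{\infty}\mathcal{A}^n/n!$ (which converges in operator norm on the relevant dense domain, as discussed at the start of Section 4), and inserting $\Pi^2=\mathbb{I}_{\mathfrak{H}}$ between consecutive factors, one gets $\Pi\mathcal{A}^n\Pi=(\Pi\mathcal{A}\Pi)^n=(-\mathcal{A})^n$, so
$$\Pi\D(\mx)\Pi=\sum_{n=0}^{\infty}\frac{(-\mathcal{A})^n}{n!}=e^{-\mathcal{A}}=e^{(-\mx)\cdot\mathsf{a}^\dagger-\overline{(-\mx)}\cdot\mathsf{a}}=\D(-\mx),$$
where I use $\overline{-\mx}=-\oqu$. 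One subtlety to check is that the series manipulation is legitimate: since $\Pi$ is bounded (unitary) and the partial sums of the exponential converge strongly, $\Pi(\cdot)\Pi$ can be taken inside the limit termwise; alternatively one verifies the identity on each $|e_n\rangle$ directly, using $\D(\mx)|e_0\rangle=|\mx\rangle=e^{-|\mx|^2/2}\sum_m|e_m\rangle\mx^m/\sqrt{m!}$ and $\Pi|e_m\rangle=(-1)^m|e_m\rangle$ to see $\Pi\D(\mx)\Pi|e_0\rangle=\Pi|\mx\rangle=e^{-|\mx|^2/2}\sum_m|e_m\rangle(-\mx)^m/\sqrt{m!}=|-\mx\rangle=\D(-\mx)|e_0\rangle$, and then extending to general $|e_n\rangle$ via $|e_n\rangle=(\mathsf{a}^\dagger)^n|e_0\rangle/\sqrt{n!}$ together with $\Pi\mathsf{a}^\dagger\Pi=-\mathsf{a}^\dagger$.

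The main obstacle is not conceptual but a matter of rigor: one must be careful that left multiplication by a quaternion genuinely commutes with $\Pi$ (this is where Proposition \ref{lft_mul} (f) and the basis-dependence of the left action are essential — it would fail for a generic diagonal operator with quaternionic entries), and that the rearrangement $\Pi e^{\mathcal{A}}\Pi=e^{-\mathcal{A}}$ is justified on the dense domain where everything is defined. Both points are handled by the remarks at the beginning of Section 4 on the absence of domain problems for these operators on the regular/anti-regular function space, so no new technical machinery is needed.
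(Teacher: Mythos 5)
Your proof is correct, but it follows a genuinely different route from the paper's. The paper works with the normal-ordered form $\D(\mx)=e^{-|\mx|^2/2}e^{\mx\cdot \asd}e^{-\overline{\mx}\cdot \as}$, expands $\D(-\mx)|e_n\rangle$ and $\Pi\D(\mx)\Pi|e_n\rangle$ as explicit double series in the basis, and matches the signs $(-1)^j(-1)^{n+m-j}(-1)^n=(-1)^m$ term by term; it never needs to conjugate the generator. You instead prove the operator identity $\Pi\mathcal{A}\Pi=-\mathcal{A}$ for $\mathcal{A}=\mx\cdot\asd-\oqu\cdot\as$ and push it through the exponential series using $\Pi^2=\mathbb{I}_{\mathfrak{H}}$, so that $\Pi e^{\mathcal{A}}\Pi=e^{-\mathcal{A}}=\D(-\mx)$. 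The one ingredient your route requires that the paper's does not is the commutation of $\Pi$ with the left scalar multiplication, $\Pi(\mx\cdot\phi)=\mx\cdot(\Pi\phi)$; your justification is sound, since $\Pi$ is right-linear, bounded, self-adjoint and diagonal with \emph{real} entries $(-1)^n$ in the very basis $\{|e_n\rangle\}$ that defines the left multiplication (Proposition \ref{lft_mul} (f) plus Remark \ref{Rem123} for passing $\Pi$ and the left action through the convergent sum) — and you rightly flag that this would fail for a generic diagonal operator with quaternionic entries. What each approach buys: yours is shorter, avoids the double-series bookkeeping, and makes transparent that the only structural facts used are $\Pi\as\Pi=-\as$, $\Pi\asd\Pi=-\asd$ and $\Pi^2=\mathbb{I}_{\mathfrak{H}}$; the paper's brute-force expansion avoids any appeal to the compatibility of $\Pi$ with the left multiplication and keeps everything at the level of explicit coefficients. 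Your phrase ``converges in operator norm'' should be softened (the generator is unbounded, so convergence is on the dense domain, exactly as the paper itself assumes at the start of Section 4), and your alternative check-on-$|e_0\rangle$-then-extend sketch is the only weak spot — extending from $|e_0\rangle$ to $|e_n\rangle$ would need the commutation relations between $\D(\mx)$ and $\asd$, which appear only later — but since it is offered merely as a backup to a complete primary argument, the proof stands.
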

\begin{proof}
	We have $\D(\mx)=e^{-|\mx|^2/2}e^{\mx\cdot  \asd}e^{-\overline{\mx}\cdot \as}$.
	Since $$(\mx\cdot \asd)^{\dagger}={\asd}^{\dagger}\cdot\overline{\mx}=\as \cdot\overline{\mx}=\overline{\mx}\cdot\as$$
	and similarly $(\overline{\mx}\cdot \as)^{\dagger}=\mx\cdot \asd$ we have
	$$\D(\mx)^{\dagger}=\D(-\mx)=e^{-|\mx|/2}e^{-\mx\cdot  \asd}e^{\overline{\mx}\cdot \as}.$$
	Also note that $(\mx\cdot \as)^n=\mx^n\cdot (\as)^n.$ Since $(\overline{\mx}\cdot \as)^{n+1}|e_n\rangle=0$, we have
	$$e^{\overline{\mx}\cdot \as}|e_n\rangle=\sum_{j=0}^n\frac{\overline{\mx}^j}{j!}\sqrt{\frac{n!}{(n-j)!}}\cdot |e_{n-j}\rangle$$	
	and therefore
	$$e^{-\mx\cdot \asd}e^{\overline{\mx}\cdot \as}|e_n\rangle
	=\sum_{j=0}^n\sum_{m=0}^{\infty}\frac{(-1)^m\overline{\mx}^j\mx^m}{m!j!}\sqrt{\frac{n!(n+m-j)!}{[(n-j)!]^2}}
	\cdot |e_{n+m-j}\rangle.$$
	Therefore
	$$\D(-\mx)|e_n\rangle=e^{-|\mx|^2/2}	 \sum_{j=0}^n\sum_{m=0}^{\infty}\frac{(-1)^m~\overline{\mx}^j\mx^m}{m!j!}\sqrt{\frac{n!(n+m-j)!}{[(n-j)!]^2}}
	\cdot |e_{n+m-j}\rangle.$$
	Reasoning in a similar way we have
	\begin{eqnarray*}
		\Pi \D(\mx)\Pi |e_n\rangle&=&(-1)^n\Pi \D(\mx)|e_n\rangle\\
		&=&(-1)^n\Pi e^{-|\mx|^2/2}	 \sum_{j=0}^n\sum_{m=0}^{\infty}\frac{(-1)^j~\overline{\mx}^j\mx^m}{m!j!}\sqrt{\frac{n!(n+m-j)!}{[(n-j)!]^2}}
		\cdot |e_{n+m-j}\rangle.\\
		&=&(-1)^ne^{-|\mx|^2/2}	 \sum_{j=0}^n\sum_{m=0}^{\infty}\frac{(-1)^j(-1)^{n+m-j}~\overline{\mx}^j\mx^m}{m!j!}\sqrt{\frac{n!(n+m-j)!}{[(n-j)!]^2}}
		\cdot |e_{n+m-j}\rangle.\\
		&=&e^{-|\mx|^2/2}	 \sum_{j=0}^n\sum_{m=0}^{\infty}\frac{(-1)^m~\overline{\mx}^j\mx^m}{m!j!}\sqrt{\frac{n!(n+m-j)!}{[(n-j)!]^2}}
		\cdot |e_{n+m-j}\rangle.
	\end{eqnarray*}
	Therefore $\Pi \D(\mx)\Pi |e_n\rangle=\D(-\mx)|e_n\rangle$ for all $n$. Hence  $\Pi \D(\mx)\Pi=\D(-\mx)$.
\end{proof}
\subsection{Some properties of the displacement operator}
The following proposition discusses two versions for the displacement operator. We note that in the complex case the normal and the anti-normal ordering are both used without making any distinction between them.
\begin{proposition}
	The displacement operator $\D(\qu)$ satisfies
	\begin{itemize}
		\item [(i)] the normal ordering property:  $\D(\qu)=e^{-\frac{|\qu|^2}{2}}e^{\qu\cdot\mathsf{a}^\dagger}e^{-\oqu\cdot\mathsf{a}}$,
		\item [(ii)] the anti-normal ordering property:  $\D(\qu)=e^{\frac{|\qu|^2}{2}}e^{-\oqu\cdot\mathsf{a}}e^{\qu\cdot\mathsf{a}^\dagger}$.
	\end{itemize}
	Furthermore, the coherent state $\mid\gamma_{\qu}\rangle$ is generated from the ground state $\mid e_0\rangle$ by the displacement operator $\D(\qu)$,
	\begin{equation}\label{coh_dis}
	\mid\gamma_{\qu}\rangle=\D(\qu)\mid e_0\rangle.
	\end{equation}
\end{proposition}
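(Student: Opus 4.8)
The plan is to derive all three claims from the Baker--Campbell--Hausdorff identity (\ref{B-C-H-I}), in the same spirit as the proof of Proposition \ref{uni_re}. Write $\D(\qu)=e^{\mathcal{A}+\mathcal{B}}$ with $\mathcal{A}=\qu\cdot\asd$ and $\mathcal{B}=-\oqu\cdot\as$. The first step is to compute $[\mathcal{A},\mathcal{B}]$. By Proposition \ref{xAq} the operators $\as$ and $\asd$ commute with every left scalar multiplication, so iterating Proposition \ref{lft_mul}(c) together with (\ref{rgt_mul-op}) yields, for $A,B\in\{\as,\asd\}$ and $\pu,\qu\in\quat$, the composition rule $(\qu\cdot A)(\pu\cdot B)=(\qu\pu)\cdot(AB)$. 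Applying this twice gives $(\qu\cdot\asd)(\oqu\cdot\as)=|\qu|^2\cdot(\asd\as)$ and $(\oqu\cdot\as)(\qu\cdot\asd)=|\qu|^2\cdot(\as\asd)$, hence
$$[\mathcal{A},\mathcal{B}]=|\qu|^2\cdot(\as\asd-\asd\as)=|\qu|^2\cdot[\as,\asd]=|\qu|^2\,\mathbb{I}_{\mathfrak{H}},$$
using $[\as,\asd]=\mathbb{I}_{\mathfrak{H}}$ and $r\cdot\phi=\phi r$ for $r\in\mathbb{R}$ (Proposition \ref{lft_mul}(e)). In particular $[\mathcal{A},\mathcal{B}]$ is a real multiple of the identity, so it commutes with both $\mathcal{A}$ and $\mathcal{B}$ and the hypotheses of (\ref{B-C-H-I}) hold.

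For item (i), (\ref{B-C-H-I}) gives $\D(\qu)=e^{\mathcal{A}+\mathcal{B}}=e^{-\frac12[\mathcal{A},\mathcal{B}]}e^{\mathcal{A}}e^{\mathcal{B}}=e^{-|\qu|^2/2}e^{\qu\cdot\asd}e^{-\oqu\cdot\as}$; since $e^{-|\qu|^2/2}$ is a real scalar, the left/right placement ambiguity of the BCH identity on a right quaternionic Hilbert space (noted right after (\ref{B-C-H-I})) is immaterial. For item (ii), I would apply the same identity with the roles of $\mathcal{A}$ and $\mathcal{B}$ interchanged, obtaining $\D(\qu)=e^{\mathcal{B}+\mathcal{A}}=e^{-\frac12[\mathcal{B},\mathcal{A}]}e^{\mathcal{B}}e^{\mathcal{A}}=e^{+|\qu|^2/2}e^{-\oqu\cdot\as}e^{\qu\cdot\asd}$, which is exactly the anti-normal ordering.

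Finally, for (\ref{coh_dis}) I would use (i) and evaluate on the ground state. Since $\as\mid e_0\rangle=0$, iterating the composition rule above gives $(\oqu\cdot\as)^n\mid e_0\rangle=\oqu^n\cdot(\as^n\mid e_0\rangle)=0$ for every $n\ge 1$ (cf. Remark \ref{Rem123}), whence $e^{-\oqu\cdot\as}\mid e_0\rangle=\mid e_0\rangle$. Therefore $\D(\qu)\mid e_0\rangle=e^{-|\qu|^2/2}e^{\qu\cdot\asd}\mid e_0\rangle=(e^{-|\qu|^2/2}\cdot e^{\qu\cdot\asd})\mid e_0\rangle$, and the right-hand side was already identified with $\mid\gamma_{\qu}\rangle$ in Section 3. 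I do not expect a genuine obstacle in this argument; the only delicate point is the bookkeeping for products of left-multiplied operators in the commutator computation, which is handled cleanly by Proposition \ref{xAq} and by the fact that $|\qu|^2$ is a central (real) scalar.
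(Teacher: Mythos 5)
Your proposal is correct and follows essentially the same route as the paper: both write $\D(\qu)=e^{\mathcal{A}+\mathcal{B}}$ with $\mathcal{A}=\qu\cdot\asd$, $\mathcal{B}=-\oqu\cdot\as$, invoke the Baker--Campbell--Hausdorff identity (\ref{B-C-H-I}) for (i) and (ii), and deduce (\ref{coh_dis}) from $\as\mid e_0\rangle=0$ together with the identification of $(e^{-|\qu|^2/2}\cdot e^{\qu\cdot\asd})\mid e_0\rangle$ with $\mid\gamma_{\qu}\rangle$ already made in Section 3. The only divergence is how the central term is evaluated: you compute the operator commutator directly via the rule $(\qu\cdot A)(\pu\cdot B)=(\qu\pu)\cdot(AB)$, legitimately justified by Proposition \ref{xAq} and Proposition \ref{lft_mul}(c), obtaining $[\mathcal{A},\mathcal{B}]=|\qu|^2\,\mathbb{I}_{\mathfrak{H}}$, whereas the paper evaluates the slice-wise bracket (\ref{liebraH}) and displays $[\qu\cdot\asd,-\oqu\cdot\as]=-|\qu|^2\,\mathbb{I}_{\mathfrak{H}}$; your sign is the one consistent with the stated normal ordering (i) (decomposing $-\oqu$ correctly into its slice components also yields $+|\qu|^2\,\mathbb{I}_{\mathfrak{H}}$, so the paper's displayed value is a sign slip), hence your computation in fact tidies up that step.
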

\begin{proof}
	Now, using Baker-Campbell-Hausdorff identity, we have $$e^{\qu\cdot\mathsf{a}^\dagger}e^{-\oqu\cdot\mathsf{a}}=e^{\frac{1}{2}[\qu\cdot\mathsf{a}^\dagger,-\oqu\cdot\mathsf{a}]}e^{\qu\cdot\mathsf{a}^\dagger-\oqu\cdot\mathsf{a}}~\mbox{~and~}~e^{\qu\cdot\mathsf{a}^\dagger}e^{-\oqu\cdot\mathsf{a}}=e^{\frac{1}{2}[-\oqu\cdot\mathsf{a},\qu\cdot\mathsf{a}^\dagger]}e^{\qu\cdot\mathsf{a}^\dagger-\oqu\cdot\mathsf{a}}.$$
	This implies,
	$$\D(\qu)=e^{-\frac{1}{2}[\qu\cdot\mathsf{a}^\dagger,-\oqu\cdot\mathsf{a}]}e^{\qu\cdot\mathsf{a}^\dagger}e^{-\oqu\cdot\mathsf{a}}=e^{-\frac{1}{2}[-\oqu\cdot\mathsf{a},\qu\cdot\mathsf{a}^\dagger]}e^{-\oqu\cdot\mathsf{a}}e^{\qu\cdot\mathsf{a}^\dagger}.$$
	But, if $\qu=q_0+iq_i+jq_j+kq_k$, then
	\begin{eqnarray*}
		[\qu\cdot\mathsf{a}^\dagger,-\oqu\cdot\mathsf{a}]
		&=&-\left( \sum_{\tau=i,j,k}\left[(\dfrac{q_0}{\sqrt{3}}+q_\tau\,\tau)(\dfrac{q_0}{\sqrt{3}}-q_\tau\,\tau) \right]\right)\cdot\mathbb{I}_{\mathfrak{H}}\\
		&=&-\left( \sum_{\tau=i,j,k}\left[\dfrac{{q_0}^2}{3}-q_\tau^2\,\tau^2 \right]\cdot\mathbb{I}_{\mathfrak{H}}\right) =-(q_0^2+q_i^2+q_j^2+q_k^2)\mathbb{I}_{\mathfrak{H}}=-|\qu|^2\mathbb{I}_{\mathfrak{H}}.
	\end{eqnarray*}
	Thus $\D(\qu)=e^{-\frac{|\qu|^2}{2}}e^{\qu\cdot\mathsf{a}^\dagger}e^{-\oqu\cdot\mathsf{a}}=e^{\frac{|\qu|^2}{2}}e^{-\oqu\cdot\mathsf{a}}e^{\qu\cdot\mathsf{a}^\dagger}$, that is, (i) and (ii) follow. From the fact that $\mathsf{a}\mid e_0\rangle=0$ and the normal ordering property: $\D(\qu)=e^{-\frac{|\qu|^2}{2}}e^{\qu\cdot\mathsf{a}^\dagger}e^{-\oqu\cdot\mathsf{a}}$, we get
	$$\D(\qu)\mid e_0\rangle=e^{-\frac{|\qu|^2}{2}}e^{\qu\cdot\mathsf{a}^\dagger}e^{-\oqu\cdot\mathsf{a}}\mid e_0\rangle=e^{-\frac{|\qu|^2}{2}}e^{\qu\cdot\mathsf{a}^\dagger}\mid e_0\rangle=\mid\gamma_{\qu}\rangle.$$  The claims get proved.
\end{proof}
In the following proposition we show a peculiarity of the displacement operator, namely the analog of the covariance property in the quaternionic setting:
\begin{proposition}
	For any $\qu,\pu\in\quat$, $\D(\qu)\D(\pu)\D(\qu)^{\dagger}=e^{-\sum_{\tau=i,j,k}2\tau(\mathfrak{q}_{\tau}\wedge\mathfrak{p}_{\tau})}\,\D(\pu)$.
\end{proposition}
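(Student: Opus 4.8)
The plan is to deduce the identity purely from the composition rule \eqref{com_rl_D} of Proposition \ref{uni_re} together with the fact that $\D(\qu)$ is unitary with $\D(\qu)^{\dagger}=\D(-\qu)$; the latter holds because the generator $\qu\cdot\mathsf{a}^{\dagger}-\oqu\cdot\mathsf{a}$ is anti-self-adjoint (by \eqref{sc_mul_aj-op} together with Proposition \ref{xAq}), so that $\D(\qu)^{\dagger}=e^{-(\qu\cdot\mathsf{a}^{\dagger}-\oqu\cdot\mathsf{a})}=\D(-\qu)$, a fact already used in the discussion of the parity operator. Thus the object to compute is $\D(\qu)\D(\pu)\D(-\qu)$, and the whole statement will follow from two applications of \eqref{com_rl_D} once the quaternionic phase factors are bookkept correctly.

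Concretely, write $\omega(\qu,\pu):=\sum_{\tau=i,j,k}\tau(\qu_{\tau}\wedge\pu_{\tau})$ and record the three elementary facts I shall use: the pairing $(\qu,\pu)\mapsto\omega(\qu,\pu)$ is $\mathbb{R}$-bilinear (it is assembled componentwise from the bilinear forms $\qu_{\tau}\wedge\pu_{\tau}$), it is antisymmetric, $\omega(\pu,\qu)=-\omega(\qu,\pu)$, and therefore $\omega(\qu,\qu)=0$; moreover $\omega(\qu,\pu)$ is a single fixed quaternion, so $e^{\omega(\qu,\pu)}e^{\omega(\qu,\pu)}=e^{2\omega(\qu,\pu)}$. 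Now apply \eqref{com_rl_D} to the last two factors,
\[
\D(\pu)\D(-\qu)=e^{-\omega(\pu,-\qu)}\,\D(\pu-\qu)=e^{\omega(\pu,\qu)}\,\D(\pu-\qu),
\]
and then to the product of this with $\D(\qu)$ on the left,
\[
\D(\qu)\D(\pu-\qu)=e^{-\omega(\qu,\pu-\qu)}\,\D(\pu)=e^{-\omega(\qu,\pu)}\,\D(\pu),
\]
using $\omega(\qu,\qu)=0$ in the last step. Since the two accumulated scalars are powers of the single quaternion $\omega(\qu,\pu)$ — indeed $e^{\omega(\pu,\qu)}=e^{-\omega(\qu,\pu)}$ — they combine to $e^{-2\omega(\qu,\pu)}$, giving
\[
\D(\qu)\D(\pu)\D(\qu)^{\dagger}=e^{-2\omega(\qu,\pu)}\,\D(\pu)=e^{-\sum_{\tau=i,j,k}2\tau(\qu_{\tau}\wedge\pu_{\tau})}\,\D(\pu),
\]
which is the assertion.

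The single point that requires genuine care — and what I regard as the main obstacle — is the side on which the quaternionic phases act. Because $\quat$ is noncommutative, the composition rule must be read, exactly as in Propositions \ref{uni_op} and \ref{uni_re}, on an arbitrary $\phi\in\HI$ in the form $\D(\qu)\D(\pu)\phi=\big(\D(\qu+\pu)\phi\big)\,e^{-\omega(\qu,\pu)}$, i.e.\ with the phase multiplying the output vector on the right; one then has to commute such right-acting phases past a subsequent application of $\D$, which is legitimate precisely because every $\D(\cdot)$ is right $\quat$-linear, and finally to recombine the phases using associativity of right multiplication. Keeping to this discipline makes the computation above rigorous, and the only further inputs are the bilinearity and antisymmetry of $\wedge$ (routine from its definition) and the unitarity of $\D$ (already proved in Proposition \ref{uni_op}). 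I would close by noting that, as a sanity check, setting $\pu=\qu$ collapses the phase to $1$ and recovers $\D(\qu)\D(\qu)\D(\qu)^{\dagger}=\D(\qu)$, consistent with $\D(\qu)^{\dagger}=\D(\qu)^{-1}$, and that the relation is the quaternionic analogue of the complex covariance identity $D(\alpha)D(\beta)D(\alpha)^{\dagger}=e^{\alpha\overline{\beta}-\overline{\alpha}\beta}D(\beta)$.
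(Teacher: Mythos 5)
Your proof is correct and follows essentially the same route as the paper: the paper applies the Baker--Campbell--Hausdorff identity twice to $e^{\mathcal{A}}e^{\mathcal{B}}e^{-\mathcal{A}}$ with the quaternionic phases acting on the right of the vector, which is exactly the computation you perform, only packaged through the composition rule \eqref{com_rl_D} of Proposition \ref{uni_re} together with $\D(\qu)^{\dagger}=\D(-\qu)$ and the antisymmetry of $\wedge$. Your explicit care about right-acting phases and the right $\quat$-linearity of $\D$ matches the convention the paper sets up in the proof of Proposition \ref{uni_op}.
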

\begin{proof}
	Let $\qu,\pu\in\quat$ and $\mathcal{A}=(\mathfrak{q}\cdot \mathsf{a}^\dagger-\oqu\cdot \mathsf{a})$ and $\mathcal{B}=(\mathfrak{p}\cdot \mathsf{a}^\dagger-\bar{\mathfrak{p}}\cdot \mathsf{a})$. Firstly, it should be noticed that, from Baker-Campbell-Hausdorff identity, we get
	\begin{eqnarray*}
		\D(\qu)[\D(\pu)\D(\qu)^{\dagger}]\phi
		&=&e^{A}[e^{B}e^{-A}]\phi=e^{A}[e^{\frac{1}{2}[B,-A]}e^{B-A}]\phi\\
		&=&e^{A}[e^{B-A}\phi\, e^{\frac{1}{2}[B,-A]}]=e^{B}(\phi\, e^{\frac{1}{2}[B,-A]})e^{\frac{1}{2}[A,B-A]}\\
		&=&e^{B}(\phi\, e^{[A,B]})=[e^{-\sum_{\tau=i,j,k}2\tau(\mathfrak{q}_{\tau}\wedge\mathfrak{p}_{\tau})}\,\D(\pu)]\phi.
	\end{eqnarray*}
Hence,
 $$\D(\qu)\D(\pu)\D(\qu)^{\dagger}=e^{-\sum_{\tau=i,j,k}2\tau(\mathfrak{q}_{\tau}\wedge\mathfrak{p}_{\tau})}\,\D(\pu),$$ the desired result.
\end{proof}
In the complex case, the displacement operator for one mode in quantum optics is the shift operator $\D(z)=e^{z\asd-\oz\as}$, where $z$ is the amount of displacement in the optical phase space. The following proposition validate this claim in the quaternion case.
\begin{proposition} The displacement operator $\D(\mx)$ satisfies the following properties.
	$$(i)~~\D(\mx)^{\dagger}\as \D(\mx)=\as+\mx.\quad\quad
	(ii)~~\D(\mx)^{\dagger}\asd \D(\mx)=\asd+\overline{\mx}.$$
\end{proposition}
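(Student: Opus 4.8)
The plan is to run the classical Hadamard argument $e^{-X}Ye^{X}=Y-[X,Y]+\tfrac12[X,[X,Y]]-\cdots$ with $X$ the infinitesimal generator of $\D(\mx)$, keeping careful track of the non-canonical left multiplication. Write $\D(\mx)=e^{\mathcal{A}}$ with $\mathcal{A}=\mx\cdot\asd-\oqu\cdot\as$ as in \eqref{infi}; by Proposition \ref{uni_op} the operator $\D(\mx)$ is unitary, so $\D(\mx)^{\dagger}=\D(\mx)^{-1}$, and it is enough to evaluate $e^{-\mathcal{A}}\as\,e^{\mathcal{A}}$ and $e^{-\mathcal{A}}\asd\,e^{\mathcal{A}}$ on the dense domain on which everything is defined (the regular/anti-regular realization of $\HI$, so that, as already observed, there are no domain obstructions and the exponential series may be manipulated freely).

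First I would compute the basic commutator $[\as,\mathcal{A}]$. By Proposition \ref{xAq} the operators $\as$ and $\asd$ commute past left multiplication by any quaternion, i.e. $\as(\mx\cdot\phi)=\mx\cdot(\as\phi)$ and $\asd(\mx\cdot\phi)=\mx\cdot(\asd\phi)$; combining this with $[\as,\asd]=\mathbb{I}_{\HI}$ gives $[\as,\mx\cdot\asd]=\mx\cdot\mathbb{I}_{\HI}$ and $[\as,\oqu\cdot\as]=0$, hence $[\as,\mathcal{A}]=\mx\cdot\mathbb{I}_{\HI}$. The same argument, using $[\asd,\as]=-\mathbb{I}_{\HI}$, yields $[\asd,\mathcal{A}]=\oqu\cdot\mathbb{I}_{\HI}$.

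The second point is that the scalar operator $\mx\cdot\mathbb{I}_{\HI}$ is \emph{central with respect to} $\mathcal{A}$: since $\mx\,\oqu=\oqu\,\mx$ for every quaternion, a short computation using Proposition \ref{lft_mul}(c) and Proposition \ref{xAq} gives $[\,\mx\cdot\mathbb{I}_{\HI},\mathcal{A}\,]=0$, and similarly $[\,\oqu\cdot\mathbb{I}_{\HI},\mathcal{A}\,]=0$. Because of this the iterated commutators all vanish and the Hadamard series truncates exactly as in the commutative case: by induction $\as\mathcal{A}^{n}=\mathcal{A}^{n}\as+n(\mx\cdot\mathbb{I}_{\HI})\mathcal{A}^{n-1}$, and summing $\sum_{n}\tfrac1{n!}(\cdot)$ gives $\as e^{\mathcal{A}}=e^{\mathcal{A}}(\as+\mx\cdot\mathbb{I}_{\HI})$, i.e. $\D(\mx)^{\dagger}\as\D(\mx)=\as+\mx\cdot\mathbb{I}_{\HI}$. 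Reading the quaternion $\mx$ in the statement as the operator $\mx\cdot\mathbb{I}_{\HI}$, this is (i); the identical computation with $\asd$ in place of $\as$ gives $\D(\mx)^{\dagger}\asd\D(\mx)=\asd+\oqu\cdot\mathbb{I}_{\HI}$, which is (ii). Alternatively one could obtain both equalities at once by conjugating separately the two elementary exponentials in the normal-ordered form $\D(\mx)=e^{-|\mx|^{2}/2}e^{\mx\cdot\asd}e^{-\oqu\cdot\as}$ established in the previous proposition.

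I expect the only genuine obstacle to be the bookkeeping of the left multiplication: one must be sure that $\as$ and $\asd$ really commute past $\mx\cdot(\,\cdot\,)$ — this is precisely Proposition \ref{xAq} — and that the ``constant'' $\mx\cdot\mathbb{I}_{\HI}$ (resp.\ $\oqu\cdot\mathbb{I}_{\HI}$) commutes with $\mathcal{A}$, so that the exponential series collapses to a single term. This last fact rests on the quaternionic identity $\mx\,\oqu=\oqu\,\mx$; without it extra, non-scalar terms would survive and the clean analogue of the classical shift relation would fail.
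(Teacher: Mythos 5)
Your argument is correct, and it reaches the conclusion by a route that differs from the paper's in its key step. You conjugate $\as$ (and $\asd$) by the single exponential $e^{\mathcal{A}}$ of the full generator $\mathcal{A}=\mx\cdot\asd-\oqu\cdot\as$, observing that $[\as,\mathcal{A}]=\mx\cdot\mathbb{I}_{\mathfrak{H}}$ and $[\asd,\mathcal{A}]=\oqu\cdot\mathbb{I}_{\mathfrak{H}}$ commute with $\mathcal{A}$ (here the quaternionic identity $\mx\,\oqu=\oqu\,\mx$ is exactly the right thing to isolate), so that the Hadamard expansion truncates; the identity $\as\,\mathcal{A}^{n}=\mathcal{A}^{n}\as+n(\mx\cdot\mathbb{I}_{\mathfrak{H}})\mathcal{A}^{n-1}$ then gives $\as e^{\mathcal{A}}=e^{\mathcal{A}}(\as+\mx\cdot\mathbb{I}_{\mathfrak{H}})$ and both (i) and (ii) follow symmetrically. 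The paper instead works with the ordered factorizations $\D(\mx)=e^{-|\mx|^{2}/2}e^{\mx\cdot\asd}e^{-\oqu\cdot\as}=e^{|\mx|^{2}/2}e^{-\oqu\cdot\as}e^{\mx\cdot\asd}$ established earlier: it proves $[\as,(\asd)^{n}]=n(\asd)^{n-1}$ by induction, deduces $[\as,e^{\mx\cdot\asd}]=\mx\, e^{\mx\cdot\asd}$, hence $e^{-\mx\cdot\asd}\as e^{\mx\cdot\asd}=\as+\mx$, then sandwiches with $e^{\oqu\cdot\as}$ and recognizes the two ordered forms of $\D(\mx)$ and $\D(\mx)^{\dagger}$; property (ii) is obtained simply by taking adjoints of (i). What your version buys is independence from the normal/anti-normal ordering proposition: it needs only Proposition \ref{xAq} plus the centrality observation, and it makes explicit the commutation $\mx\,\oqu=\oqu\,\mx$ which the paper also uses but only tacitly, in the step where the conjugation by $e^{\oqu\cdot\as}$ leaves $\as+\mx$ unchanged. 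What the paper's version buys is economy: it reuses the ordering result it has just proved and gets (ii) for free as an adjoint. Both proofs manipulate the exponential series formally, which is legitimate here at the level of rigor the paper itself adopts (the stated absence of domain problems for $\as$, $\asd$ on $\HI$); your closing alternative, conjugating the two elementary exponentials separately, is essentially the paper's proof.
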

\begin{proof}
	$(i)$~Since $$[\as, (\asd)^2] |e_n\rangle=2\sqrt{n+1} |e_{n+1}\rangle=2\asd |e_n\rangle\quad\text{and}\quad
	[\as^2, \asd] |e_n\rangle=2\sqrt{n} |e_{n-1}\rangle=2\as |e_n\rangle,$$ we have
	$$[\as, (\asd)^2]=2\asd\quad\text{and}\quad [\as^2, \asd]=2\as.$$
	Similarly, when $n$ is a positive integer, by induction on $n$ we get
	$$[\as, (\asd)^n]=n (\asd)^{n-1} \quad\text{and}\quad [\as^n, \asd]=n \as^{n-1}.$$
	Consider
	\begin{eqnarray*}
		[\as, e^{\mx\cdot\asd}]&=&\left[\as, \sum_{n=0}^{\infty}\frac{(\mx\cdot\asd)^n}{n!}\right]\\
		&=&\sum_{n=0}^{\infty}\frac{\mx^n}{n!}\cdot [\as, (\asd)^n]\\
		&=&\sum_{n=1}^{\infty}\frac{\mx^n}{n!}\cdot n (\asd)^{n-1}\quad\text{as}\quad [\as, \mathbb{I}_\HI]=0\\
		&=&\mx\sum_{n=1}^{\infty}\frac{\mx^{n-1}}{(n-1)!}\cdot (\asd)^{n-1}\\
		&=&\mx\sum_{n=0}^{\infty}\frac{\mx^n}{n!}\cdot(\asd)^n=\mx e^{\mx\cdot\asd}.	
	\end{eqnarray*}
	That is,
	$$\as e^{\mx\cdot\asd}-e^{\mx\cdot\asd}\as=\mx e^{\mx\cdot\asd}.$$
	Hence multiplying by $e^{-\mx\cdot\asd}$ we get
	$$e^{-\mx\cdot\asd}\as e^{\mx\cdot\asd}-\as=\mx$$
	or $e^{-\mx\cdot\asd}\as e^{\mx\cdot\asd}=\as+\mx$ which implies
	$$e^{\overline{\mx}\cdot\as}e^{-\mx\cdot\asd}\as  e^{\mx\cdot\asd}e^{-{\overline{\mx}}\cdot\as}=\as+\mx.$$
	Since $e^{-|\mx|^2/2}$ is a real number the last expression implies
	$$e^{|\mx|^2/2}e^{\overline{\mx}\cdot\as}e^{-\mx\cdot\asd}\as e^{-|\mx|^2/2} e^{\mx\cdot\asd}e^{-{\overline{\mx}}\cdot\as}=\as+\mx$$
	or equivalently
	$$\D(\mx)^{\dagger}\as \D(\mx)=\as+\mx,$$
	where we have used the normal and anti-normal orderings of $\D(\mx)$, that is, we have used
	$$\D(\mx)=e^{\mx\cdot\asd-\overline{\mx}\cdot\as}=e^{-|\mx|^2/2}e^{\mx\cdot\asd}e^{-\overline{\mx}\cdot\as}=e^{-\overline{\mx}\cdot\as}e^{\mx\cdot\asd}e^{|\mx|^2/2}.$$
	The identity $(ii)$ is simply the adjoint of $(i)$.
\end{proof}
By Proposition \ref{Pr1}, any non-real quaternion $\mx$ can be written in the form $\mx=x+\tau y$ where $x=q_0$, $y=|q_1i+q_2j+q_3k|$ and $\tau=q_1i+q_2j+q_3k/|q_1i+q_2j+q_3k|\in\mathbb S$. By $\D_{\tau}(\qu)$ we denote the displacement operator restricted to the quaternion slice $\mathbb C_{\tau}$. We have the following proposition.
\begin{proposition}\label{Pro:4.16} The displacement operator satisfies the following properties.
	$$(i)~~\frac 12\left(\frac{\partial}{\partial x}-\tau \frac{\partial}{\partial y}\right)\D_{\tau}(\qu)=(\as^{\dagger}-\frac 12 \bar{\qu} ) \D_{\tau}(\qu).\quad\quad
	(ii)~~\frac 12\left(\frac{\partial}{\partial x}+\tau \frac{\partial}{\partial y}\right)\D_{\tau}(\qu)=-(\as-\frac 12 \qu) \D_{\tau}(\qu)$$
\end{proposition}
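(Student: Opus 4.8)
The plan is to reduce both identities to the familiar complex computation by restricting to the slice $\C_\tau$, on which the quaternions $\qu$, $\oqu$ and $\tau$ mutually commute and, by Proposition \ref{xAq}, left multiplication by any quaternion commutes with $\as$ and $\asd$. I would use the normal and anti-normal ordered forms of the displacement operator established earlier, namely
$$\D_\tau(\qu)=e^{-|\qu|^2/2}e^{\qu\cdot\asd}e^{-\oqu\cdot\as}=e^{|\qu|^2/2}e^{-\oqu\cdot\as}e^{\qu\cdot\asd},$$
together with the operator identities $(\qu\cdot\asd)^n=\qu^n\cdot(\asd)^n$ and $(\oqu\cdot\as)^n=\oqu^n\cdot(\as)^n$, which follow from Propositions \ref{lft_mul} and \ref{xAq}.

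First I would record the Cauchy--Riemann--type facts on $\C_\tau$: writing $\partial_\qu=\tfrac12\big(\tfrac{\partial}{\partial x}-\tau\cdot\tfrac{\partial}{\partial y}\big)$ and $\partial_{\oqu}=\tfrac12\big(\tfrac{\partial}{\partial x}+\tau\cdot\tfrac{\partial}{\partial y}\big)$ for $\qu=x+\tau y$, the relations $\tau^2=-1$ and $\tau\qu=\qu\tau$ give $\partial_\qu\qu^n=n\qu^{n-1}$, $\partial_\qu\oqu^n=0$, $\partial_{\oqu}\oqu^n=n\oqu^{n-1}$, $\partial_{\oqu}\qu^n=0$, and $\partial_\qu\big(|\qu|^2/2\big)=\tfrac12\oqu$, $\partial_{\oqu}\big(|\qu|^2/2\big)=\tfrac12\qu$. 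Differentiating the exponential series term by term on the dense domain of finite combinations of the $|e_n\rangle$ (legitimate just as in the complex case, using the norm bounds already invoked for these series), and pulling one factor of $\asd$, respectively $\as$, to the left via $\asd\cdot\qu=\qu\cdot\asd$, respectively $\as\cdot\oqu=\oqu\cdot\as$, I would obtain
$$\partial_\qu e^{\qu\cdot\asd}=\asd\,e^{\qu\cdot\asd},\quad \partial_\qu e^{-\oqu\cdot\as}=0,\quad \partial_{\oqu}e^{-\oqu\cdot\as}=-\as\,e^{-\oqu\cdot\as},\quad \partial_{\oqu}e^{\qu\cdot\asd}=0,$$
while the scalar factors give $\partial_\qu e^{-|\qu|^2/2}=-\tfrac12\oqu\cdot e^{-|\qu|^2/2}$ and $\partial_{\oqu}e^{|\qu|^2/2}=\tfrac12\qu\cdot e^{|\qu|^2/2}$.

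Next I would check that $\partial_\qu$ and $\partial_{\oqu}$ act as derivations on products of the operator-valued factors above; the only nontrivial point is that left multiplication by $\tau$ passes through $e^{\pm\qu\cdot\asd}$ and $e^{\pm\oqu\cdot\as}$, i.e. that $\tau\cdot A=A\cdot\tau$ for these operators. For $A=e^{\qu\cdot\asd}$ this follows from $\tau\cdot\big(\qu^n\cdot(\asd)^n\big)=(\tau\qu^n)\cdot(\asd)^n=(\qu^n\tau)\cdot(\asd)^n=\big(\qu^n\cdot(\asd)^n\big)\cdot\tau$, using $\tau\qu^n=\qu^n\tau$ on $\C_\tau$ and Proposition \ref{xAq}, and similarly for the others. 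With the Leibniz rule in hand, $(i)$ follows by applying $\partial_\qu$ to the normal ordered form: the term with $\partial_\qu e^{-\oqu\cdot\as}$ vanishes, the real factor $e^{-|\qu|^2/2}$ commutes with $\asd$, and hence
$$\partial_\qu\D_\tau(\qu)=-\tfrac12\oqu\cdot\D_\tau(\qu)+\asd\,\D_\tau(\qu)=\Big(\asd-\tfrac12\oqu\Big)\D_\tau(\qu).$$
Symmetrically, applying $\partial_{\oqu}$ to the anti-normal ordered form (where now $\partial_{\oqu}e^{\qu\cdot\asd}=0$) yields
$$\partial_{\oqu}\D_\tau(\qu)=\tfrac12\qu\cdot\D_\tau(\qu)-\as\,\D_\tau(\qu)=-\Big(\as-\tfrac12\qu\Big)\D_\tau(\qu),$$
which is $(ii)$.

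The hardest part will be precisely this bookkeeping around the left multiplication: one must track that each quaternionic coefficient produced by differentiation ($\oqu$ in $(i)$, $\qu$ in $(ii)$) acts by left multiplication and in the correct order, and that $\partial_\qu,\partial_{\oqu}$ --- which themselves involve left multiplication by $\tau$ --- are genuine derivations on the operators in play. Both facts ultimately rest on restricting to the slice $\C_\tau$, so that $\qu,\oqu,\tau$ commute, and on Proposition \ref{xAq}, so that $\tau$ commutes with $\as$ and $\asd$; once these are secured, the computation is formally identical to the classical one for the complex displacement operator.
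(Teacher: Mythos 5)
Your proposal is correct and follows essentially the same route as the paper: differentiate the ordered exponential factorization of $\D_\tau(\qu)$ term by term, using that $\tau$ commutes with $\qu\in\mathbb{C}_\tau$ and that left quaternionic scalars commute with $\as$, $\asd$ by Proposition \ref{xAq}. The only minor variation is that you use the anti-normal ordered form for part $(ii)$ (which neatly avoids commuting $\as$ past $e^{\qu\cdot\asd}$), whereas the paper proves $(i)$ from the normal ordered form and notes that $(ii)$ is analogous.
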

\begin{proof}
To show that $(i)$ holds we use Proposition \ref{xAq}, namely the fact that for any quaternion $\qu$ we have $\qu\cdot \as=\as \cdot \qu$ and $\qu\cdot \as^\dagger=\as^\dagger \cdot \qu$.
 \[
 \begin{split}
 &\frac 12\left(\frac{\partial}{\partial x}-\tau \frac{\partial}{\partial y}\right)(e^{-\frac{|\qu|^2}{2}}e^{\qu\cdot\mathsf{a}^\dagger}e^{-\bar \qu\cdot\mathsf{a}})\\
 &=\frac 12 \left(e^{-\frac{|\qu|^2}{2}}(-x)e^{\qu\cdot\mathsf{a}^\dagger}e^{-\qu\cdot\mathsf{a}}+e^{-\frac{|\qu|^2}{2}}
 e^{\qu\cdot\mathsf{a}^\dagger}\as^\dagger
 e^{-\qu\cdot\mathsf{a}}-e^{-\frac{|\qu|^2}{2}}e^{\qu\cdot\mathsf{a}^\dagger}e^{-\qu\cdot\mathsf{a}}\as\right.\\
 &\left. -\tau e^{-\frac{|\qu|^2}{2}}(-y)e^{\qu\cdot\mathsf{a}^\dagger}e^{-\qu\cdot\mathsf{a}}-\tau e^{-\frac{|z|^2}{2}}
 e^{\qu\cdot\mathsf{a}^\dagger}(\tau \as^\dagger)
 e^{-\qu\cdot\mathsf{a}}-\tau e^{-\frac{|\qu|^2}{2}}e^{\qu\cdot\mathsf{a}^\dagger}e^{-\qu\cdot\mathsf{a}}(\tau \as)\right)\\
 &= \D_{\tau}(\qu) (\as^\dagger -\frac 12 \bar \qu),
 \end{split}
 \]
 where, to show the equality, we used the obvious fact that $\tau$ commutes with $\qu\in\mathbb C_{\tau}$ and thus it can be moved to the left side. The second relation can be proved in a similar way.
\end{proof}
\begin{remark} We point out that in Proposition \ref{Pro:4.16} the crucial facts which make the proof work are the validity of Proposition \ref{xAq} and the fact that the imaginary unit $\tau$ commutes with the variable $\qu\in\mathbb C_{\tau}$. One may ask if the properties in Proposition \ref{Pro:4.16} hold by considering the operators
$\frac{\partial}{\partial \qu}$ and $\frac{\partial}{\partial \oqu}$ and the displacement operator $\D(\qu)$. To show that the answer is negative, let us consider
$$
\frac{\partial}{\partial \bar{\qu}}=\frac{\partial}{\partial x_0} +i\frac{\partial}{\partial x_1}+j\frac{\partial}{\partial x_2}+k\frac{\partial}{\partial x_3}
$$
which is the so-called Cauchy-Fueter operator. We observe that
$$
\frac{\partial}{\partial x_1}\qu^n=i\qu^{n-1}+\qu i\qu^{n-2}+\qu^2i\qu^{n-3}+\cdots+\qu^{n-1}i
$$
and similarly for the other partial derivatives. It is obvious that the three imaginary units do not commute with the variable $\qu$ (unlike what happens above for the imaginary unit $\tau$ and the variable on the complex plane $\mathbb C_{\tau}$). The same problem arises when applying the partial derivatives to $(\qu\cdot \as^{\dagger})^n$ and, more in
general, to $\sum_{n=0}^\infty \frac{1}{n!}(\qu\cdot \as^\dagger)^n$. Thus one cannot obtain an expression for $\frac{\partial}{\partial \bar{\qu}} \D(\qu)$. This fact shows that in the slice-wise approach we can prove the validity of additional properties of the displacement operator while, in the global approach, the issues related to the non-commutative setting still appear.
\end{remark}
It is worthwhile to add a comment on the notion on analyticity underlying this new approach to quaternionic quantum mechanics, even though we do not explicitly make use of this notion. However, the issue of what is the appropriate notion for analyticity has been pointed out at the end of the introduction of Adler's book \cite{Ad}. The theory of functions in the kernel of the Cauchy-Fueter operator $\frac{\partial}{\partial \oqu}$, though very important and very well developed, is not appropriate for several reasons. For example, the identity functions, functions of the form $f(\qu)=\qu^n$, the exponential $e^{\qu}$ are not in the kernel of the Cauchy-Fueter operator. Thus, in the corresponding functional calculus, we cannot define $e^T$ where $T$ is a  linear operator. Moreover, the functional calculus does not allow a suitable definition of spectrum and, consequently, a spectral theorem. We refer the interested reader to \cite{NFC} for more information. For these reasons, it seems better to use a different function theory and differential operators different from  $\frac{\partial}{\partial \oqu}$ or $\frac{\partial}{\partial {\qu}}$. The notion of analyticity used in \cite{NFC} and several other subsequent works was the notion of slice hyperholomorphy, see \cite{Albook, NFC, Gen1}. As we already explained, any non-real quaternion $\qu$ can be written uniquely in the form $\qu=x+\tau y$; if $\qu$ is real then it is trivial that $\qu=x+\tau 0$ for any $\tau$. A real differentiable function $f(\qu)$ is slice hyperholomorphic if $f(\qu)=\alpha(x,y)+\tau \beta(x,y)$ where $\alpha$ and $\beta$ satisfy the Cauchy-Riemann system $\partial_x\alpha -\partial_y\beta=0$, $\partial_y\alpha +\partial_x\beta=0$ and, in order to have a function well-posed, $\alpha(x,-y)=\alpha(x,y)$, $\beta(x,-y)=-\beta(x,y)$. This class of functions includes the powers $f(\qu)=\qu^n$ and the exponential $e^{\qu}$. The functional calculus associated with it is based on the $S$-spectrum. One should also note that any normal operator $T$ can be written in the form $T=A+JB$ where $A$ is self-adjoint, $B$ is positive and $J$ is anti self-adjoint and unitary. Thus the class of functions allows to define $f(T)=\alpha(A,B)+J \beta(A,B)$.
Via the $S$-spectrum, which naturally arises from the construction of the functional calculus, one also can prove the spectral theorem for normal operators. Thus, also from the point of view of the underlying function theory, we believe that quaternionic quantum mechanics has now solid mathematical grounds.
\section{Conclusion}
 In the quaternionic quantum mechanics literature there were several attempts to define a universal linear and self-adjoint momentum operator in quaternionic Hilbert spaces resembling the complex momentum operator. It became obvious that, due to the non-commutativity of quaternions, such an attempt cannot flourish in a left quaternionic Hilbert space with a left multiplication only or in a right quaternionic Hilbert space with a right multiplication only (see \cite{Ad} for a complete review on this point). However, in this manuscript, we have demonstrated that if we consider a right quaternionic Hilbert space with a left multiplication on it, such an operator can be defined. Using the so defined operators we have studied the Heisenberg uncertainty principle. A parallel picture of the complex harmonic oscillator was not achieved on the whole space of quaternions. This difficulty is not due to the defined operators but to the fact that it was not possible to compute the sum of a quaternionic series in a closed form. However, a satisfactory description has been obtained on a quaternionic slice. \\

A quaternionic displacement operator comparable to that one of the complex harmonic oscillator has been attempted a few times. While studying Perelomov type coherent states in the quaternionic setting, in \cite{Ad2} the authors proved that a unitary and irreducible displacement operator cannot be obtained. In \cite{Thi2}, while studying quaternionic canonical coherent states, it is shown that a displacement operator like the complex harmonic oscillator can generate coherent states from the ground state, however, it cannot be associated to any algebra as a representation. In fact, it was proved that there is no such an algebra which is closed (over the field in which it is spanned). However, in this manuscript, we have proved that, a displacement operator resembling the complex harmonic oscillator can be obtained as a representation on a Weyl-Heisenberg type algebra with all the desired properties.\\

Thus, in our opinion, 80 years after the paper of Birkhoff and von Neumann, quaternionic quantum mechanics is now equipped with almost all the necessary tools to develop it.

%
\end{document}